\pdfoutput=1
\documentclass[11pt]{article}
\PassOptionsToPackage{hidelinks}{hyperref}
\usepackage{amsmath,amssymb,amsthm,algpseudocode,algorithm,float,tkz-berge,caption,mathtools}
  
\usepackage[margin=1in]{geometry}
\usepackage{multirow}
\usepackage{subcaption}
\usepackage{afterpage}
\usepackage{tikz}
\usetikzlibrary{calc}
\usetikzlibrary{graphs,graphs.standard}

\usepackage{hyperref}

\DeclareMathOperator*{\var}{\textrm{Var}}

\DeclareMathOperator*{\lo}{\textproc{Light}}

\DeclareMathOperator*{\instopt}{\textproc{InstOpt}}

%
%
%
%
%
%
%
%
\usepackage{etoolbox}

\makeatletter

\newcommand{\define}[4][ignore]{%
  \ifstrequal{#1}{ignore}{}{
  \@namedef{thmtitle@#2}{#1}}%
  \@namedef{thm@#2}{#4}%
  \@namedef{thmtypen@#2}{lemma}%
  \newtheorem{thmtype@#2}[theorem]{#3}%
  \newtheorem*{thmtypealt@#2}{#3~\ref{#2}}%
}

\newcommand{\state}[1]{%
  \@namedef{curthm}{#1}
  \@ifundefined{thmtitle@#1}{
  \begin{thmtype@#1}
    }{
  \begin{thmtype@#1}[\@nameuse{thmtitle@#1}]
  }
    \label{#1}
    \@nameuse{thm@#1}
  \end{thmtype@#1}
  \@ifundefined{thmdone@#1}{
  \@namedef{thmdone@#1}{stated}%
  }{}
}

\newcommand{\restate}[1]{%
  \@namedef{curthm}{#1}
  \@ifundefined{thmtitle@#1}{
    \begin{thmtypealt@#1}
    }{
  \begin{thmtypealt@#1}[\@nameuse{thmtitle@#1}]
  }
    \@nameuse{thm@#1}
  \end{thmtypealt@#1}
  \@ifundefined{thmdone@#1}{
  \@namedef{thmdone@#1}{stated}%
  }{}
}

\newcommand{\thmlabel}[1]{
  \@ifundefined{thmdone@\@nameuse{curthm}}{\label{#1}
    }{\tag*{\eqref{#1}}}
}
\makeatother

\newcommand{\Os}{O^*}
\newcommand{\Thetat}{\widetilde{\Theta}}
\newcommand{\wt}[1]{\widetilde{#1}}
\newcommand{\abs}[1]{|#1|}
\newcommand{\eps}{\epsilon}
\newcommand{\E}{\mathbb{E}}

\newtheorem{theorem}{Theorem}
\newtheorem*{theorem*}{Theorem}
\newtheorem{lemma}[theorem]{Lemma}

\newtheorem*{definition*}{Definition}
\newtheorem*{lemma*}{Lemma}
\newtheorem{corollary}[theorem]{Corollary}
\newtheorem*{corollary*}{Corollary}

\newtheorem*{claim*}{Claim}

\title{A Hybrid Sampling Scheme for Triangle Counting}
\date{}
\author{%
  John Kallaugher\\
  \texttt{jmgk@cs.utexas.edu}\\
  The University of Texas at Austin
  \and
  Eric Price \\
  \texttt{ecprice@cs.utexas.edu}\\
  The University of Texas at Austin
}

\begin{document}

\begin{titlepage}
  \maketitle
  \thispagestyle{empty}

  \begin{abstract}
    We study the problem of estimating the number of triangles in a
    graph stream.  No streaming algorithm can get sublinear space on
    all graphs, so methods in this area bound the space in terms of
    parameters of the input graph such as the maximum number of
    triangles sharing a single edge.  We give a sampling algorithm
    that is additionally parameterized by the maximum number of
    triangles sharing a single vertex.  Our bound matches the best
    known turnstile results in all graphs, and gets better performance
    on simple graphs like $G(n, p)$ or a set of independent triangles.

    We complement the upper bound with a lower bound showing that no
    sampling algorithm can do better on those graphs by more than a
    log factor.  In particular, any insertion stream algorithm must
    use $\sqrt{T}$ space when all the triangles share a common vertex,
    and any sampling algorithm must take $T^{1/3}$ samples when all
    the triangles are independent.  We add another lower bound, also
    matching our algorithm's performance, which applies to \emph{all}
    graph classes.  This lower bound covers ``triangle-dependent''
    sampling algorithms, a subclass that includes our algorithm and
    all previous sampling algorithms for the problem.

    Finally, we show how to generalize our algorithm to count arbitrary
    subgraphs of constant size.
  \end{abstract}
\end{titlepage}

\section{Introduction}

Estimating the number of triangles in a graph is a fundamental graph
problem.  It is the smallest graph structure that cannot be counted
directly from local information at each node, so it is a useful
benchmark for understanding the power of various models of graph
computation.  At the same time, the number of triangles is itself a
useful piece of information, and one of the standard features for
understanding networks~\cite{ML12}.

We consider the problem of triangle counting on graphs, under three
different models of streaming access to the graph.  In increasing
order of restrictiveness, these are: \emph{insertion-only} streams,
where the edges $E$ of $G$ are added in adversarial order but are not
deleted; \emph{turnstile} streams, where edges may be inserted and
deleted, and is equivalent (with some restrictions) to linear
sketches~\cite{LNW14}; and nonadaptive \emph{sampling}, where the
algorithm specifies a distribution over sets $S$ and receives $E \cap
S$, for which it pays a ``sample complexity'' of $\E \abs{E \cap S}$.
Sampling is a more restrictive model than turnstile, because it can be
implemented using linear sketches by composing an $O(\abs{E \cap S})
\times E$ noiseless sparse recovery sketch with a diagonal $E \times
E$ subsampling matrix.  At the same time, many turnstile algorithms
for graphs can be cast as sampling algorithms, and sampling algorithms
have some advantages over general linear sketches, such as the ability
to filter the graph after receiving the sketch, so we find it useful
to distinguish the two models.

In the triangle estimation problem, one observes a graph $G$ with $T$
triangles and would like to output $\overline{T}$ such that $\abs{T -
  \overline{T}} \leq \eps T$ with probability $1-\delta$, using as
little space/sample complexity as possible.

In the following discussion, we suppose that algorithms using a
parameter know the optimal value of that parameter up to constant
factors.  We also use $\Os(f)$ to hide factors of $\eps$, $\log
\frac{1}{\delta}$, and $\log n$.

\paragraph{Streaming algorithms for triangle counting.} The problem of
estimating triangles from a graph stream was introduced
in~\cite{BKS02}, which gave an $\Os(\left(\frac{mn}{T}\right)^3)$
space algorithm based on estimating frequency moments in the
insertion-only model.  This was improved in~\cite{BFLMS06} to
$\Os(\frac{mn}{T})$ in the same model. An incomparable algorithm was
given in~\cite{JG05} that uses $\Os(\frac{md^2}{T})$ space in the
insertion-only model, where $d$ is the maximum degree, based on
subsampling the edges at rate $d/T$ and storing all subsequent edges
that touch the sampled edges.

None of the above algorithms are sublinear in $m$ for all graphs.
Unfortunately, as shown in~\cite{BOV13}, this is inherent: even in the
insertion model, no streaming algorithm can distinguish $0$ triangles
from $T$ triangles in $o(m)$ space for all graphs.  The hard case
involves a graph where all the triangles use a single edge.  Since
graphs of interest typically don't involve that structure, a natural
question arose of finding sublinear algorithms when $\Delta_E$, the
maximum number of triangles sharing a common edge, is $o(T)$.

The first algorithms to achieve this,~\cite{TKMF09,TKM11}, work by
subsampling the edges with uniform probability $p$.  The expected
number of triangles is $p^3T$, so one needs $p > \frac{\Delta}{T} +
\frac{1}{T^{1/3}}$ in order to (a) sample heavy edges with good
probability and (b) sample at least one triangle in expectation.  In
fact this is sufficient, giving an algorithm with space
$\Os(m\left(\frac{\Delta_E}{T} + \frac{1}{T^{1/3}}\right))$.

In order to improve this, one would like a different sampling scheme
that increases the chance of sampling edges in a triangle at the same
time.  In~\cite{PT12}, this is done by randomly coloring the vertices
of the graph with $1/p$ colors, and keeping the monochromatic edges.
This now samples $p^2T$ triangles in expectation, and indeed they show
that $\Os(m\left(\frac{\Delta_E}{T} + \frac{1}{T^{1/2}}\right))$ samples
suffice.

\paragraph{Our results.}
First, we show a distribution on graphs with $\Delta_E = 1$ for which
$\Omega(\frac{m}{\sqrt{T}})$ space is required to estimate the number
of triangles to within $25\%$, even in the insertion-only model.  This
shows that the~\cite{PT12} bound is tight in general.  However, just
as the $\Omega(m)$ lower bound in~\cite{BOV13} involved unrealistic
graphs where all triangles shared a single edge, our
$\Omega(m/\sqrt{T})$ lower bound involves unrealistic graphs where all
triangles share a single vertex.  So we consider algorithms
additionally parameterized by $\Delta_V$, the maximum number of
triangles sharing a common vertex. 

We give a sampling algorithm using expected
$\Os(m\left(\frac{\Delta_E}{T} + \frac{\sqrt{\Delta_V}}{T} +
  \frac{1}{T^{2/3}}\right))$ samples.

\define[Triangle estimation]{thm:main}{Theorem}{%
  If $\wt{T} \leq T$, our algorithm obtains an
  $(\epsilon,\delta)$ approximation to $T$ while keeping $O\left(
    \frac{m \log
      \frac{1}{\delta}}{\epsilon^2}\left(\frac{1}{\wt{T}^{2/3}}
      + \frac{\sqrt{\Delta_V}\log \wt{T}}{\wt{T}}
      +\frac{\Delta_E\log \wt{T}}{\wt{T}}\right)\right)$
  edges. If $\wt{T} > T$, the algorithm either determines that 
  $\wt{T} > T$ or obtains a $(1 \pm \epsilon)$ approximation
  to $T$ with probability $1 - \delta$.
    }
    \state{thm:main}

    \begin{figure}
      \centering
      \renewcommand{\arraystretch}{1.5}
      \begin{tabular}{|l|l|l|}
        \hline
        Paper & Space used & Model\\
        \hline\hline
        \cite{BKS02} & $(\frac{mn}{T})^3$ & Insertion\\
        \cite{BFLMS06} & $\frac{mn}{T}$ & Insertion\\
        \cite{JG05} & $\frac{md^2}{T}$ & Insertion\\
        \hline
        \cite{KP14,BFKP14} & $\frac{\sqrt{m}}{\alpha}$ & Turnstile\\
        \cite{TKMF09,TKM11} & $m\left(\frac{\Delta_E}{T} + \frac{1}{T^{1/3}}\right)$ & Sampling\\
        \cite{PT12} & $m\left(\frac{\Delta_E}{T} + \frac{1}{\sqrt{T}}\right)$ & Sampling\\
        \hline
        This work & $m\left(\frac{\Delta_E}{T} + \frac{\sqrt{\Delta_V}}{T} + \frac{1}{T^{2/3}}\right)$ & Sampling\\
        \hline
      \end{tabular}
      \caption{Algorithms for triangle estimation.  For simplicity, we
        drop dependencies on $\eps$, $\delta$, and logarithmic
        factors.  Here, $\alpha$ denotes the transitivity coefficient,
        $d$ denotes the maximum degree of any vertex, and
        $\Delta_E/\Delta_V$ denote the maximum number of triangles
        sharing a common edge/vertex.}
      \label{fig:existing}
    \end{figure}

        \pdfoutput=1
    \newcommand{\drawclique}{
      \begin{tikzpicture}[every node/.style={scale=0.3}]
        \graph [nodes={circle, fill=darkgray}, clockwise, empty nodes,simple] {subgraph K_n[n=6]
          ; 1 -!- 4; 2 -!- 6; 3 -!- 2; 5 -!- 1; 5-!-6;
        };
      \end{tikzpicture}
    }
    \newcommand{\drawheavy}{
      \begin{tikzpicture}[every node/.style={scale=0.3}]
        \foreach \i in {-5,-10,...,-15}{
          \node[fill=darkgray, circle, radius=0.04cm] (x\i) at (\i pt, 7pt) {};
          \node[fill=darkgray, circle, radius=0.04cm] (y\i) at (\i pt, -7pt) {};
          \draw (x\i) -- (y\i);
        }

        \node[fill=darkgray, circle, radius=0.04cm] (x) at (0, 7pt) {};
        \node[fill=darkgray, circle, radius=0.04cm] (y) at (0, -7pt) {};
        \draw (x) -- (y);
        \foreach \i in {3,8,14,22}{
          \node[fill=darkgray, circle, radius=0.04cm] (z\i) at (\i pt, 0) {};
          \draw (x) -- (z\i) -- (y);
        };
      \end{tikzpicture}
    }
    \newcommand{\drawhub}{
      \begin{tikzpicture}[every node/.style={scale=0.3}]
        \graph [nodes={circle, fill=darkgray}, empty nodes, n=8] { 
          x[xshift=1cm] -- 
          {[clockwise,simple] subgraph C_n; 1 -!- 2; 3 -!- 4; 5 -!- 6; 7 -!- 8;};
          x -- 2;x--4;x--6;x--8;
        };
      \end{tikzpicture}
    }
    \newcommand{\drawindep}{
      \begin{tikzpicture}[every node/.style={scale=0.3}]
        \foreach \i in {0,2.5,...,6}{
          \graph [nodes={circle, fill=darkgray,xshift=\i cm}, empty nodes, n=3, clockwise] { subgraph C_n};
        }
      \end{tikzpicture}
    }

    \begin{figure}
      \centering
      \renewcommand{\arraystretch}{1.5}
      \begin{tabular}{|l|l|c|c|c|c|}
        \hline
        & & Heavy edge & Hub & $G(n, p)$ & Independent\\
        Method&Model&\drawheavy&\drawhub&\drawclique&\drawindep\\
        \hline\hline
        \cite{BKS02} & Insertion & $n^3$ & $n^3$ & $1/p^6$ & $n^3$\\
        \cite{BFLMS06} & Insertion & $\mathbf{n}$ & $n$ & $1/p^2$ & $n$\\
        \cite{JG05} & Insertion & $n^2$ & $n^2$ & $n$ & $\mathbf{1}$\\
        \hline\hline
        \cite{KP14,BFKP14} & Turnstile & $n^{3/2}$ & $\mathbf{\sqrt{n}}$ & $n/\sqrt{p}$ & $\sqrt{n}$\\
        \cite{TKMF09,TKM11} & Sampling & $\mathbf{n}$ & $n^{2/3}$ & $n$ & $n^{2/3}$\\
        \cite{PT12} & Sampling & $\mathbf{n}$ & $\mathbf{\sqrt{n}}$ & $\sqrt{n/p}$ & $\sqrt{n}$\\
        \hline
        This paper & Sampling & $\mathbf{n}$ & $\mathbf{\sqrt{n}}$ & $\mathbf{1/p}$ & $\mathbf{n^{1/3}}$\\
        \hline\hline
        \multirow{2}{*}{Lower bounds} & Insertion & $n$~\cite{BOV13,BFKP14} & $\mathbf{\sqrt{n}}$ & ? & 1 \\
        \cline{2-6}
        & Sampling & same & same & $\mathbf{1/p}$ & $\mathbf{n^{1/3}}$ \\
        \hline
      \end{tabular}
      \caption{Results on specific graphs, ignoring logarithmic
        factors. For upper bounds, entries in bold are optimal in the
        computational model of the row.  For lower bounds, entries in bold
        are new.  The lower bound instances are subsets of the illustrated
      graphs containing a constant fraction of the triangles.}
      \label{fig:existingspecific}
    \end{figure}

    \paragraph{Illustrative examples.} To compare our result with
    those in the literature, in Figure~\ref{fig:existingspecific} we
    specialize the bounds to some illustrative example graphs.  The
    \emph{heavy edge} example is the lower bound from~\cite{BOV13}
    (and similar to one in~\cite{BFKP14}), where all triangles use a
    single edge.  Any streaming algorithm requires $\Omega(m)$ space
    in examples like this one, demonstrating the need for further
    parameterization, and most streaming algorithms match the bound.
    The \emph{hub} example has a single vertex involved in $n$
    disjoint triangles.  Here, we demonstrate that the $\sqrt{n}$
    achieved by~\cite{PT12} was optimal even for insertion-only
    streams.

    We then consider graphs where most triangles do not overlap.  One
    natural example is the Erd\"os-R\'enyi random graph $G(n, p)$,
    with $1 \geq p \gg \frac{1}{n}$ so the number of triangles is well
    concentrated.  Here, we use $\Os(1/p)$ samples, and show that this
    is optimal for any sampling method.  Previous algorithms in the
    turnstile/sampling models took $\Omega(\sqrt{n})$ space for dense
    graphs, and even in the insertion-only model they took
    $\Omega(1/p^2)$ space; our bound implies $\Os(1/p)$.  The next
    example we consider is a collection of $n$ independent triangles.
    Here, our algorithm takes $\Os(n^{1/3})$ samples, which we show is
    optimal for any sampling algorithm, and improves upon the previous
    bound of $\sqrt{n}$.  It is an interesting question whether
    $n^{1/3}$ space is necessary in less restrictive streaming models.

    \paragraph{Instance lower bounds.}
    It takes some care to define lower bounds for instances in a
    meaningful way.  Let us consider trying to solve triangle counting
    for a specific class of graphs $G_X$, where $X$ is some set of
    parameters (such the number of triangles in the hub graph). We
    would like to avoid ``cheating'' algorithms; for instance, a hub
    graph with $m$ edges has $m/3$ triangles, so one could estimate
    the number of triangles in the class of hub graphs by simply
    counting the edges, which takes $1$ word in the turnstile model.

    One attempt to avoid this would be to show the difficulty of
    distinguishing each graph $G \in G_X$ from an alternative graph
    $G'$ with a significantly different number of triangles.  This is
    too weak: $G'$ can introduce a difficult subproblem that is
    unnatural for the class $G_X$, so the lower bound doesn't really
    represent the difficulty of $G_X$.  For example, when $G$ has $n$
    independent triangles, we can have $G'$ be $G$ plus a clique on
    $n^{1/3}$ vertices.  A sampling algorithm using less than
    $n^{1/3}$ edges would entirely miss the clique and be unable to
    distinguish the two, giving a lower bound by this definition, but
    not a satisfactory one: the alternative hard instance $G'$ doesn't
    have the independent-triangle structural property we expect.  To
    preserve the structure of the graph class, we would like to only
    consider graphs $G'$ that are subgraphs of $G$.

    The broader question here is, given that existing algorithms
    out-perform the $\Omega(m)$ lower bound by adding the extra parameter
    $\Delta_E$, and we in turn out-perform these by adding the extra
    parameter $\Delta_V$, where should we draw the line? What are the
    correct set of parameters?

    We suggest that an algorithm can correctly count triangles for a
    graph $G$ with $S$ samples and error $\epsilon T(G)$, it should be
    able to correctly count triangles for any subgraph $G'$ of $G$
    with error $\epsilon T(G)$. (so additive error should be preserved,
    but not necessarily multiplicative error) Alternatively stated,
    the only thing that makes counting triangles in a larger graph
    easier is the fact that the larger graph may have more triangles,
    and thus have more tolerance for (additive) error.

    We bolster this intution by observing that existing algorithms 
    all depend on two sets of parameters: $T$ itself, in which their 
    complexity is decreasing, and a set of monotonic\footnote{Meaning,
    in this context, that if $A$ is a subgraph of $B$, $f(A) \leq f(B)$}
    graph functions, in which their complexity is increasing
    (e.g. $\Delta_V$ or $\Delta_E$)\footnote{The only apparent
      exception is those algorithms which depend on the transitivity
      coefficient $\alpha$, but it holds if we replace $\alpha$ with
      $T/P_2$, $P_2$ being the number of wedges in the
      graph.}.

    We also assume that a triangle countring algorithm should be resilient
    to vertex labels being arbitrarily permuted, as it should
    not depend on knowing beforehand that certain vertices are
    ``special.'' This gives us our definition of ``solving'' an
    instance.

    \define{dfn:solving}{Definition}{%
        Let $G$ be a graph. We say an algorithm \emph{solves}
        $G$ with $S$ space/samples and $(\epsilon,\delta)$ error if, 
        for any $G'$ isomorphic to some subgraph of $G$, the 
        algorithm returns $T(G') \pm \epsilon T(G)$ with $1 - \delta$
        probability, using no more than $S$ space/samples.
    }
    \state{dfn:solving}

    We now define the \emph{instance-optimum} for the space/sample
    complexity of solving $G$.
    
    \define{dfn:ilb}{Definition}{%
       For any given streaming model, $\instopt(G,\epsilon,\delta)$ is
       the least amount of samples/space such that some algorithm
       solves $G$ with $\instopt(G,\epsilon,\delta)$ samples/space.
    }
    \state{dfn:ilb}


    An instance lower bound, then, is a lower bound on $\instopt(G_X, 
    \epsilon,\delta)$.  The lower bounds in 
    Figure~\ref{fig:existingspecific} use this definition.

    It is an interesting question whether $\Omega(n^{1/3})$ space is
    necessary for $n$ independent triangles under turnstile or
    insertion streams.  Our hard instance consists of randomly
    coloring the vertices with two colors, and either choosing the
    monochromatic or dichromatic edges.  We do not know how to solve
    this instance with a turnstile streaming algorithm, but reductions
    from communication complexity seem to want three-party
    communication lower bounds, which are difficult to show.  One can
    easily solve this particular instance in insertion streams, but
    similar instances of independent triangles with extra edges may be hard.

    \paragraph{Instance-optimal lower bound.}
    The results in Figure~\ref{fig:existingspecific} show that our
    algorithm performs well on several natural graphs, but what about
    other graphs?  Is there a more refined parameterization that would
    again yield large improvements on another class of graphs?

    With some caveats, we show that our algorithm is optimal for
    \emph{all} graphs.  We need to refine our parameterization
    slightly, because currently the bound for a graph containing
    $(1-\eps/2)T$ independent triangles and a heavy edge with $\eps
    T/2$ triangles would depend on the heavy edge, even though an
    upper bound could just skip it and remain within $1\pm \eps$
    accuracy.  We therefore define $\Delta_{E,\eps}$ to be like
    $\Delta_E$ but with the $\eps T$ triangles contributing the most
    to $\Delta_E$ removed, and $\Delta_{V,\eps}$ similarly (for a
    precise definition, see Definition~\ref{dfn:dgepsilon}).  This
    turns out to be a sufficient parameterization.

    We show that any \emph{triangle dependent} sampling
    algorithm---one that depends only on the set of triangles it
    samples---must use a set of samples with the expected dependence
    on $\Delta_{E,2\eps}$ and $\Delta_{V,2\eps}$.  All
    existing sampling algorithms for triangle counting are triangle
    dependent, but we cannot rule out better non-triangle-dependent
    algorithms.

    \define[Triangle-dependent sampling bound]{thm:tdbound}{Theorem}{%
      For any constant $\epsilon$ and for any graph $G$, 
      \[
      \instopt(G,\epsilon,1/10) = \Omega\left(m \left(\frac{1}{T^{2/3}} + \frac{
            \sqrt{\Delta_{V,2\epsilon}}}{T} + \frac{
            \Delta_{E,2\epsilon}}{T}\right)\right)
      \]
      in the setting of triangle-dependent sampling algorithms. 
    }%
    \state{thm:tdbound}

    We then show that our upper bound can depend on
    $\Delta_{E,\eps/24}$ and $\Delta_{V,\eps/24}$.  Therefore, for 
    any constant $\epsilon$, our algorithm calculates an 
    $(\epsilon,1/10)$ approximation to the triangle count with 
    $O^*\left(\instopt(G,\epsilon/48,1/10)\right)$
    samples.

    \paragraph{Beyond triangles.} We also give a generalization of our
    upper bound to counting arbitrary subgraphs of constant size.  We
    give an algorithm to estimate $M$, the number of instances of a
    fixed size-$s$ subgraph, from a sample of the edges.

    \define[Subgraph estimation]{thm:general}{Theorem}{%
      Let $f_\ell$ be the fraction of pairs of subgraphs that
      intersect at $\ell$ vertices.  We show how to find a
      $1+\epsilon$ factor approximation to $M$ with probability
      $1-\delta$, using order
      \[
        m\frac{\log(1/\delta)}{\eps^2}\log M \left(\sum_{\ell=2}^{s} f_\ell^{2/\ell} + f_\ell^{\frac{1}{\ell-1}}f_1^{1 - \frac{1}{\ell-1}}\right)
      \]
      samples in expectation.%
    }%
    \state{thm:general}%
    For comparison, simple vertex sampling would replace the sum with
    $\sum_{\ell=1}^{s} f_\ell^{2/\ell}$; our bound is always better.
    In the context of triangles, this difference is why we get
    $\frac{\sqrt{\Delta_V}}{T}$ rather than
    $\left(\frac{\Delta_V}{T}\right)^2$, which was important in the
    hub case.

    We get this bound using a similar scheme to our triangle
    estimation algorithm, removing a direct dependence on $f_1$ by
    treating ``heavy'' vertices specially.  In the case of $s = 3$,
    this is equivalent to our theorem up to a log factor.

    \paragraph{Other related work.}

    Another line of work parameterizes the space complexity using the
    \emph{transitivity coefficient} $\alpha$, defined as the fraction
    of wedges that are completed into triangles.
    In~\cite{KP14,BFKP14} it is shown how to get
    $\Os(\frac{\sqrt{m}}{\alpha})$ space in the insertion model, for
    graphs without isolated edges.  In~\cite{JSP13} it was shown that
    $O(\frac{m}{\eps^2\sqrt{T}})$ space suffices in insertion streams
    to learn $\alpha$ to $\pm \eps$.  In fact, as we note in
    Appendix~\ref{app:transitivity}, both bounds for triangle counting
    are directly implied by the $\Os(m\left(\frac{\Delta_E}{T} +
      \frac{1}{\sqrt{T}}\right))$ bound of~\cite{PT12}. Since our
    bound improves upon~\cite{PT12}, it also implies these bounds.

    \cite{CJ14} shows multipass algorithms take $\Thetat(m/\sqrt{T})$
    space for arbitrary graphs, giving an algorithm for two passes and a
    lower bound for a constant number of passes.  \cite{KMPT12} shows a
    three pass streaming algorithm using $O(\sqrt{m} + m^{3/2}/T)$ space.

    \cite{ELRS15} considered the problem of triangle counting with
    query access to a graph.  Similar to our algorithm, a simpler
    algorithm is modified to handle the impact of many vertices
    intersecting at a single triangle on the variance. The main
    difference is that, in \cite{ELRS15}, these ``heavy'' vertices are
    discarded without damaging the accuracy of the estimate, whereas
    we spend the bulk of our effort on attempting to estimate the
    number of triangles intersecting at each ``heavy'' vertex.
 
    \paragraph{Running time.} An $O(m^{3/2})$ time algorithm was given
    in~\cite{IR78} to list all the triangles in a graph.  This was
    improved for graphs with small arboricity by~\cite{CN85}.  For
    \emph{counting} triangles,~\cite{AYZ97} gave a different algorithm
    that improves the time to $m^{\omega/(1+\omega)} \approx m^{1.41}$
    using matrix multiplication.  In~\cite{BPVZ14}, it was shown how
    to extend this to \emph{listing} triangles in $o(m^{1.5})$ time
    when $T = o(m^{3/2})$.  Other works, such as~\cite{SW05,L08}, have
    given combinatorial $O(m^{3/2})$ time triangle listing algorithms
    that are more efficient in practice.  Our algorithm's running time
    is dominated by listing triangles in the subsampled graph, which
    we can do either using one of the $O(m^{3/2})$ time algorithms or
    (in some cases) slightly faster via~\cite{BPVZ14}.  Because our
    algorithm improves upon the number of edges necessary to
    approximate the triangle count, it also implies a faster method
    for approximately counting the number of triangles in a given
    graph (as in, e.g., \cite{KMPT12}).

    \section{Overview of Techniques}
    \subsection{Triangle Counting Algorithm}

    This algorithm is a modification of simple vertex sampling, where
    we sample each vertex in the graph with probability
    $\frac{1}{\sqrt k}$, and keep any edge between two sampled
    vertices. The estimate of $T$ is $k^\frac{3}{2}$ times the number
    of triangles sampled, and the sample complexity is $m/k$. This is
    appealing, because as we show in our proof of Theorem
    \ref{thm:itlb}, \emph{any} algorithm that samples edges at rate
    $1/k$ has a constant chance of sampling less than
    $\frac{T}{k^{3/2}}$ triangles.

    In order to approximate the number of triangles well, we need an
    estimator with variance $O(T^2)$. The vertex sampling estimator
    has variance bounded by $k^\frac{3}{2}T + k\sum_eT_e^2 +
    \sqrt{k}\sum_v T_v^2$, by the fact that a pair of triangles
    intersecting at $l$ vertices is $k^\frac{l}{2}$ times more likely
    to be sampled than a pair of triangles sampled independently.
    Choosing $k$ small enough for this to be $O(T^2)$ gives sample
    rate
    \[
    \frac{1}{k} \eqsim \frac{1}{T^{2/3}} + \frac{\Delta_E}{T} + \left(\frac{\Delta_V}{T}\right)^2
    \]
    The first and second terms are optimal, as we see in the
    independent triangles graph and heavy edge graph, but the third
    can be improved.

    This is the term that makes vertex sampling fail on the hub case,
    when there is a single vertex $v$ s.t. $T_v = T$. The vertex
    sampling algorithm will consistently fail to estimate the triangle
    count accurately in this case, as $v$ will only be sampled with
    probability $1/\sqrt{k}$, and so usually we will miss all the
    triangles, and occasionally we will overestimate the triangle
    count by a factor of $\sqrt{k}$. More generally, our issue is
    vertices where $T_v$ is large, and in particular vertices where it
    is larger than $\frac{T}{\sqrt{k}}$, as the total contribution to
    the variance of vertices with $T_v$ smaller than this is
    $O\left(T^2\right)$.

    We can deal with the hub case by extended the sampling in the
    following way.  After sampling vertices, in addition to taking all
    the edges between sampled vertices, we can also take $1/\sqrt{k}$
    of the edges between sampled vertices and unsampled vertices.
    Now, even when the central vertex $v$ is not sampled, each
    triangle in the hub has a $1/k^2$ chance of being sampled (if both
    other vertices, and both edges between $v$ and those vertices, are
    picked).  This is independent for the different triangles in the
    hub, so we will find a triangle when $k \approx \sqrt{T}$.  The
    resulting $m/\sqrt{T}$ sample complexity is optimal by our lower
    bound.

    So we could handle our vertices one of two ways---for our
    ``light'' vertices, we will use vertex sampling, and for our
    ``heavy'' ($T_v \geq \frac{T}{\sqrt{k}}$) vertices, we will use
    the scheme above. In order to identify the lightest of the heavy
    vertices, as we are sampling their triangles at rate $k^{-2}$, we
    would need $k^2 = \frac{T}{\sqrt{k}}$ and so $k <
    T^\frac{2}{5}$. Can we do better?


    We can think of vertex sampling and the ``hub scheme'' as two ends
    of a continuum.  For a given ``weight'' $x$, we can take two
    samples $S_1$ and $S_2$ of $V$ where each vertex appears in $S_1$
    with probability $1/\sqrt{k}$ and appears in $S_2$ with
    probability $x/\sqrt{k}$, then sample each edge in $S_1 \times
    S_2$ with probability $1/x$.  The $x = 1$ case is vertex sampling,
    and the $x = \sqrt{k}$ case is our hub scheme.  The trade-off is
    that small $x$ makes it possible to completely miss important vertices,
    but high $x$ means that we sample fewer triangles overall.

    Consider the ``many-hubs'' case where we have $T/\Delta_V$ hubs
    involved in $\Delta_V$ triangles each.  We need $x \geq \sqrt{k}
    \Delta_v / T$ to sample them reliably in the weight $x$ scheme,
    and we will get $\frac{T}{k^{3/2}x}$ triangles in expectation.
    When $x$ is minimized according to the first constraint, this is
    $\frac{T^2}{\Delta_v k^2}$ triangles in expectation.  For the
    variance to be small, we need this to be at least a constant, as
    happens for $k \leq \frac{\sqrt{\Delta_V}}{T}$.  This explains the
    $\frac{\sqrt{\Delta_V}}{T}$ bound in our algorithm.

    The above discussion applies when $x$ is optimized for a given
    graph.  Our full algorithm runs the scheme for all $\log k$
    different scales of $x$ and combines the results.  This lets us
    improve the variance bound to order $T^2 + k\sum_eT_e^2 +
    k^2\frac{\sum_v T_v^2}{T}$\footnote{Note that $\sum_vT_v^2 \leq \Delta_V T$.}
    at the cost of sampling each edge with
    probability $\frac{\log k}{k}$ instead of $\frac{1}{k}$. In
    particular, we improve our performance on a hub graph from
    sampling at rate $1$ to rate $\frac{\log T}{\sqrt{T}}$, while on
    graphs with less triangle-heavy vertices, we can achieve the bound
    of a $\frac{1}{T^{3/2}}$ sampling rate.

    As we only want to use those sampling rates appropriate to the scales of vertex actually present in the graph (as our performance depends on the greatest scale), we parameterize our algorithm by $\omega$, the minimum weight we will put on a vertex, which is $\min \left\lbrace \frac{T^2}{\sum_vT_v^2}, \sqrt{k}\right\rbrace$.

    Our algorithm conceptually runs as in two passes over the stream. First, for each vertex $v$ of the graph, we calculate an estimate $\mathcal{T}_v \in \left\lbrace 0, \dots,  \left\lceil \log\frac{\sqrt k}{\omega} \right\rceil \right\rbrace \cup \lbrace \lo \rbrace$ of $\left\lceil \log \frac{T}{\omega T_v} \right\rceil$, with $\mathcal{T}_v = \lo$ when $T_v$ is believed to be $< \frac{\wt{T}}{\sqrt k}$, and 0 when it is believed to be $> \frac{T}{\omega}$. In the second pass, we estimate $T_L$, the number of triangles $t = (u,v,w)$ s.t. $\mathcal{T}_u = \mathcal{T}_v = \mathcal{T}_w = \lo$, and $T_H$, the number of triangles using at least one vertex $v$ s.t. $\mathcal{T}_v \neq \lo$.

    We will show that it is possible to perform both conceptual passes in one pass over the data, by only calculating those $\mathcal{T}_v$ which are needed for the second pass.

    \subsection{Instance Lower Bounds}
    %

    We recall our definition of \emph{instance-optimum} for a class of
    graphs.
    \restate{dfn:solving}
    \restate{dfn:ilb}

    We demonstrate that it is sufficient to show that an algorithm
    cannot distinguish between two distributions on subgraphs of
    $G$ with triangle counts separated by $\Omega(T)$.

  \define[Distinguishing]{dfn:distinguish}{Definition}{
    We say that an algorithm $\mathcal{A}$ can \emph{distinguish} two
    random graph distributions $\mathcal{G}_1$ and $\mathcal{G}_2$ if there 
    exists $f$ such that, for a pair of draws $G_1$ and $G_2$ from
    these distributions, and any 
    relabelling of the vertices of $G_1$ and $G_2$, 
    $\Pr[f(\mathcal{A}(G_1)) = 1] \geq 3/4$
    and $\Pr[f(\mathcal{A}(G_2)) \not= 1] \geq 3/4$.
  }
  \state{dfn:distinguish}
 
  \define{lem:ibounddist}{Lemma}{%
    Let $\mathcal{A}$ be an algorithm that solves triangle counting 
    for a graph $G$ with $S$ space/samples and $(\epsilon,1/10)$ error.
    Then, for any two distributions $\mathcal{G}_1, \mathcal{G}_2$ 
    on subgraphs $G_1$ and $G_2$ of $G$, and $C$ such that $T(G_1) > 
    C + \epsilon T(G)$ with $\frac{9}{10}$ probability and $T(G_2) < C - 
    \epsilon T(G) $ with $\frac{9}{10}$ probability,
    $\mathcal{A}$ can distinguish them.
  } %
  \state{lem:ibounddist}%

  \subsubsection{Heavy Edges Graph}
  \define[Heavy Edges Graph]{dfn:hegraph}{Definition}{%
    The heavy edges graph $D_{r,d}$ consists of $r$ copies of the following graph: $d$ disjoint edges $\lbrace u_{2i}u_{2i+1}\rbrace_{i = 0}^{d-1}$, one of which has both ends connected to a further $d$ vertices $\lbrace v_i\rbrace_{i=0}^{d-1}$.
  } 
  \state{dfn:hegraph}
  \define[Heavy Edges Lower Bound]{thm:edgelb}{Theorem}{%
    For a sufficiently small constant $\epsilon$, \[\instopt(D_{r,d},\epsilon,1/10) = \Omega\left(d\right)\] bits in the insertion-only model.
  } 
  \state{thm:edgelb}
  We use a reduction shown in \cite{BOV13}.  We reduce from the indexing problem $\text{Index}_n$ to the problem of distinguishing an $rd$-triangle subset of $D_{r,d}$ and a 0-triangle subset.

  $\text{Index}_n$ is defined as follows: Alice has a binary vector $w$ of length $n$, and Bob has an index $x \in \lbrack n\rbrack$. Alice must send a message to Bob such that Bob can determine $w\lbrack x\rbrack$. By \cite{CCK10}, the randomized communication complexity of this problem is $\Omega(n)$. 

  Alice can encode $w$ in a subgraph of $D_{1,d}$ by, for each $i$ in $\lbrace0, \dots, n -1\rbrace$, including $u_{2i}u_{2i+1}$ iff $w_i = 1$. Bob then adds the $d$ vertices $\lbrace v_i\rbrace_{i=0}^{d-1}$ to the graph, connecting each of them to $u_{2x}$ and $u_{2x+1}$. The resulting graph will have $d$ triangles if $w_x = 1$ and 0 otherwise. 

  This result extends to arbitrary $r$ by letting Alice and Bob each repeat their encoding $r$ times.

\pdfoutput=1
\tikzset{basevertex/.style={shape=circle, line width=0.5,
    minimum size=4pt, inner sep=0pt, draw}}
\tikzset{defaultvertex/.style={basevertex, fill=gray!40}}
\begin{figure}[h]
  \centering

    \begin{subfigure}[t]{\textwidth}
      \centering
      \begin{tikzpicture}[%
          VertexStyle/.style={defaultvertex}]
          \SetVertexNoLabel
          \Vertex{nw1}
          \SO(nw1){nw2}
          \NOEA(nw1){ne1}
          \EA(nw1){ne2}
          \EA(nw2){ne3}
          \SOEA(nw2){ne4}
          \Edge(nw1)(nw2)
          \Edges[color=red,lw=2pt](nw1,ne1,nw2)
          \Edges[color=red,lw=2pt](nw1,ne2,nw2)
          \Edges[color=red,lw=2pt](nw1,ne3,nw2)
          \Edges[color=red,lw=2pt](nw1,ne4,nw2)
          \WE(nw1){nww1}
          \WE(nww1){nwww1}
          \WE(nwww1){nwwww1}
          \WE(nw2){nww2}
          \WE(nww2){nwww2}
          \WE(nwww2){nwwww2}
          \Edge[color=blue,lw=2pt](nww1)(nww2)
          \Edge(nwww1)(nwww2)
          \Edge[color=blue,lw=2pt](nwwww1)(nwwww2)
        \end{tikzpicture}
        \caption*{Encoding the string $1010$ in $D_{1,4}$, Alice sends the edges in blue, connecting the $i^\text{th}$ pair of vertices if the $i^\text{th}$ bit of the string is 1. Bob then queries the $4^\text{th}$ position by adding the edges in red, connecting the $i^\text{th}$ pair to $d = 4$ wedges. The graph will contain $d$ triangles if the $i^\text{th}$ bit is 1, and 0 otherwise.}
      \end{subfigure}
      \caption{Heavy edge graph, showing that $\Omega(m
        \frac{\Delta_E}{T})$ is necessary for insertion streams.}
  \label{fig:lowerheavy}
\end{figure}

  \subsubsection{Hubs Graph}
  \define[Hubs Graph]{dfn:hubgraph}{Definition}{%
    The hubs graph $H_{r,d}$ consists of a single vertex $v$ with $2rd$ incident edges, and $d$ edges connecting disjoint pairs of $v$'s neighbors to form triangles. 
  }
  \state{dfn:hubgraph}
  \define[Hubs Lower Bound]{thm:hublb}{Theorem}{%
    For a sufficiently small constant $\epsilon$, \[\instopt(H_{r,d},\epsilon,\delta) = \Omega\left(r\sqrt{d}\right)\] bits in the insertion-only model.
  }
  \state{thm:hublb}
  For a single hub, we consider the following communication problem: there are $2rd$ people who might go on a trip. Alice knows who is going on the trip, and Bob knows of $d$ couples among them. Alice must send a message to Bob that lets him determine whether the couples are going on the trip or not. Intuitively, this should require $\Omega(r\sqrt{d})$ communication, as if Bob learns fewer than $r\sqrt{d}$ of the people who are going on the trip, he is likely to not learn of any of the couples. We can then consider the hub edges as the identities of of people who are going on the trip, and the other edges as the identities of the couples.

  Formally, we use a reduction from the Boolean Hidden Matching problem~\cite{BJK04,Kerenidis:2006,GKKRd07}, in particular the variant set out in~\cite{GKKRd07}.
\pdfoutput=1
\tikzset{basevertex/.style={shape=circle, line width=0.5,
    minimum size=4pt, inner sep=0pt, draw}}
\tikzset{defaultvertex/.style={basevertex, fill=gray!40}}

\begin{figure}[h]
  \centering
      \begin{subfigure}[t]{0.48\textwidth}
        \centering
        \begin{tikzpicture}[%
            VertexStyle/.style={defaultvertex}]
            \SetVertexNoLabel

            \Vertex[x=0,y=0]{h}
            \Vertex[x=3.293,y=0.707]{s1}
            \Vertex[x=4,y=1]{s2}
            \Vertex[x=4.707,y=0.707]{s3}
            \Vertex[x=5,y=0]{s4}
            \Vertex[x=4.707,y=-0.707]{s5}
            \Vertex[x=4,y=-1]{s6}
            \Vertex[x=3.293,y=-0.707]{s7}
            \Vertex[x=3,y=0]{s8}
            
            \Edge[color=blue,lw=2pt](h)(s1)
            \Edge[color=blue,lw=2pt](h)(s5)
            \Edge[color=blue,lw=2pt](h)(s7)
            \Edge[color=red,lw=2pt](s1)(s4)
            \Edge[color=red,lw=2pt](s2)(s5)
            \Edge[color=red,lw=2pt](s3)(s8)
            \Edge[color=red,lw=2pt](s6)(s7)

          \end{tikzpicture}
          \caption{Encoding a ``trip''. }
        \end{subfigure}
        \begin{subfigure}[t]{0.48\textwidth}
          \centering
          \begin{tikzpicture}[%
              VertexStyle/.style={defaultvertex}]
              \SetVertexNoLabel
              \Vertex[x=0,y=0]{lc}
              \Vertex[x=-1,y=0.5]{lnw1}
              \Vertex[x=-0.5,y=1]{lnw2}
              \Vertex[x=1,y=0.5]{lne1}
              \Vertex[x=0.5,y=1]{lne2}
              \Vertex[x=1,y=-0.5]{lse1}
              \Vertex[x=0.5,y=-1]{lse2}
              \Vertex[x=-1,y=-0.5]{lsw1}
              \Vertex[x=-0.5,y=-1]{lsw2}
              \Edge(lc)(lnw1)
              \Edge[color=blue,lw=2pt](lc)(lnw2)
              \Edge(lc)(lne1)
              \Edge(lc)(lne2)
              \Edge(lc)(lse1)
              \Edge[color=blue,lw=2pt](lc)(lse2)
              \Edge(lc)(lsw2)
              \Edge[color=blue,lw=2pt](lc)(lsw1)
              \Edge[color=red,lw=2pt](lnw1)(lnw2)
              \Edge[color=red,lw=2pt](lne1)(lne2)
              \Edge[color=red,lw=2pt](lsw1)(lsw2)
              \Edge[color=red,lw=2pt](lse1)(lse2)
          \end{tikzpicture}
          \caption{The same encoding, shown as a subgraph of $H_{1,4}$.}
        \end{subfigure}\vspace{.7em}

        \small{
          We have one hub vertex, and a set of $2rd$
spoke vertices. Alice receives a subset of spoke vertices; she then
creates the edges shown in blue, drawing an edge from the hub vertex
to each of her vertices. Bob receives a partial matching on the spoke
vertices; he adds these $d$ edges (shown in red).  Correctly counting
triangles in the resulting problem would solve Boolean Hidden Matching,
which takes $\Omega(r \sqrt{d})$ communication.}
  \caption{Hub graph, showing that $\Omega(m \frac{\sqrt{\Delta_V}}{T})$ is necessary for insertion
  streams.}
  \label{fig:lowerhub}
\end{figure}

  \subsubsection{Independent Triangles}
  \define[Independent Triangles Graph]{dfn:itgraph}{Definition}{%
    The independent triangles graph $I_n$ consists of $n$ vertex-disjoint triangles.
  }
  \state{dfn:itgraph}
  \define[Independent Triangles Lower Bound]{thm:itlb}{Theorem}{%
    For a sufficiently small constant $\epsilon$, \[
    \instopt(I_n,\epsilon,1/10) = \Omega\left(n^\frac{1}{3}\right)\]
    samples in the sampling model.
  }
  \state{thm:itlb}

  We prove this by showing that we need to sample edges with
  probability $\Omega\left(\frac{1}{n^{2/3}}\right)$ to achieve a
  constant chance of sampling \emph{any} triangle. 
  
  Intuitively, it seems that this should be sufficient---a sampling
  algorithm which fails to sample any triangles should not be able to 
  count the number of triangles in a graph. However, we
  have been unable to prove this in the general case, as for 
  arbitrary graphs, finding two subgraphs with a large enough 
  separation in triangle count that cannot be distinguished
  without sampling any triangles turns out to be difficult. For 
  instance, the pair would need to have the same number of edges,
  as otherwise they could be distinguished simply by edge counting,
  and the same number of wedges, as otherwise they could be distinguished
  by an algorithm that samples wedges but not triangles.
  
  In this specific case, however, we can show that there are two 
  different (distributions on) subgraphs which satisfy this
  requirement.
  
  Conditioned on failing to sample any
  triangles (and therefore, in $I_n$, any cycles), we show that, for a
  random 2-coloring $\chi$, the distributions of the following two
  graphs under the sampling scheme are identical:

  \begin{align*}
    I_1 & = (V(I_n), \lbrace uv \in E(I_n) | \chi(u) = \chi(v) \rbrace)\\
    I_2 & = (V(I_n), \lbrace uv \in E(I_n) | \chi(u) \not= \chi(v) \rbrace)
  \end{align*}

  Which, as $I_1$ has a constant fraction of the triangles, and $I_2$ has none of them, proves our theorem.

\pdfoutput=1
\tikzset{basevertex/.style={shape=circle, line width=0.5,
    minimum size=4pt, inner sep=0pt, draw}}
\tikzset{defaultvertex/.style={basevertex, fill=gray!40}}

\begin{figure}[h!]
  \centering
      \begin{subfigure}[t]{\textwidth}
        \centering
        $\begin{tikzpicture}[%
            VertexStyle/.style={defaultvertex}]
            \SetVertexNoLabel 
            \Vertices[x=0,y=0,dir=\EA]{line}{b1,b2}
            \Vertex[x=0.5,y=0.866]{t1}
            \Edges(b1,b2,t1,b1)
        \end{tikzpicture} \rightarrow
        \left\lbrace
        \begin{tikzpicture}[%
          VertexStyle/.style={basevertex}]
          \SetVertexNoLabel 
          \Vertex[x=0,y=0,style={fill=blue}]{b1}
          \Vertex[x=1,y=0,style={fill=blue}]{b2}
          \Vertex[x=0.5,y=0.866,style={fill=blue}]{t1}
          \Edges(b1,b2,t1,b1)
         \end{tikzpicture},
        \begin{tikzpicture}[%
          VertexStyle/.style={basevertex}]
          \SetVertexNoLabel 
          \Vertex[x=0,y=0,style={fill=blue}]{b1}
          \Vertex[x=1,y=0,style={fill=red}]{b2}
          \Vertex[x=0.5,y=0.866,style={fill=blue}]{t1}
          \Edge(b1)(t1)
         \end{tikzpicture},
        \begin{tikzpicture}[%
          VertexStyle/.style={basevertex}]
          \SetVertexNoLabel 
          \Vertex[x=0,y=0,style={fill=red}]{b1}
          \Vertex[x=1,y=0,style={fill=red}]{b2}
          \Vertex[x=0.5,y=0.866,style={fill=blue}]{t1}
          \Edge(b1)(b2)
        \end{tikzpicture},
        \begin{tikzpicture}[%
          VertexStyle/.style={basevertex}]
          \SetVertexNoLabel 
          \Vertex[x=0,y=0,style={fill=red}]{b1}
          \Vertex[x=1,y=0,style={fill=blue}]{b2}
          \Vertex[x=0.5,y=0.866,style={fill=blue}]{t1}
          \Edge(b2)(t1)
         \end{tikzpicture}
      \right\rbrace$
      \caption{The four (equally likely) possible results for applying the first subgraphing scheme to a single triangle.}
    \end{subfigure}
      \begin{subfigure}[t]{\textwidth}
        \centering
        $\begin{tikzpicture}[%
            VertexStyle/.style={defaultvertex}]
            \SetVertexNoLabel 
            \Vertices[x=0,y=0,dir=\EA]{line}{b1,b2}
            \Vertex[x=0.5,y=0.866]{t1}
            \Edges(b1,b2,t1,b1)
        \end{tikzpicture} \rightarrow
        \left\lbrace
        \begin{tikzpicture}[%
          VertexStyle/.style={basevertex}]
          \SetVertexNoLabel 
          \Vertex[x=0,y=0,style={fill=blue}]{b1}
          \Vertex[x=1,y=0,style={fill=blue}]{b2}
          \Vertex[x=0.5,y=0.866,style={fill=blue}]{t1}
         \end{tikzpicture},
        \begin{tikzpicture}[%
          VertexStyle/.style={basevertex}]
          \SetVertexNoLabel 
          \Vertex[x=0,y=0,style={fill=blue}]{b1}
          \Vertex[x=1,y=0,style={fill=red}]{b2}
          \Vertex[x=0.5,y=0.866,style={fill=blue}]{t1}
          \Edges(b1,b2,t1)
         \end{tikzpicture},
        \begin{tikzpicture}[%
          VertexStyle/.style={basevertex}]
          \SetVertexNoLabel 
          \Vertex[x=0,y=0,style={fill=red}]{b1}
          \Vertex[x=1,y=0,style={fill=red}]{b2}
          \Vertex[x=0.5,y=0.866,style={fill=blue}]{t1}
          \Edges(b1,t1,b2)
        \end{tikzpicture},
        \begin{tikzpicture}[%
          VertexStyle/.style={basevertex}]
          \SetVertexNoLabel 
          \Vertex[x=0,y=0,style={fill=red}]{b1}
          \Vertex[x=1,y=0,style={fill=blue}]{b2}
          \Vertex[x=0.5,y=0.866,style={fill=blue}]{t1}
          \Edges(b2,b1,t1)
         \end{tikzpicture}
      \right\rbrace$
      \caption{The four (equally likely) possible results for applying the second subgraphing scheme to a single triangle.}
    \end{subfigure}
    \small{\newline\newline
      A sampling scheme that
      only looks at one or two of the edges will observe identical
      distributions under the two regimes, so a sampling scheme that
    distinguishes the regimes must sample at least one complete triangle.}
    \caption{Independent triangles, showing that $\Omega(m/T^{2/3})$ is necessary for sampling algorithms. }
  \label{fig:lowerindep}
\end{figure}

  \subsubsection{$G_{n,p}$}
  \define[$G_{n,p}$ Lower Bound]{thm:gnplb}{Theorem}{%
    There exists a constant $C$ such that, provided $p \geq \frac{C}{n}$, and for sufficiently small constant $\epsilon$, \[\instopt(G_{n,p}, \epsilon, 1/10) = \Omega\left(\frac{1}{p}\right)\] samples.
 }
 \state{thm:gnplb}
We make use of the same pair of colorings as in the proof of the independent triangles lower bound, but as our triangles are no longer guaranteed not to intersect, we prove that we need $\Omega\left(\frac{1}{p}\right)$ samples to sample \emph{any} cycles from the graph. Subject to our sample being acyclic, it will be identically distributed between both colorings, despite their differing by a constant fraction of the graph's triangles, and so it follows that we will not be able to count triangles.

  \subsection{Instance-Optimality for Triangle-Dependent Sampling Algorithms}
  We now present a lower bound which applies to all graph instances, but only a subclass of sampling algorithms, specifically those which depend on actually sampling at least one triangle to find a $(1 \pm \epsilon)$ estimate of the number of triangles in a graph. 

\define{dfn:triangledep}{Definition}{%
  Let $\mathcal{A}$ be a sampling algorithm for counting triangles. We say that $\mathcal{A}$ is a \emph{triangle-dependent sampling algorithm} if, for all graphs $G$, $\mathcal{A}(G)$ depends only on the set of triangles sampled by $\mathcal{A}$.
}
\state{dfn:triangledep}
Note that this definition encompasses all existing sampling
algorithms for triangle counting.

  Note that this definition encompasses all the strategies for counting triangles by edge sampling mentioned earlier, all of which depend on sampling edges by some strategy and then weighting the triangles sampled this way.

  Our method is to show, that each of the three parameters in our graph are, in any graph, necessary for sampling triangles.  This requires extending the definition of $\Delta_E$ and $\Delta_V$ to allow excluding $\epsilon T$ of the ``heaviest'' vertices or edges. This is necessary because a graph may, for instance, have a single edge with $\epsilon/2$ triangles, which contributes to the variance of the estimator, but which will not prevent the algorithm accurately estimating the triangle count if it is not sampled.

  \define{dfn:dgepsilon}{Definition}{%
    For any graph $G$, $\epsilon > 0$, let the vertices $v \in V(G)$ be ordered as $(v_i)_{i\geq 0}$ in descending order of $T_v$, and the edges $e \in E(G)$ be ordered as $(e_i)_{i\geq 0}$ in descending order of $T_e$. Then let $H_V$, $H_E$ be the maximal prefixes of $(v_i)_{i\geq 0}$,  $(e_i)_{i\geq 0}$ such that $\sum_{v\in H_V}T_v$, $\sum_{e\in H_E}T_v \leq \epsilon T$. We define $\Delta_{V,\epsilon}(G) = \max_{v \not\in H_V} T_v, \Delta_{E,\epsilon}(G) = \max_{e \not\in H_E} T_e$.

    When the graph meant is unambiguous, we will omit the parameter $G$.
  }
  \state{dfn:dgepsilon}

  \restate{thm:tdbound}

  Analyzing our algorithm in terms of the variance does not allow us to reach this bound. However,
  without altering the algorithm itself, we can refine the analysis by ``cutting off'' a small number 
  (less than $\epsilon T$ times a small constant) of the heaviest vertices and edges. This works 
  because their contribution to the estimate $\overline{T}$ is always positive, and so we may split 
  $\overline{T}$ into $\overline{T}_\epsilon$ (representing $\overline{T}$ less the contribution of
  these edges and vertices), which we bound by Chebyshev's inequality as usual, and $(\overline{T} -
  \overline{T}_\epsilon)$, which we bound by Markov's inequality. This allows ups to replace the 
  $\Delta_V, \Delta_E$ terms in our bound with $\Delta_{V,\epsilon}, \Delta_{E,\epsilon}$, so that, for constant $\epsilon$, we 
  can match the lower bound up to a log factor and by a constant\footnote{The distinction here is 
    that multiplying $\epsilon$ by a constant can cause a non-constant change in $\Delta_{E,\epsilon}$}
    factor in $\epsilon$.

    \define[Refined triangle estimation upper bound]{thm:refined}{Theorem}{%
      If $\wt{T} \leq T$, and $\epsilon > 0$, our
      algorithm obtains an $(\epsilon,\delta)$ approximation to $T$ while
      keeping \[O\left( \frac{m \log
        \frac{1}{\delta}}{\epsilon^2}\left(\frac{1}{\wt{T}^{2/3}} +
        \frac{\sqrt{\Delta_{V,\epsilon/24}}\log \wt{T}}{\wt{T}} +\frac{\Delta_{E,\epsilon/24}\log
      \wt{T}}{\wt{T}}\right)\right)\]
      edges. If $\wt{T} > T$, the algorithm either determines that
        $\wt{T} > T$ or obtains a $(1 \pm \epsilon)$ approximation
          to $T$ with probability $1 - \delta$.
}
\state{thm:refined}

\subsection{Algorithm for General Constant-Size Subgraphs}
We show that the algorithm in this paper can, with small modifications, be generalized to count the number of copies of an arbitrary constant-size subgraph $A$ in $G$. As in the triangle case, we start with the algorithm in which every vertex is sampled with probability $\frac{1}{\sqrt k}$, and edges between sampled vertices are kept. This will give an estimator with variance:

\begin{align*}
  O\left(\sum_{l = 1}^s \left(C_lk^\frac{l}{2}\right)\right)
\end{align*}
Where we define $C_i$ as follows: for any $S \subseteq V(G)$, we define $\alpha(S)$ defined as the set of copies of $A$ that use all the vertices in $S$ and $M_S$ as $|\alpha(S)|$. $C_i$ is then defined as $\sum_{|S| = i} M_S$. Note that in the case where $A$ is a triangle, $C_1 = \sum_v T_v^2$, $C_2 = \sum_e T_e^2$, and $C_3 = T$.

As in the triangle case, we will eliminate the $C_1$ term, by estimating the number of subgraphs involving $v$ at each $v$ and reducing the ``weight'' applied to $v$ accordingly. In the triangle case, this increased the $C_3$ (equivalently, $T$) term---in the general case it will increase the $C_l$ term for every $l \geq 3$. This is because, when we put less weight on a vertex, for any pair of subgraphs $a_1$, $a_2$ which intersect at $\geq 2$ edges adjacent to this vertex, the event of sampling $a_1$ will become \emph{more} correlated with the event of sampling $a_2$.

Consequently, our new variance will end up depending on $C_1$ at every term of the sum, giving us variance:

\begin{align*}
  M^2 + \sum_{l = 2}^s C_l\left(k^\frac{l}{2} + k\left(\frac{C_1^+}{M^2}k\right)^{l - 2}\right)
\end{align*}
This then gives us our general subgraphs result.
\restate{thm:general}

\section*{Acknowledgements}

The authors would like to thank David Woodruff for a helpful pointer
to the Boolean Hidden Matching problem~\cite{BJK04,Kerenidis:2006,GKKRd07}.

\bibliographystyle{alpha}
\bibliography{refs}

\newcommand{\etalchar}[1]{$^{#1}$}
\begin{thebibliography}{CCKM10}

\bibitem[AYZ97]{AYZ97}
Noga Alon, Raphael Yuster, and Uri Zwick.
\newblock Finding and counting given length cycles.
\newblock {\em Algorithmica}, 17(3):209--223, 1997.

\bibitem[BFKP14]{BFKP14}
Laurent Bulteau, Vincent Froese, Konstantin Kutzkov, and Rasmus Pagh.
\newblock Triangle counting in dynamic graph streams.
\newblock {\em Algorithmica}, pages 1--20, 2014.

\bibitem[BFL{\etalchar{+}}06]{BFLMS06}
Luciana~S Buriol, Gereon Frahling, Stefano Leonardi, Alberto
  Marchetti-Spaccamela, and Christian Sohler.
\newblock Counting triangles in data streams.
\newblock In {\em Proceedings of the twenty-fifth ACM SIGMOD-SIGACT-SIGART
  symposium on Principles of database systems}, pages 253--262. ACM, 2006.

\bibitem[BKS02]{BKS02}
Ziv {Bar-Yossef}, Ravi Kumar, and D.~Sivakumar.
\newblock Reductions in streaming algorithms, with an application to counting
  triangles in graphs.
\newblock In {\em Proceedings of the Thirteenth Annual ACM-SIAM Symposium on
  Discrete Algorithms}, SODA '02, pages 623--632, Philadelphia, PA, USA, 2002.
  Society for Industrial and Applied Mathematics.

\bibitem[BOV13]{BOV13}
Vladimir Braverman, Rafail Ostrovsky, and Dan Vilenchik.
\newblock How hard is counting triangles in the streaming model?
\newblock In {\em Automata, Languages, and Programming}, pages 244--254.
  Springer, 2013.

\bibitem[BPWZ14]{BPVZ14}
Andreas Bj{\"o}rklund, Rasmus Pagh, Virginia~Vassilevska Williams, and Uri
  Zwick.
\newblock Listing triangles.
\newblock In {\em Automata, Languages, and Programming}, pages 223--234.
  Springer, 2014.

\bibitem[BYJK04]{BJK04}
Ziv Bar-Yossef, Thathachar~S Jayram, and Iordanis Kerenidis.
\newblock Exponential separation of quantum and classical one-way communication
  complexity.
\newblock In {\em Proceedings of the thirty-sixth annual ACM symposium on
  Theory of computing}, pages 128--137. ACM, 2004.

\bibitem[CCKM10]{CCK10}
Amit Chakrabarti, Graham Cormode, Ranganath Kondapally, and Andrew McGregor.
\newblock Information cost tradeoffs for augmented index and streaming language
  recognition.
\newblock In {\em Proceedings of the 51st FOCS}, pages 387--396. IEEE, 2010.

\bibitem[CJ14]{CJ14}
Graham Cormode and Hossein Jowhari.
\newblock A second look at counting triangles in graph streams.
\newblock {\em Theoretical Computer Science}, 552:44--51, 2014.

\bibitem[CN85]{CN85}
Norishige Chiba and Takao Nishizeki.
\newblock Arboricity and subgraph listing algorithms.
\newblock {\em SIAM Journal on Computing}, 14(1):210--223, 1985.

\bibitem[eh14]{stackex:823650}
emab (http://math.stackexchange.com/users/74964/emab).
\newblock Number of triangles in a graph based on number of edges.
\newblock Mathematics Stack Exchange, 2014.
\newblock URL:http://math.stackexchange.com/q/823650 (version: 2014-06-07).

\bibitem[ELRS15]{ELRS15}
Talya Eden, Amit Levi, Dana Ron, and C.~Seshadhri.
\newblock Approximately counting triangles in sublinear time.
\newblock In {\em Proceedings of the 56th FOCS}, pages 614--633. IEEE, 2015.

\bibitem[GKK{\etalchar{+}}07]{GKKRd07}
Dmitry Gavinsky, Julia Kempe, Iordanis Kerenidis, Ran Raz, and Ronald De~Wolf.
\newblock Exponential separations for one-way quantum communication complexity,
  with applications to cryptography.
\newblock In {\em Proceedings of the thirty-ninth annual ACM symposium on
  Theory of computing}, pages 516--525. ACM, 2007.

\bibitem[IR78]{IR78}
Alon Itai and Michael Rodeh.
\newblock Finding a minimum circuit in a graph.
\newblock {\em SIAM Journal on Computing}, 7(4):413--423, 1978.

\bibitem[JG05]{JG05}
Hossein Jowhari and Mohammad Ghodsi.
\newblock New streaming algorithms for counting triangles in graphs.
\newblock In {\em Computing and Combinatorics}, pages 710--716. Springer, 2005.

\bibitem[JSP13]{JSP13}
Madhav Jha, Comandur Seshadhri, and Ali Pinar.
\newblock A space efficient streaming algorithm for triangle counting using the
  birthday paradox.
\newblock In {\em Proceedings of the 19th ACM SIGKDD international conference
  on Knowledge discovery and data mining}, pages 589--597. ACM, 2013.

\bibitem[KMPT12]{KMPT12}
Mihail~N Kolountzakis, Gary~L Miller, Richard Peng, and Charalampos~E
  Tsourakakis.
\newblock Efficient triangle counting in large graphs via degree-based vertex
  partitioning.
\newblock {\em Internet Mathematics}, 8(1-2):161--185, 2012.

\bibitem[KP14]{KP14}
Konstantin Kutzkov and Rasmus Pagh.
\newblock Triangle counting in dynamic graph streams.
\newblock In {\em Algorithm Theory--SWAT 2014}, pages 306--318. Springer, 2014.

\bibitem[KR06]{Kerenidis:2006}
I.~Kerenidis and R.~Raz.
\newblock {The one-way communication complexity of the Boolean Hidden Matching
  Problem}.
\newblock 2006.

\bibitem[Lat08]{L08}
Matthieu Latapy.
\newblock Main-memory triangle computations for very large (sparse (power-law))
  graphs.
\newblock {\em Theoretical Computer Science}, 407(1):458--473, 2008.

\bibitem[LNW14]{LNW14}
Yi~Li, Huy~L Nguyen, and David~P Woodruff.
\newblock Turnstile streaming algorithms might as well be linear sketches.
\newblock In {\em Proceedings of the 46th annual ACM Symposium on Theory of
  Computing}, pages 174--183. ACM, 2014.

\bibitem[ML12]{ML12}
Julian~J McAuley and Jure Leskovec.
\newblock Learning to discover social circles in ego networks.
\newblock In {\em NIPS}, volume 2012, pages 548--56, 2012.

\bibitem[MV08]{matvon08}
Ji\v{r}\'{i} Matou\v{s}ek and Jan Vondr\'{a}k.
\newblock The probabilistic method lecture notes, March 2008.

\bibitem[PT12]{PT12}
Rasmus Pagh and Charalampos~E Tsourakakis.
\newblock Colorful triangle counting and a mapreduce implementation.
\newblock {\em Information Processing Letters}, 112(7):277--281, 2012.

\bibitem[SW05]{SW05}
Thomas Schank and Dorothea Wagner.
\newblock Finding, counting and listing all triangles in large graphs, an
  experimental study.
\newblock In {\em Experimental and Efficient Algorithms}, pages 606--609.
  Springer, 2005.

\bibitem[TKM11]{TKM11}
Charalampos~E Tsourakakis, Mihail~N Kolountzakis, and Gary~L Miller.
\newblock Triangle sparsifiers.
\newblock {\em J. Graph Algorithms Appl.}, 15(6):703--726, 2011.

\bibitem[TKMF09]{TKMF09}
Charalampos~E Tsourakakis, U~Kang, Gary~L Miller, and Christos Faloutsos.
\newblock Doulion: counting triangles in massive graphs with a coin.
\newblock In {\em Proceedings of the 15th ACM SIGKDD international conference
  on Knowledge discovery and data mining}, pages 837--846. ACM, 2009.

\end{thebibliography}

\newpage
\appendix

    \section{Algorithm}
    \subsection{Definitions}
    Let $G$ be a graph, where we receive $E(G)$ as a stream of edges, with $m = |E(G)|$. Edges in $E(G)$ will always be treated as undirected.

    $T(G)$ is the number of triangles in $G$. For any $v \in V(G)$, let $\tau(v)$ be the set of triangles in $G$ involving $v$, and $T_v = |\tau(v)|$. For any $e \in E(G)$, let $T_e$ be the number of triangles involving $e$. Then, let $T_V(G) = \sum_v T_v^2$ and $T_E(G) = \sum_e T_e^2$. Where the graph $G$ meant is unambiguous, we will omit the explicit parametrization by $G$. Note that $T_V \leq T\Delta_V$ and $T_E \leq T\Delta_E$, respectively.

    $k \in \lbrack 0,1\rbrack$ is our sampling parameter. We will show that the expected number of edges stored by the algorithm is $O\left(\frac{m\log k}{k}\right)$.

    Let $\wt{T}$ be a proposed lower bound on $T$, and $T_V^+$ an actual upper bound on $T_V$. If  $T_V \leq T_V^+$ and $T \geq \wt{T}$, we want to be able to accurately estimate the number of triangles in the graph. If $T_V \leq T_V^+$ and  $T < \wt{T}$, we want to be able to detect that $T < \wt{T}$. Given these two parameters, we will define $\omega \coloneqq \min \left\lbrace \frac{(\wt{T})^2}{T_V^+},\sqrt{k}\right\rbrace$. $\omega$ will be the minimum ``weight'' we put on a vertex, and as we will use weights from $\omega$ to $\sqrt{k}$, the total number of sampling schemes we use will be $\log \frac{\sqrt{k}}{\omega}$. This means that the bound on the expected number of edges stired by the algorithm can in fact be reduced to $O\left(\frac{m}{k} \log \frac{\sqrt{k}}{\omega}\right)$.

    \subsection{First pass}
    \subsubsection{Outline}
    In our first pass over the graph, we attempt to estimate the ``correct'' weight to put on each vertex $v$, $\frac{T}{T_v}$. As we do not know $T$, we use our lower bound $\wt{T}$, which will allow us to bound the variance of our final estimate in terms of $\wt{T}$ and therefore $T$. As we do not have direct access to $T_v$, we look at the number of triangles that would be counted at $v$ if it were assigned a given weight.

    We consider $\log \frac{\sqrt{k}}{\omega}$ possible weights that could be assigned to $v$. For each $h \in \left \lbrace 0,\dots, \left\lceil \log \frac{\sqrt{k}}{\omega} \right\rceil \right \rbrace$, we consider the weight $\omega2^h$ ($\omega$, therefore, being the minimum weight we will assign any vertex). Let $\mathcal{T}_v$ be our estimate of the ``correct'' choice of $h$. But what should this be? Letting $X_v^{(h)}$ be the number of triangles counted at $v$ when it is assigned weight $2^h$, the expected value of $X_v^{(h)}$ will be $\frac{2^{2h}T_v}{k^2}$. So if we want $\omega2^h$ to be $\frac{\wt{T}}{T_v}$, the expectation of $X_v^{(h)}$ should be $\frac{\omega \wt{T} 2^h}{k^2}$. We will therefore define $\mathcal{T}_v$ to be the least $h$ such that $X_v^{(h)}$ is at least this high. If $X_v^{(h)}$ is not this high for any $h \in  \left \lbrace 0,\dots, \left\lceil \log \frac{\sqrt{k}}{\omega} \right\rceil \right \rbrace$, we will conclude that $X_v \leq \frac{\wt{T}}{\sqrt{k}}$, and so our algorithm should treat it as a ``light'' vertex, and so we set $\mathcal{T}_v = \lo$.

    In order to attain the bounds we need on the distance of $\mathcal{T}_v$ from $\left\lceil \log \frac{T}{\omega T_v} \right\rceil$, we will run this procedure twice, taking the lowest value of $\mathcal{T}_v$.

    \subsubsection{Procedure}
    For $i = 1,2$, and for each $v \in V(G)$, define $\mathcal{T}_{v,i}$ as follows:

    Let $r_{D,i} : E \rightarrow \lbrack 0,1\rbrack$ be a uniformly random hash function.

    Let $d_i : V \rightarrow \lbrace 0,1\rbrace$ be a random hash function such that $\forall v \in V$:

    \begin{align*}
      d_i(v) = \begin{cases}
        1 & \mbox{With probability $\frac{1}{\sqrt{k}}$.}\\
        0 & \mbox{Otherwise.}
      \end{cases}
    \end{align*}

    For each $v \in V(G)$, $t = (u,v,w)$ a triangle in $G$, and $h \in \left\lbrace 0, \dots, \left\lceil \log \frac{\sqrt k}{\omega} \right\rceil\right\rbrace$, define $X_{v, i}^{(h,t)}$ to be 1 if:

    \begin{align*}
      d_i(u) & = 1\\
      d_i(w) & = 1\\
      r_{D,i}(vu) & < \frac{\omega 2^h}{\sqrt k}\\
      r_{D,i}(vw) & < \frac{\omega 2^h}{\sqrt k}
    \end{align*}

    And 0 otherwise. Then let $X_{v,i}^{(h)} = \sum_{t \in \tau(v)} X_{v, i}^{(h,t)}$. We then define \\$H_{v,i} = \left\lbrace h \in \left\lbrace 1, \dots, \left\lceil \log \frac{\sqrt k}{\omega} \right\rceil\right\rbrace \middle| X_{v,i}^{(h)} \geq \frac{\omega \wt{T}2^h}{k^2}\right\rbrace$. $\mathcal{T}_{v,i}$ is then defined as follows:

    \begin{align*}
      \mathcal{T}_{v,i} = \begin{cases}
        \min H_{v,i} & \mbox{If $H_{v,i} \not= \emptyset$.}\\
        \lo & \mbox{Otherwise.}
      \end{cases}
    \end{align*}

    Then, for each $v \in V(G)$, we define $\mathcal{T}_v$ to be $\lo$ if $\mathcal{T}_{v,1}$ and $\mathcal{T}_{v,2}$ are $\lo$, and otherwise to be the smallest numerical value amongst $\mathcal{T}_{v,1}, \mathcal{T}_{v,2}$.

    \subsection{Second Pass}
    \subsubsection{Outline}
    We now use the scale estimates $\mathcal{T}_v$ to determine our strategy for estimating $T$. If, for a vertex $v$, $\mathcal{T}_v = \lo$, we believe that $T_v \leq \frac{\wt{T}}{\sqrt{k}}$, so we use our na\"ive sampling method to estimate $T_L$, the number of triangles in $G$ which use only such vertices. Otherwise, we sample $v$ with probability $2^{-\mathcal{T}_v}$, and if we sample it, construct an estimate $\overline{T}_v$ of $T_v$ by assigning $v$ weight $2^{\mathcal{T}_v}$. As this could lead to a triangle being counted up to three times (if all three of its vertices $v$ had $\mathcal{T}_v \not = \lo$, it would be counted three times), we give each triangle either weight $1, \frac{2}{3}$, or $\frac{1}{3}$, depending on how many of its vertices $v$ have $\mathcal{T}_v \not = \lo$. This then lets us report our estimate $\overline{T_H}$ of $T_H$ as $\sum_v \overline{T}_v$.

    \subsubsection{Splitting the Graph}
    Let $V_{L} = \lbrace v \in V | \mathcal{T}_v = \lo\rbrace$, and let $G_L$ be the subgraph induced by $V_L$. Then we define $T_L$ as the number of triangles in $G_L$, and $T_H = T - T_L$.

    We will compute estimates $\overline{T_L},\overline{T_H}$ of $T_L,T_H$, and estimate $T$ as $\overline{T} = \overline{T_L} + \overline{T_H}$.

    \subsubsection{Estimating \texorpdfstring{$T_L$}{T\_L}}
    We will estimate $T_L$ by sampling the vertices of $V_L$ with probability $\frac{1}{\sqrt k}$ each and calculating the number of triangles in the resulting graph.

    Let $c : V \rightarrow \lbrace 0,1\rbrace$ be a random hash function such that $\forall v \in V$:

    \begin{align*}
      c(v) = \begin{cases}
        1 & \mbox{With probability $\frac{1}{\sqrt k}$.}\\
        0 & \mbox{Otherwise.}
      \end{cases}
    \end{align*}

    Let $V_L' = \lbrace v \in V_L | c(v) = 1\rbrace$, and let $G_L'$ be the subgraph of $G$ induced by $V_L'$. Then $\overline{T_L}$ is the number of triangles in $G_L'$, multiplied by $k^\frac{3}{2}$.

    \subsubsection{Estimating \texorpdfstring{$T_H$}{T\_H}}
    We will estimate $T_H$ by considering every vertex in $V\setminus V_L$ separately. We will achieve this by sampling vertices $v$ with probability $2^{-\mathcal{T}_v}$, and then sampling edges incident to $v$ with probability proportional to $2^{\mathcal{T}_v}$.

    Let $h : V \rightarrow \lbrace 0,1\rbrace$ be a random hash function such that $\forall v \in V$:

    \begin{align*}
      h(v) = \begin{cases}
        i & \mbox{With probability $\frac{1}{\omega2^i}$ for each $i \in \lbrace0,\dots,\left\lceil\log \frac{\sqrt k}{\omega}\right\rceil\rbrace$.}\\
        -\infty & \mbox{Otherwise.}
      \end{cases}
    \end{align*}

    And let $r_C : V \rightarrow \lbrack 0,1\rbrack$ be a uniformly random hash function.

    Then, for each $v\in V_H = V\setminus V_L$, we allow $v$ to contribute to $\overline{T_H}$ iff $h(v) = \mathcal{T}_v$. If it does, we calculate its contribution $\overline{T_v}$ as follows:

    We count a triangle $t = (u,v,w)$ iff:

    \begin{align*}
      c(u) & = 1\\
      c(w) & = 1\\
      r_C(vu) & < \frac{\omega 2^{\mathcal{T}_v}}{\sqrt k}\\
      r_C(vw) & < \frac{\omega 2^{\mathcal{T}_v}}{\sqrt k}
    \end{align*}

    Then, for each such triangle, we add $\frac{k^2}{\omega 2^{\mathcal{T}_v}} \times \frac{1}{|\lbrace x \in \lbrace u,v,w\rbrace | \mathcal{T}_x \not = \lo\rbrace |}$ to $\overline{T_v}$. (with the second term then compensating for the fact that a triangle could potentially be counted at multiple different vertices) 

    We then define $\overline{T_H} = \sum_{v \in V_H,\ h(v) = \mathcal{T}_v} \overline{T_v}$.

    \subsection{Final Output}
    We then output our estimate $\overline{T}$ of $T$ as $\overline{T} = \overline{T_L} + \overline{T_H}$.

    \section{Analysis}
    \subsection{First Pass}
    The following lemmas will bound the deviation of $\mathcal{T}_v$ from its desired value---$\log \frac{\wt{T}}{\omega T_v}$ if $T_v \geq \frac{\wt{T}}{\sqrt{k}}$, and $\lo$ otherwise.

    \begin{lemma} 
      \label{xvariance}
      For any $v\in V(G)$, and for $i \in \lbrack 2\rbrack$, let $X_{v,i}^{(h)}$ be as defined previously. Then, $\mathbb{E}\left\lbrack X_{v,i}^{(h)} \right\rbrack = T_v \frac{\omega^22^{2h}}{k^2}$ and $\var\left(X_{v,i}^{(h)}\right) \leq T_v \frac{\omega^22^{2h}}{k^2} + \sum_w T_{vw}^2 \frac{\omega^32^{3h}}{k^3}$.
    \end{lemma}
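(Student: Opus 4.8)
The plan is to compute the first moment exactly and then bound the second moment by decomposing the covariance according to how pairs of triangles through $v$ overlap. Recall $X_{v,i}^{(h)} = \sum_{t \in \tau(v)} X_{v,i}^{(h,t)}$ where $X_{v,i}^{(h,t)}$ is the indicator of the event $E_t$ that the two ``other'' vertices $u,w$ of $t$ are hashed to $1$ under $d_i$ and both edges $vu,vw$ land below the threshold $\omega 2^h/\sqrt{k}$ under $r_{D,i}$. The four conditions are independent (two distinct vertex-hash coordinates, two distinct edge-hash coordinates), so $\Pr[E_t] = \frac{1}{\sqrt{k}} \cdot \frac{1}{\sqrt{k}} \cdot \frac{\omega 2^h}{\sqrt{k}} \cdot \frac{\omega 2^h}{\sqrt{k}} = \frac{\omega^2 2^{2h}}{k^2}$. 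Linearity of expectation over the $T_v$ triangles in $\tau(v)$ gives $\E[X_{v,i}^{(h)}] = T_v \frac{\omega^2 2^{2h}}{k^2}$, as claimed.

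For the variance I would write $\var(X_{v,i}^{(h)}) = \sum_{t,t' \in \tau(v)} \operatorname{Cov}(X_{v,i}^{(h,t)}, X_{v,i}^{(h,t')})$ and split the sum by the number of vertices (equivalently edges at $v$) shared by $t$ and $t'$. Two triangles through $v$ share either just the vertex $v$, or $v$ together with one more vertex (hence one edge at $v$), or they are equal. The cross terms with $t \neq t'$ sharing only $v$: here the defining events involve four disjoint hash coordinates for $t$ and four disjoint ones for $t'$, so $X^{(h,t)}$ and $X^{(h,t')}$ are independent and the covariance is $0$. The diagonal terms $t = t'$ contribute $\sum_t \var(X^{(h,t)}) \le \sum_t \E[X^{(h,t)}] = T_v \frac{\omega^2 2^{2h}}{k^2}$, the first term of the bound. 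The remaining terms are ordered pairs $(t,t')$ with $t \neq t'$ sharing exactly one edge $vw$ at $v$: then the event for $t$ uses $d_i(w)$, $r_{D,i}(vw)$, plus one more vertex-hash and one more edge-hash, and similarly for $t'$ with the shared coordinates $d_i(w), r_{D,i}(vw)$ reused. So $\Pr[E_t \cap E_{t'}]$ is a product over the shared coordinates $\frac{1}{\sqrt{k}} \cdot \frac{\omega 2^h}{\sqrt{k}}$ times the two private coordinate products $\frac{1}{\sqrt{k}} \cdot \frac{\omega 2^h}{\sqrt{k}}$ each, i.e.\ $\frac{\omega^3 2^{3h}}{k^3}$, and we bound $\operatorname{Cov} \le \Pr[E_t \cap E_{t'}] = \frac{\omega^3 2^{3h}}{k^3}$. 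The number of such ordered pairs sharing the edge $vw$ is at most $T_{vw}^2$ (at most $T_{vw}$ choices for each of $t,t'$ among triangles on edge $vw$), so summing over $w$ adjacent to $v$ gives $\sum_w T_{vw}^2 \frac{\omega^3 2^{3h}}{k^3}$, the second term.

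Adding the three contributions yields $\var(X_{v,i}^{(h)}) \le T_v \frac{\omega^2 2^{2h}}{k^2} + \sum_w T_{vw}^2 \frac{\omega^3 2^{3h}}{k^3}$. The only mild subtlety — and the step I would be most careful with — is the bookkeeping on the ``one shared edge'' case: making sure the pair count is genuinely $\sum_w T_{vw}^2$ rather than something larger (one must note that a pair $t \neq t'$ sharing a second edge at $v$ would force $t = t'$, so the cases are disjoint and each qualifying pair is counted under exactly one $w$), and that the probability factorization uses $\le$ rather than $=$ when passing to covariance (dropping the subtracted $\E[X^{(h,t)}]\E[X^{(h,t')}] \ge 0$ term). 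Everything else is a direct independence computation over the coordinates of $d_i$ and $r_{D,i}$.
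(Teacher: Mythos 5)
Your proof is correct and follows essentially the same route as the paper: both compute the expectation by independence of the four hash conditions and bound the variance by decomposing pairs of triangles through $v$ according to how many vertices they share, with the diagonal giving the $T_v\frac{\omega^2 2^{2h}}{k^2}$ term and the shared-edge pairs giving the $\sum_w T_{vw}^2\frac{\omega^3 2^{3h}}{k^3}$ term. The only cosmetic difference is that you use the covariance decomposition (so vertex-disjoint pairs contribute exactly zero by independence), while the paper bounds $\mathbb{E}\bigl[(X_{v,i}^{(h)})^2\bigr]$ and lets the at most $T_v^2$ such pairs cancel against the subtracted $\mathbb{E}\bigl[X_{v,i}^{(h)}\bigr]^2$; these are equivalent.
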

    \begin{proof}
      For each triangle $t = (u,v,w)$, $X_{v,i}^{(h,t)} = 1$ iff $d_i(u) = d_i(w) = 1$ and $r_{D,i}(vu), r_{D,i}(vw) < \frac{\omega2^h}{\sqrt k}$, which happens with probability $\frac{\omega^22^{2h}}{k^2}$. $X_{v,i}^{(h)} = \sum_{t \in \tau(v)} X_{v,i}^{(h,t)}$, so $\mathbb{E}\left\lbrack X_{v,i}^{(h)}\right\rbrack = T_v \frac{\omega^22^{2h}}{k^2}$.

      Then, to bound the variance, we start by bounding $\mathbb{E}\left\lbrack \left(X_{v,i}^{(h)}\right)^2\right\rbrack = \sum_{t_1,t_2 \in \tau(v)}\mathbb{E}\left\lbrack X_{v,t_1}^{(h)}X_{v,t_2}^{(h)}\right\rbrack$. We split the terms $\mathbb{E}\left\lbrack X_{v,t_1}^{(h)}X_{v,t_2}^{(h)}\right\rbrack$ by the value of $l = |V(t_1) \cap V(t_2)|$. (noting that $l \in \lbrace 1,2,3\rbrace$)

      Then, $X_{v,t_1}^{(h)}X_{v,t_2}^{(h)} = 1$ implies that, for the $5 - l$ vertices  $u \in V(t_1) \cup V(t_2) \setminus \lbrace v\rbrace$, $d(u) =1$, and for the $5 - l$ edges $e \in \left\lbrace vu \middle| u \in V(t_1) \cup V(t_2) \setminus \lbrace v\rbrace \right\rbrace$, $r_D(e) < \frac{\omega 2^h}{\sqrt k}$. So $\mathbb{E}\left\lbrack X_{v,a_1}^{(h)}X_{v,a_2}^{(h)}\right\rbrack \leq \left(\frac{\omega2^h}{k}\right)^{5 - l}$. 

      There are no more than $T_v^2$ such pairs for $l = 1$, no more than $\sum_w T_{vw}^2$ for $l = 2$, and exactly $T_v$ for $l = 3$, which gives us:

      \begin{align*}
        \var\left(X_{v,i}^{(h)}\right) & = \mathbb{E}\left\lbrack \left(X_{v,i}^{(h)}\right)^2\right\rbrack - \mathbb{E}\left\lbrack \left(X_{v,i}^{(h)}\right)\right\rbrack^2\\
        & \leq  T_v^2 \frac{\omega^42^{4h}}{k^4} + \sum_w T_{vw}^2 \frac{\omega^32^{3h}}{k^3} + T_v \frac{\omega^22^{2h}}{k^2} - \left(T_v \frac{\omega^22^{2h}}{k^2}\right)^2\\
        & =  T_v \frac{\omega^22^{2h}}{k^2} + \sum_w T_{vw}^2 \frac{\omega^32^{3h}}{k^3}
      \end{align*}

    \end{proof}

    \begin{corollary}
      \label{xdeviance}
      $\mathbb{P}\left\lbrack X_{v,i}^{(h)} \leq \frac{T_v}{2}\frac{\omega^22^{2h}}{k^2}\right\rbrack \lesssim \frac{k^2}{\omega^2 2^{2h}T_v} + \sum_w \frac{T_{vw}^2}{T_v^2} \frac{k}{\omega2^h}$.
    \end{corollary}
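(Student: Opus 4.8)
This is a one-step consequence of Lemma~\ref{xvariance} via Chebyshev's inequality. Write $\mu = \mathbb{E}[X_{v,i}^{(h)}] = T_v \frac{\omega^2 2^{2h}}{k^2}$, so that the event $\{X_{v,i}^{(h)} \leq \tfrac{T_v}{2}\frac{\omega^2 2^{2h}}{k^2}\}$ is exactly $\{X_{v,i}^{(h)} \leq \mu/2\}$, which is contained in $\{|X_{v,i}^{(h)} - \mu| \geq \mu/2\}$. Chebyshev then gives
\[
  \mathbb{P}\!\left[X_{v,i}^{(h)} \leq \tfrac{\mu}{2}\right] \;\leq\; \frac{\var\!\left(X_{v,i}^{(h)}\right)}{(\mu/2)^2} \;=\; \frac{4\var\!\left(X_{v,i}^{(h)}\right)}{\mu^2}.
\]

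Next I would plug in the variance bound $\var(X_{v,i}^{(h)}) \leq T_v \frac{\omega^2 2^{2h}}{k^2} + \sum_w T_{vw}^2 \frac{\omega^3 2^{3h}}{k^3}$ from Lemma~\ref{xvariance} and divide by $\mu^2 = T_v^2 \frac{\omega^4 2^{4h}}{k^4}$ term by term. The first term yields $\frac{k^2}{\omega^2 2^{2h} T_v}$ and the second yields $\sum_w \frac{T_{vw}^2}{T_v^2}\cdot\frac{k}{\omega 2^h}$; the overall factor of $4$ is absorbed into the $\lesssim$. That is the whole argument.

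There is no real obstacle here: the only thing to be careful about is keeping track of the powers of $\omega$, $2^h$, and $k$ when dividing the variance by $\mu^2$, and noting that the bad event is a lower-tail event so that the (one-sided) deviation $\mu/2$ is what feeds into Chebyshev. If desired one could instead invoke the one-sided Chebyshev (Cantelli) bound to shave the constant, but since the statement is only up to $\lesssim$ this is unnecessary.
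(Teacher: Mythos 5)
Your proposal is correct and is essentially the paper's own argument: the paper likewise observes that the event $X_{v,i}^{(h)} \leq \frac{T_v}{2}\frac{\omega^2 2^{2h}}{k^2}$ forces a deviation of at least half the mean and then applies Chebyshev's inequality with the variance bound from Lemma~\ref{xvariance}, giving the bound $\var(X_{v,i}^{(h)})/\mathbb{E}[X_{v,i}^{(h)}]^2$. Your term-by-term division recovering $\frac{k^2}{\omega^2 2^{2h} T_v} + \sum_w \frac{T_{vw}^2}{T_v^2}\frac{k}{\omega 2^h}$ is exactly right, so there is nothing to add.
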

    \begin{proof}
      $X_{v,i}^{(h)} \leq \frac{T_v}{2}\frac{\omega^22^{2h}}{k^2}$ implies that $\left| X_{v,i}^{(h)} - \mathbb{E}\left\lbrack X_{v,i}^{(h)}\right\rbrack \right| \geq \frac{1}{2}\mathbb{E}\left\lbrack X_{v,i}^{(h)} \right\rbrack$, and so by Chebyshev's inequality it occurs with probability $\lesssim \frac{\var\left(X_{v,i}^{(h)}\right)}{\mathbb{E}\left\lbrack X_{v,i}^{(h)} \right\rbrack^2}$.
    \end{proof}

    \begin{lemma}
      \label{loerror}
      If $T_v \geq 2\frac{\wt{T}}{\sqrt{k}}$, then $\mathbb{P}\left\lbrack \mathcal{T}_v = \lo \right\rbrack \lesssim \frac{k}{T_v} + \sum_w T_{vw}^2 \frac{\sqrt k}{T_v^2}$.

    \end{lemma}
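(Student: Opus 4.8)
The plan is to show that if $T_v \geq 2\frac{\wt{T}}{\sqrt{k}}$, then with high probability there is \emph{some} scale $h$ in the allowed range at which $X_{v,i}^{(h)}$ exceeds the threshold $\frac{\omega \wt{T}2^h}{k^2}$ for \emph{both} $i=1$ and $i=2$, so that $\mathcal{T}_v \neq \lo$. The natural candidate scale is the one for which $\omega 2^h$ is closest to (and at least) $\frac{\wt{T}}{T_v}$: since $T_v \geq 2\frac{\wt{T}}{\sqrt{k}}$ we have $\frac{\wt{T}}{T_v} \leq \frac{\sqrt{k}}{2} \leq \sqrt{k}$, and since $\omega$ is the minimum weight, there is indeed an $h^\star \in \{0,\dots,\lceil \log\frac{\sqrt{k}}{\omega}\rceil\}$ with $\frac{\wt{T}}{T_v} \leq \omega 2^{h^\star} \leq \frac{2\wt{T}}{T_v}$ (or $h^\star = 0$ if $\frac{\wt{T}}{T_v} \le \omega$, in which case $\omega 2^{h^\star} = \omega \geq \frac{\wt{T}}{T_v}$ still holds). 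At this scale, by Lemma~\ref{xvariance}, $\E[X_{v,i}^{(h^\star)}] = T_v\frac{\omega^2 2^{2h^\star}}{k^2} \geq \frac{\wt{T}}{T_v}\cdot\frac{\omega 2^{h^\star} \wt{T}}{k^2}\cdot\frac{T_v}{\wt{T}}$; more simply, using $\omega 2^{h^\star}\geq \frac{\wt{T}}{T_v}$ twice gives $\E[X_{v,i}^{(h^\star)}] = T_v \frac{(\omega 2^{h^\star})^2}{k^2} \geq \frac{\wt T}{T_v}\cdot\frac{\omega 2^{h^\star}\wt T}{k^2}$, which is $\geq \frac{\omega\wt{T}2^{h^\star}}{k^2}$ — no wait, that needs $\frac{\wt T}{T_v}\ge 1$, which need not hold. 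So instead I will use both bounds $\frac{\wt T}{T_v} \le \omega 2^{h^\star} \le \frac{2\wt T}{T_v}$ to write $\E[X_{v,i}^{(h^\star)}] = T_v\frac{(\omega 2^{h^\star})^2}{k^2} \ge T_v \cdot \frac{\wt T}{T_v}\cdot\frac{\omega 2^{h^\star}}{k^2} = \frac{\omega 2^{h^\star}\wt T}{k^2}$, i.e.\ the expectation is at least the threshold. Good — that is the correct computation, and it only uses $\omega 2^{h^\star} \ge \frac{\wt T}{T_v}$.

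Having established $\E[X_{v,i}^{(h^\star)}] \geq \frac{\omega\wt{T}2^{h^\star}}{k^2}$, it suffices that $X_{v,i}^{(h^\star)} > \frac{1}{2}\E[X_{v,i}^{(h^\star)}]$ would \emph{not} be enough on its own — I need $X_{v,i}^{(h^\star)} \ge \frac{\omega\wt T 2^{h^\star}}{k^2}$, i.e.\ $X_{v,i}^{(h^\star)}$ at least the threshold. Since the threshold is $\le \E[X_{v,i}^{(h^\star)}]$ but could be as small as half of it (using the upper bound $\omega 2^{h^\star}\le \frac{2\wt T}{T_v}$, the threshold $\frac{\omega 2^{h^\star}\wt T}{k^2} \ge \frac{T_v}{2}\cdot\frac{(\omega 2^{h^\star})^2}{k^2}\cdot\frac{1}{1} $... let me instead just note the threshold lies in $[\tfrac12\E, \E]$ of $X_{v,i}^{(h^\star)}$ by combining the two inequalities on $\omega 2^{h^\star}$), the event $X_{v,i}^{(h^\star)} < \text{threshold}$ is contained in $\{X_{v,i}^{(h^\star)} \le \frac{T_v}{2}\frac{\omega^2 2^{2h^\star}}{k^2}\}$, which is exactly the event bounded in Corollary~\ref{xdeviance}. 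Plugging $h = h^\star$ and $\omega 2^{h^\star} \eqsim \frac{\wt T}{T_v}$ into that corollary's bound $\frac{k^2}{\omega^2 2^{2h^\star}T_v} + \sum_w \frac{T_{vw}^2}{T_v^2}\frac{k}{\omega 2^{h^\star}}$ gives $\lesssim \frac{k^2}{(\wt T/T_v)^2 T_v} + \sum_w\frac{T_{vw}^2}{T_v^2}\cdot\frac{k T_v}{\wt T} = \frac{k^2 T_v}{\wt T^2} + \sum_w T_{vw}^2\frac{k}{T_v \wt T}$. Using $T_v \ge 2\wt T/\sqrt k$, i.e. $\wt T \le \tfrac12 \sqrt k T_v$, bounds the first term by $\frac{k^2 T_v}{\wt T \cdot \tfrac12\sqrt k T_v} = \frac{2 k^{3/2}}{\wt T}$ — hmm, that's not obviously $\lesssim \frac{k}{T_v}$. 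Let me reconsider: we want the final bound $\frac{k}{T_v} + \sum_w T_{vw}^2\frac{\sqrt k}{T_v^2}$. Using $\wt T \le \tfrac12\sqrt k T_v$ on $\frac{k^2 T_v}{\wt T^2}$: replacing one factor of $\wt T$ in the denominator is wrong direction; I should replace using $\wt T$ from below is not available. Instead, I suspect the intended reading is that $\wt T$ should be compared via $\frac{\wt T}{T_v}\ge$ nothing, but rather the corollary is applied and then $\omega 2^{h^\star}\ge \frac{\wt T}{T_v}$ combined with $T_v\ge 2\wt T/\sqrt k \Leftrightarrow \frac{\wt T}{T_v}\le \frac{\sqrt k}{2}$ so $\omega 2^{h^\star}$ is at most $\sqrt k$ and at least... we also have $\omega 2^{h^\star}\le \frac{2\wt T}{T_v}$, giving $\frac{k^2}{\omega^2 2^{2h^\star}T_v}\le \frac{k^2}{(\wt T/T_v)^2 T_v}$. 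To match $\frac{k}{T_v}$ I then need $\frac{k^2}{(\wt T/T_v)^2 T_v}\lesssim \frac{k}{T_v}$, i.e.\ $(\wt T/T_v)^2 \gtrsim k$, which is false. So the correct move must be to not use $h^\star$ with $\omega 2^{h^\star}\approx \wt T/T_v$ but rather to pick the \emph{largest} allowed $h$, namely $h = \lceil\log\frac{\sqrt k}{\omega}\rceil$ so that $\omega 2^h \eqsim \sqrt k$ — at this scale $\E[X_{v,i}^{(h)}] = T_v\frac{k}{k^2} = \frac{T_v}{k}$ and the threshold is $\frac{\omega\wt T 2^h}{k^2} \eqsim \frac{\sqrt k\wt T}{k^2} = \frac{\wt T}{k^{3/2}} \le \frac{T_v}{2k}$ using $\wt T\le \tfrac12\sqrt k T_v$. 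Now Corollary~\ref{xdeviance} at this $h$ gives exactly $\lesssim \frac{k^2}{k T_v} + \sum_w\frac{T_{vw}^2}{T_v^2}\frac{k}{\sqrt k} = \frac{k}{T_v} + \sum_w T_{vw}^2\frac{\sqrt k}{T_v^2}$, which is the claimed bound. So the right scale is the \emph{top} scale, not the matching one.

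Thus the argument is: (1) set $h_{\max} = \lceil\log\frac{\sqrt k}{\omega}\rceil$, so $\omega 2^{h_{\max}} \in [\sqrt k, 2\sqrt k)$; (2) note $\E[X_{v,i}^{(h_{\max})}] = T_v(\omega 2^{h_{\max}})^2/k^2 \ge T_v k/k^2 = T_v/k$, while the threshold $\frac{\omega\wt T 2^{h_{\max}}}{k^2} \le \frac{2\sqrt k\,\wt T}{k^2} = \frac{2\wt T}{k^{3/2}} \le \frac{T_v}{k^{} }$ — precisely, $\le \tfrac12\E[X_{v,i}^{(h_{\max})}]\cdot(\text{const})$ after using $T_v \ge 2\wt T/\sqrt k$ so that the threshold $\le \frac{T_v}{2k}\cdot\frac{\omega 2^{h_{\max}}}{\sqrt k}\le \frac{T_v}{k}\le \frac{T_v}{2}\frac{(\omega 2^{h_{\max}})^2}{k^2}$; (3) hence $\{h_{\max}\notin H_{v,i}\} \subseteq \{X_{v,i}^{(h_{\max})} \le \frac{T_v}{2}\frac{\omega^2 2^{2h_{\max}}}{k^2}\}$, which by Corollary~\ref{xdeviance} has probability $\lesssim \frac{k^2}{\omega^2 2^{2h_{\max}}T_v} + \sum_w\frac{T_{vw}^2}{T_v^2}\frac{k}{\omega 2^{h_{\max}}} \lesssim \frac{k}{T_v} + \sum_w T_{vw}^2\frac{\sqrt k}{T_v^2}$; (4) if $h_{\max}\in H_{v,i}$ then $H_{v,i}\neq\emptyset$ so $\mathcal T_{v,i}\neq\lo$, and $\mathcal T_v = \lo$ requires \emph{both} $\mathcal T_{v,1}=\lo$ and $\mathcal T_{v,2}=\lo$, whose probability is therefore at most (and in fact we can just use a union bound, or independence, to get) $\lesssim \frac{k}{T_v} + \sum_w T_{vw}^2\frac{\sqrt k}{T_v^2}$. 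The main obstacle — and the one subtlety worth flagging — is getting the constants to line up in step (2): one must use the hypothesis $T_v \ge 2\wt T/\sqrt k$ (with the factor $2$) precisely to guarantee that the threshold sits below $\tfrac12\E[X_{v,i}^{(h_{\max})}]$ so that Corollary~\ref{xdeviance} applies directly rather than with a worse constant in the Chebyshev step; everything else is bookkeeping.
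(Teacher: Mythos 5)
Your final argument is correct and is essentially the paper's proof: use the top scale $h=\left\lceil\log\frac{\sqrt k}{\omega}\right\rceil$, note that $\mathcal{T}_v=\lo$ forces $X_{v,i}^{(h)}$ below the threshold, which by the hypothesis $T_v\geq 2\wt{T}/\sqrt k$ (and $\omega 2^{h}\geq\sqrt k$) lies below half the expectation, and then apply Corollary~\ref{xdeviance} at that scale to get $\lesssim \frac{k}{T_v}+\sum_w T_{vw}^2\frac{\sqrt k}{T_v^2}$. The exploratory detour through the ``matching'' scale and the slight constant slippage in your step (2) are harmless, since the clean chain $\frac{\omega 2^{h}\wt T}{k^2}\leq\frac{T_v}{2}\frac{(\omega 2^{h})^2}{k^2}$ follows directly from the hypothesis.
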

    \begin{proof}
      $\mathcal{T}_v = \lo$ implies that $X_{v,i}^{(h)} < \frac{\omega 2^h\wt{T}}{k^2} $ for all $h \in \left\lbrace 1,\dots,\left\lceil \log \frac{\sqrt k}{\omega}\right\rceil\right\rbrace$ and $i \in \lbrack 2 \rbrack$, and so in particular $X_{v,1}^{\left(\left\lceil\log \frac{\sqrt k}{\omega}\right\rceil\right)} < \frac{\wt{T}}{k^\frac{3}{2}} \leq \frac{T_v}{2k} \leq \frac{T_v}{2}\frac{\omega^2 2^{2\left\lceil\log \frac{\sqrt k}{\omega}\right\rceil}}{k^2}$.

      So the result follows by using Corollary \ref{xdeviance} with $h = \left\lceil\log \frac{\sqrt k}{\omega}\right\rceil$.
    \end{proof}

    \define{hierror}{Lemma}{%
      $\forall v \in V, \mathbb{P}\left\lbrack \mathcal{T}_v \not = \lo \right\rbrack \lesssim \frac{\sqrt{k} T_v}{\wt{T}}$.
    }
    \state{hierror}

    \begin{proof}
      $\mathcal{T}_v \not = \lo$ implies that there is at least one $h\in  \left\lbrace 1,\dots,\left\lceil \log \frac{\sqrt k}{\omega}\right\rceil\right\rbrace$ and $i \in \lbrack 2 \rbrack$ such that $X_{v,i}^{(h)} \geq \frac{\omega 2^h\wt{T}}{k^2}$. By Lemma \ref{xvariance}, $\mathbb{E}\left\lbrack X_{v,i}^{(h)} \right\rbrack = T_v \frac{\omega^22^{2h}}{k^2}$, so by Markov's inequality this occurs with probability $\leq \frac{T_v}{\wt{T}}\omega2^h$. So the probability that it holds for any $h, i$ is $\leq 2\sum_{h=\left\lceil \log \frac{\sqrt k}{\omega}\right\rceil}^1 \frac{T_v}{\wt{T}}\omega2^h \lesssim \sum_{i=0}^\infty\frac{T_v\sqrt{k}}{\wt{T}} 2^{-i}\lesssim \frac{\sqrt{k} T_v}{\wt{T}}$.
    \end{proof}

    \begin{lemma}
      \label{derror}
      $\forall v \in V, l \geq 2, \mathbb{P}\left\lbrack \mathcal{T}_v > \left\lceil\log \frac{\wt{T}}{\omega T_v}\right\rceil + l\right\rbrack \lesssim \left(\frac{k^2T_v}{2^{2l} \left(\wt{T}\right)^2} + \sum_w T_{vw}^2 \frac{k}{2^{l}\wt{T}T_v}\right)^2$.
    \end{lemma}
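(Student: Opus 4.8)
The plan is to reduce to one of the two independent runs and then apply Chebyshev's inequality at a single well-chosen scale. Since $\mathcal{T}_v = \min(\mathcal{T}_{v,1},\mathcal{T}_{v,2})$ with $\lo$ read as $+\infty$, and the two runs $i=1,2$ use independent hash functions $(r_{D,i},d_i)$ and are identically distributed, we have $\Pr[\mathcal{T}_v > x] = \Pr[\mathcal{T}_{v,1} > x]^2$ for every integer $x$. So it is enough to show, for a single run, that $\Pr[\mathcal{T}_{v,i} > \lceil\log\tfrac{\wt T}{\omega T_v}\rceil + l] \lesssim \frac{k^2 T_v}{2^{2l}\wt T^2} + \sum_w T_{vw}^2 \frac{k}{2^l \wt T T_v}$, and then square.

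Set $h^* = \lceil\log\tfrac{\wt T}{\omega T_v}\rceil + l$ and first assume $1 \le h^* \le \lceil\log\tfrac{\sqrt k}{\omega}\rceil$, so $h^*$ is a legal scale. Since $\mathcal{T}_{v,i}$ is the least element of $H_{v,i}$, the event $\mathcal{T}_{v,i} > h^*$ forces $h^* \notin H_{v,i}$, i.e.\ $X_{v,i}^{(h^*)} < \frac{\omega\wt T 2^{h^*}}{k^2}$. By Lemma~\ref{xvariance}, $\E[X_{v,i}^{(h^*)}] = T_v\frac{\omega^2 2^{2h^*}}{k^2}$, and because $2^{h^*} \ge 2^l\frac{\wt T}{\omega T_v}$ we get $\frac{\omega\wt T 2^{h^*}}{k^2} \le 2^{-l}\E[X_{v,i}^{(h^*)}] \le \tfrac12 \E[X_{v,i}^{(h^*)}]$, using $l \ge 2$. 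Hence $\mathcal{T}_{v,i} > h^*$ implies $X_{v,i}^{(h^*)} \le \tfrac12\E[X_{v,i}^{(h^*)}] = \frac{T_v}{2}\frac{\omega^2 2^{2h^*}}{k^2}$, which is the hypothesis of Corollary~\ref{xdeviance} at $h = h^*$. That corollary then bounds the probability by $\frac{k^2}{\omega^2 2^{2h^*}T_v} + \sum_w \frac{T_{vw}^2}{T_v^2}\frac{k}{\omega 2^{h^*}}$; both terms are decreasing in $h^*$, so substituting the lower bound $2^{h^*} \ge 2^l\frac{\wt T}{\omega T_v}$ turns the first into $\frac{k^2 T_v}{2^{2l}\wt T^2}$ and the second into $\sum_w T_{vw}^2\frac{k}{2^l\wt T T_v}$. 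Squaring (using the reduction from the first paragraph) yields the claim.

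It remains to dispatch the cases where $h^*$ is not a legal scale. If $h^* > \lceil\log\tfrac{\sqrt k}{\omega}\rceil$, then $\mathcal{T}_v > h^*$ can only happen through $\mathcal{T}_v = \lo$, and one reruns the argument at the top scale $h = \lceil\log\tfrac{\sqrt k}{\omega}\rceil$ (this is essentially the computation proving Lemma~\ref{loerror}) and checks, using that $2^l\wt T/T_v \gtrsim \sqrt k$ in this regime, that the resulting bound is absorbed into $\frac{k^2 T_v}{2^{2l}\wt T^2} + \sum_w T_{vw}^2\frac{k}{2^l\wt T T_v}$. If $h^* < 1$, then $\lceil\log\tfrac{\wt T}{\omega T_v}\rceil + l \le 0$, so $\mathcal{T}_v > h^*$ is certain; this forces $T_v$ to be far larger than $\wt T$, and there one verifies the right-hand side is already $\Omega(1)$. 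Given Lemma~\ref{xvariance} and Corollary~\ref{xdeviance}, the core computation of the second paragraph is routine; I expect the main obstacle to be purely the bookkeeping around which scale $h^*$ is legal and handling the two boundary regimes cleanly.
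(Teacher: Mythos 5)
Your core argument is exactly the paper's proof: the event forces $X_{v,i}^{(h^*)}$ (with $h^*=\lceil\log\frac{\wt{T}}{\omega T_v}\rceil+l$) below half its mean for both independent runs (this is where $l\ge 2$ enters), Corollary~\ref{xdeviance} at $h=h^*$ gives the single-run bound, and substituting $2^{h^*}\ge 2^{l}\wt{T}/(\omega T_v)$ and squaring by the independence of the two runs yields the claim. The only divergence is your third paragraph on out-of-range scales, which the paper's proof does not treat at all (it implicitly works at the legal scale $h^*$, the only case needed when the lemma is invoked in Lemma~\ref{thvar}); your sketched ``absorption'' and ``RHS is $\Omega(1)$'' claims there are not obviously correct and would need verification, but they are extraneous to the argument the paper actually makes.
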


    \begin{proof}
      $\mathcal{T}_v > \left\lceil\log \frac{\wt{T}}{\omega T_v}\right\rceil + l$ implies that $\forall i\in\lbrack 2\rbrack, X_{v,i}^{\left(\left\lceil\log \frac{\wt{T}}{\omega T_v}\right\rceil + l\right)} < \frac{2^l \left(\wt{T}\right)^2}{k^2T_v}$. By applying Corollary \ref{xdeviance} with $h = \left\lceil\log \frac{\wt{T}}{\omega T_v}\right\rceil + l$, and using the independence of $X_{v,1}^{(h)}, X_{v,2}^{(h)}$, this happens with probability $\lesssim \left(\frac{k^2T_v}{2^{2l} \left(\wt{T}\right)^2} + \sum_w T_{vw}^2 \frac{k}{2^{l}\wt{T}T_v}\right)^2$.
    \end{proof}

    \subsection{Second Pass}
    Making use of the lemmas from the previous section, we bound the error of our estimators $\overline{T_L}$ and $\overline{T_H}$, and thus of our final estimator $\overline{T}$.
    \subsubsection{\texorpdfstring{$\overline{T_L}$}{T\_L}}
    \begin{lemma}
      \label{tlexpec}
      $\mathbb{E}\left\lbrack \overline{T_L} | T_L \right\rbrack = T_L$.
    \end{lemma}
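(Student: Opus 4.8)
The plan is to condition on all the randomness used in the first pass --- the hash functions $d_1,d_2$ and $r_{D,1},r_{D,2}$ --- which together determine every $\mathcal{T}_v$, hence the set $V_L$, the induced subgraph $G_L$, and the quantity $T_L$. Write $\mathcal{F}$ for the $\sigma$-algebra generated by this first-pass randomness; note that $T_L$ is $\mathcal{F}$-measurable. The hash function $c$ used to form $V_L'$ is drawn independently of $\mathcal{F}$, so conditioned on $\mathcal{F}$ the graph $G_L$ is fixed and $V_L'$ is obtained from $V(G_L)$ by retaining each vertex independently with probability $k^{-1/2}$; correspondingly $G_L'$ is the subgraph of $G_L$ induced by $V_L'$.

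Next I would apply linearity of expectation over the triangles of $G_L$. A triangle $t = (u,v,w)$ of $G_L$ survives into $G_L'$ exactly when $c(u)=c(v)=c(w)=1$; since $u,v,w$ are distinct and $c$ is independent across vertices, this event has probability $(k^{-1/2})^3 = k^{-3/2}$. Summing over the $T_L$ triangles of $G_L$, the conditional expectation of the number of triangles in $G_L'$ is $T_L\, k^{-3/2}$, and multiplying by the normalization $k^{3/2}$ gives $\mathbb{E}\left[\overline{T_L} \mid \mathcal{F}\right] = T_L$.

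Finally, since this identity holds pointwise for every outcome of $\mathcal{F}$, and $T_L$ is a function of $\mathcal{F}$, the tower property yields $\mathbb{E}\left[\overline{T_L} \mid T_L\right] = \mathbb{E}\left[\,\mathbb{E}\left[\overline{T_L}\mid\mathcal{F}\right] \mid T_L\right] = \mathbb{E}\left[T_L \mid T_L\right] = T_L$, which is the claim.

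There is essentially no hard step here; the only point that needs care is the conditioning. One must condition on enough first-pass randomness to fix $G_L$ (not merely on the scalar $T_L$), observe that the second-pass hash $c$ is independent of that randomness, and only then return to conditioning on $T_L$ alone via the tower property. Everything else is just the third-moment survival probability of a fixed triangle under independent vertex subsampling.
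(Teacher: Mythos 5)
Your proposal is correct and follows essentially the same route as the paper: condition so that $G_L$ (hence $T_L$) is fixed, note that $c$ is fresh second-pass randomness so each triangle of $G_L$ survives into $G_L'$ with probability $k^{-3/2}$, apply linearity over triangles, and rescale by $k^{3/2}$. The only difference is that you make the conditioning on the full first-pass randomness and the subsequent tower-property step explicit, which the paper leaves implicit; this is a presentational refinement, not a different argument.
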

    \begin{proof}
      $T_L$ is the number of triangles in the random graph $G_L \subseteq G$, and $\overline{T_L}$ is $k^\frac{3}{2}$ times the number of triangles in the random graph $G_L' \subseteq G_L$. Any triangle $t = (u,v,w)$ in $G_L$ is in $G_L'$ iff $c(u) = c(v) = c(w) = 1$, which occurs with probability $\frac{1}{k^\frac{3}{2}}$. So the expected number of triangles in $G_L'$ is $\frac{T_L}{k^\frac{3}{2}}$, and so $\mathbb{E}\left\lbrack \overline{T_L} | T_L \right\rbrack = T_L$.
    \end{proof}
    \begin{lemma}
      \label{tlvar}
      $\var(\overline{T_L}) \lesssim Tk^\frac{3}{2} + kT_E + T\wt{T}$.
    \end{lemma}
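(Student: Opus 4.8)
The plan is to condition on the first pass. Let $\mathcal{F}$ be the $\sigma$-algebra generated by the first-pass hashes $d_i, r_{D,i}$, so that $V_L$, $G_L$ and $T_L$ are $\mathcal{F}$-measurable. By the law of total variance,
\[
\var(\overline{T_L}) = \E\!\left[\var(\overline{T_L}\mid\mathcal{F})\right] + \var\!\left(\E[\overline{T_L}\mid\mathcal{F}]\right),
\]
and the proof of Lemma~\ref{tlexpec} in fact shows $\E[\overline{T_L}\mid\mathcal{F}] = T_L$, so the second term is $\var(T_L)$. I would bound the two pieces separately.

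For $\E[\var(\overline{T_L}\mid\mathcal{F})]$: fixing $\mathcal{F}$, $\overline{T_L}$ is $k^{3/2}$ times the triangle count of a rate-$1/\sqrt{k}$ vertex subsample (by $c$) of the fixed graph $G_L$. The standard second-moment calculation for vertex sampling---expand over ordered triangle pairs $(t_1,t_2)$ of $G_L$ and split by $l = |V(t_1)\cap V(t_2)|$, noting the $l=0$ terms vanish and the $l=1,2,3$ terms have survival probabilities $k^{-5/2}, k^{-2}, k^{-3/2}$---gives $\var(\overline{T_L}\mid\mathcal{F}) \le k^{1/2}\sum_{v\in V_L}(T_v^{G_L})^2 + k\sum_e (T_e^{G_L})^2 + k^{3/2}T_L$, the last two being $\le kT_E$ and $\le k^{3/2}T$. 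For the first term I use $T_v^{G_L}\le T_v$ and take expectations, getting $k^{1/2}\sum_v\Pr[\mathcal{T}_v=\lo]\,T_v^2$, which I split at $T_v = 2\wt{T}/\sqrt{k}$: vertices below this contribute $\le k^{1/2}\cdot\frac{2\wt{T}}{\sqrt{k}}\sum_v T_v = O(T\wt{T})$ (bounding $\Pr\le 1$, $T_v^2\le\frac{2\wt{T}}{\sqrt{k}}T_v$, $\sum_v T_v = 3T$), while vertices above it contribute $O(k^{3/2}T + kT_E)$ by Lemma~\ref{loerror} (which gives $\Pr[\mathcal{T}_v=\lo]\,T_v^2\lesssim kT_v+\sqrt{k}\sum_w T_{vw}^2$; then sum, using $\sum_{v,w}T_{vw}^2 = 2T_E$ and $k\ge1$). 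Hence $\E[\var(\overline{T_L}\mid\mathcal{F})] = O(Tk^{3/2}+kT_E+T\wt{T})$.

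For $\var(T_L)$: write $T_L = \sum_t Y_t$ with $Y_t = \prod_{v\in t}\mathbf{1}[\mathcal{T}_v = \lo]$, so $\var(T_L) = \sum_{t_1,t_2}\mathrm{Cov}(Y_{t_1},Y_{t_2})$, and split again by $l$. The $l=3$ terms sum to $\E[T_L]\le T$; each $l=2$ term is $\le 1$ and there are at most $T_E$ such pairs, giving $\le kT_E$; the $l=1$ terms are bounded by $\sum_v T_v^2\Pr[\mathcal{T}_v=\lo]$, which is $O(Tk^{3/2}+kT_E+T\wt{T})$ by exactly the heavy/light-ish split above. The genuinely delicate part is the $l=0$ sum, over \emph{vertex-disjoint} triangle pairs. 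Here the key observation is that $\mathcal{T}_v$ is a function only of the first-pass hashes $\{d_i(u), r_{D,i}(vu)\}$ as $u$ ranges over triangle-neighbours of $v$ and $i\in[2]$; consequently $\mathrm{Cov}(Y_{t_1},Y_{t_2}) = 0$ unless $t_1$ and $t_2$ are ``linked''---joined by an edge lying in a triangle, or having vertices with a common triangle-neighbour---and for a linked disjoint pair the shared randomness is only $O(1)$ hash values, each of which perturbs the governing statistic $X_{v,i}^{(h)}$ by at most $T_{vw}\le\Delta_E$. I would bound the covariance of such a pair by conditioning on the shared hashes and controlling the induced change in $\Pr[\mathcal{T}_v = \lo]$ via the concentration of $X_{v,i}^{(h)}$ (Corollary~\ref{xdeviance}) and Lemmas~\ref{loerror} and~\ref{hierror}, then charge each linked disjoint pair either to its linking edge or to the heavy/light-ish class of the shared neighbour. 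This last step---showing that the ``light''/``heavy'' labels of triangle-adjacent vertices are close enough to uncorrelated---is where essentially all the difficulty of the lemma lies; the rest is the routine second-moment accounting above.
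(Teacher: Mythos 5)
Your decomposition via the law of total variance and your handling of the $l=1,2,3$ cases track the paper's proof almost exactly: the paper expands $\E\left[T_L'^2\right]$ directly over ordered triangle pairs instead of conditioning on the first pass, but the split at $T_v = 2\wt{T}/\sqrt{k}$ and the use of Lemma~\ref{loerror} for the heavy vertices are the same, and those parts of your argument are fine. The divergence, and the problem, is the vertex-disjoint case. The paper disposes of it in one line: for disjoint $t_1,t_2$ it bounds the probability that both lie in $G_L'$ by $p_{t_1}p_{t_2}/k^3$, so the disjoint-pair contribution is at most $\E\left[T_L'\right]^2$ and cancels against the subtracted square in the variance. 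Your proposal instead declares this the crux of the lemma and only sketches a covariance argument over ``linked'' pairs; as written this is a genuine gap, not a proof.

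Concretely, two things are missing. First, you give no quantitative bound on $\mathrm{Cov}(Y_{t_1},Y_{t_2})$ for a linked disjoint pair: each $Y_t$ is a decreasing event in the independent first-pass hashes, so by Harris/FKG these covariances are nonnegative and a real upper bound is required; ``conditioning on the shared hashes and controlling the induced change in $\Pr[\mathcal{T}_v=\lo]$ via Corollary~\ref{xdeviance}'' names the tools but produces no estimate. Second, the counting: the number of disjoint pairs linked through a common triangle-neighbour $w$ is of order $\bigl(\sum_v T_{wv}\bigr)^2$ summed over $w$, i.e.\ a $\sum_w T_w^2 = T_V$-type quantity, which does not appear in the target bound $Tk^{3/2}+kT_E+T\wt{T}$; hence the proposed ``charging to the linking edge or to the heavy/light class of the shared neighbour'' has to do genuine work, and nothing in your sketch shows the per-pair covariance is small enough for that accounting to close. (Your observation that the covariance vanishes for non-linked disjoint pairs is correct, and it does highlight that the paper's one-line product bound silently treats linked disjoint pairs as non-positively correlated; but identifying the subtlety is not the same as resolving it, and by your own admission this is where ``essentially all the difficulty of the lemma lies,'' so the proposal is incomplete precisely there.)
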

    \begin{proof}
      Let $T_L'$ be the number of triangles in $G_L'$ (so $\overline{T_L} = k^\frac{3}{2}T_L'$). A triangle $t=(u,v,w)$ in $G$ is in $G_L'$ iff $c(u) = c(v) = c(w)$ and $\mathcal{T}_u=\mathcal{T}_v=\mathcal{T}_w = \lo$. We proceed by bounding $\mathbb{E}\left\lbrack T_L'^2\right\rbrack$. This will be equal to the sum over every ordered pair $(t_1,t_2)$ of the probability that both $t_1$ and $t_2$ are in $G_L'$. There are four scenarios to consider:
      \begin{description}
        \item[$t_1 = t_2 = (u,v,w)$:] Then both will be counted iff $t_1$ is, which requires $c(u) = c(v) = c(w)$, so this happens with probability $\leq \frac{1}{k^\frac{3}{2}}$. There are no more than $T$ such ``pairs'' in $G_L$, and so they contribute at most $\frac{T}{k^\frac{3}{2}}$ to $\mathbb{E}\left\lbrack T_L'^2\right\rbrack$.
        \item[$t_1$ and $t_2$ share one edge $vw$:] Let $u_1, u_2$ be the remaining two vertices. Then for both $t_1$ and $t_2$ to be in $G_L'$, it is necessary (but not sufficient) that:

          \begin{align*}
            c(u_1) & = 1\\
            c(u_2) & = 1\\
            c(v) & = 1\\
            c(w) & = 1
          \end{align*}

          So this holds with probability at most $\frac{1}{k^2}$. There are no more than $T_{e}^2$ such pairs in for each edge $e$ in $G_L$, so they contribute at most $\sum_e \frac{T_{e}^2}{k^2} = \frac{T_E}{k^2}$ to $\mathbb{E}\left\lbrack T_L'^2\right\rbrack$ in total.
        \item[$t_1$ and $t_2$ share only $v$:] 
          Let $u_1,w_1, u_2,w_2$ be the remaining four vertices. Then for both $t_1$ and $t_2$ to be in $G_L'$, it is necessary that:

          \begin{align*}
            c(v) & = 1\\
            c(u_1) & = 1\\
            c(w_1) & = 1\\
            c(u_2) & = 1\\
            c(w_2) & = 1
          \end{align*}

          So this holds with probability at most $\frac{1}{k^\frac{5}{2}}$. We now split such pairs into two categories. 
          \begin{description}
            \item[$T_v \leq \frac{2\wt{T}}{\sqrt{k}}$:] There are at most $\sum_{v, T_v \leq \frac{2\wt{T}}{\sqrt k}}T_v^2 \lesssim \frac{T\wt{T}}{\sqrt k}$ such pairs in $G_L$, and so this category contributes at most $\frac{T\wt{T}}{k^3}$ to $\mathbb{E}\left\lbrack T_L'^2\right\rbrack$.

            \item[$T_v > \frac{2\wt{T}}{\sqrt{k}}$:] In this case, we also use the bound on the probability that $\mathcal{T}_v = \lo$ from Lemma $\ref{loerror}$. This is $ \lesssim \frac{k}{T_v} + \sum_w T_{vw}^2 \frac{\sqrt k}{T_v^2}$, so the contribution to $\mathbb{E}\left\lbrack T_L'^2\right\rbrack$ from this category is $\lesssim \sum_v\frac{T_v^2}{k^\frac{5}{2}} \left(\frac{k}{T_v} + \sum_w T_{vw}^2 \frac{\sqrt k}{T_v^2}\right) = \frac{T}{k^\frac{3}{2}} + \frac{T_E}{k^2}$.
          \end{description}
        \item[$t_1$ and $t_2$ are disjoint:] Then for $t_1$ and $t_2$ to be in $G_L'$, it is necessary that all 6 vertices $v$ of $t_1,t_2$ have $c(v) = 1$, which happens with probability $\frac{1}{k^3}$. So then, letting $p_t$ be the probability that a triangle $t \in G$ is in $G_L$, the total contribution to $\mathbb{E}\left\lbrack T_L'^2\right\rbrack$ from this case will be at most $\sum_{t_1 \cap t_2 = \emptyset} \frac{p_{t_1} p_{t_2}}{k^3} \leq \frac{1}{k^3} \mathbb{E}\left\lbrack T_L \right \rbrack^2 = \mathbb{E}\left\lbrack T_L' \right \rbrack^2$.
      \end{description}

      By summing these together, as $\overline{T_L} = k^\frac{3}{2} T_L'$, this gives us:
      \begin{align*}
        \var(\overline{T_L}) & = k^3 \var(T_L')\\
        & = k^3(\mathbb{E}\left\lbrack T_L'^2\right\rbrack - \mathbb{E}\left\lbrack T_L'\right\rbrack^2)\\
        & \lesssim k^3\left(\frac{T}{k^\frac{3}{2}} +  \frac{T_E}{k^2} +  \frac{T\wt{T}}{k^3} + \mathbb{E}\left\lbrack T_L' \right \rbrack^2 - \mathbb{E}\left\lbrack T_L' \right \rbrack^2\right)\\
        & \lesssim Tk^\frac{3}{2} + kT_E + T\wt{T}
      \end{align*}
    \end{proof}

    \subsubsection{\texorpdfstring{$\overline{T_H}$}{T\_H}}
    \begin{lemma}
      \label{thexpec}
      $\mathbb{E}\left\lbrack \overline{T_H} | T_H \right\rbrack = T_H$.
    \end{lemma}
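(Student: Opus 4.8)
The plan is to condition on the entire outcome of the first pass, which fixes all scale estimates $\mathcal{T}_v$ and therefore fixes $V_L$, $V_H$, $T_L$, $T_H$, as well as, for each triangle $t = (u,v,w)$, the count $n_t := |\{x \in \{u,v,w\} : \mathcal{T}_x \neq \lo\}|$. Since $T_H$ is a deterministic function of the first pass, it suffices to prove $\mathbb{E}[\overline{T_H} \mid \text{first pass}] = T_H$ and then take the expectation over the first pass conditioned on the value of $T_H$ (the tower property). After this conditioning, the only remaining randomness lies in the mutually independent second-pass hash functions $c$, $h$, and $r_C$.

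First I would expand $\overline{T_H}$ by linearity. By definition $\overline{T_H} = \sum_{v \in V_H,\ h(v) = \mathcal{T}_v} \overline{T_v}$, where $\overline{T_v}$ sums $\frac{k^2}{\omega 2^{\mathcal{T}_v}} \cdot \frac{1}{n_t}$ over those triangles $t = (u,v,w)$ that the algorithm counts at $v$. Swapping the order of summation,
\[
  \overline{T_H} = \sum_{t}\ \frac{1}{n_t} \sum_{\substack{v \in t \\ \mathcal{T}_v \neq \lo}} \frac{k^2}{\omega 2^{\mathcal{T}_v}}\, \mathbf{1}\!\left[t \text{ counted at } v\right],
\]
where ``$t$ counted at $v$'' is the conjunction of $h(v) = \mathcal{T}_v$, $c = 1$ on the other two vertices of $t$, and $r_C < \frac{\omega 2^{\mathcal{T}_v}}{\sqrt k}$ on the two edges of $t$ incident to $v$. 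Only triangles with $n_t \geq 1$ --- i.e.\ exactly the $T_H$ triangles --- contribute.

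Next I would compute, for a fixed triangle $t = (u,v,w)$ and a fixed $v \in t$ with $\mathcal{T}_v \neq \lo$, the probability that $t$ is counted at $v$. The five defining events depend on disjoint inputs to the three independent hash functions, so they are mutually independent, giving probability
\[
  \frac{1}{\omega 2^{\mathcal{T}_v}} \cdot \frac{1}{\sqrt k} \cdot \frac{1}{\sqrt k} \cdot \left(\frac{\omega 2^{\mathcal{T}_v}}{\sqrt k}\right)^{\!2} = \frac{\omega 2^{\mathcal{T}_v}}{k^2}.
\]
Multiplying by the per-count weight $\frac{k^2}{\omega 2^{\mathcal{T}_v}} \cdot \frac{1}{n_t}$ shows that vertex $v$ contributes $\frac{1}{n_t}$ in expectation. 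Summing over the $n_t$ vertices of $t$ with label $\neq \lo$ gives an expected contribution of exactly $1$ for each $T_H$ triangle, and summing over all $T_H$ of them yields $\mathbb{E}[\overline{T_H} \mid \text{first pass}] = T_H$, completing the argument.

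There is no real obstacle here --- it is a linearity-of-expectation calculation. The one subtlety is the role of the $\frac{1}{n_t}$ factor: a triangle in $T_H$ may be eligible to be counted at one, two, or three of its vertices, and one must check that the deflation by $n_t$ makes the per-vertex expected contributions sum to exactly $1$ regardless of which case occurs; and that the counting rule at $v$ constrains $c$ only on the other two vertices (and $r_C$ only on $v$'s two incident edges, not on $c(v)$ itself), which is precisely what makes the per-vertex probability equal to $\frac{\omega 2^{\mathcal{T}_v}}{k^2}$ and cancel the weight.
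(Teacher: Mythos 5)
Your proposal is correct and follows essentially the same route as the paper: condition on the first-pass outcomes (hence on $T_H$ and the values $\mathcal{T}_v$), compute the per-vertex counting probability $\frac{\omega 2^{\mathcal{T}_v}}{k^2}$ from the independence of $h$, $c$, and $r_C$, observe it cancels the weight up to the $\frac{1}{n_t}$ factor, and sum over the $n_t$ eligible vertices of each of the $T_H$ triangles. The only difference is presentational — you make the conditioning on the full first pass and the tower-property step explicit, which the paper leaves implicit.
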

    \begin{proof}
      $\overline{T_H} = \sum_{v\in V_H} \overline{T_v}$, and $T_H$ is the number of triangles in $G$ that are not in $G_L$. If a triangle in $G$ is in $G_L$, it can never contribute to $\overline{T_v}$ for any $v$, as every vertex $v$ it uses will have $\mathcal{T}_v = \lo$. If a triangle in $G$ is not in $G_L$, it has $s \in \lbrace 1,2,3\rbrace$ vertices $u$ s.t. $\mathcal{T}_u \not = \lo$. It will therefore have $s$ vertices $v$ such that it can contribute to $T_v$. 

      At each of those vertices $v$, if the triangle is $t = (u,v,w)$, it will contribute $\frac{k^2}{\omega 2^{\mathcal{T}_v}s}$ to $\overline{T_v}$ iff:

      \begin{align*}
        h(v) & = \mathcal{T}_v\\
        c(u) & = 1\\
        c(w) & = 1\\
        r_C(vu) & < \frac{\omega2^{\mathcal{T}_v}}{\sqrt k}\\
        r_C(vw) & < \frac{\omega2^{\mathcal{T}_v}}{\sqrt k}
      \end{align*}

      This happens with probability $\frac{1}{\omega2^{\mathcal{T}_v}} \times \frac{1}{\sqrt k} \times \frac{1}{\sqrt k} \times \frac{\omega2^{\mathcal{T}_v}}{\sqrt k} \times \frac{\omega2^{\mathcal{T}_v}}{\sqrt k} = \frac{\omega2^{\mathcal{T}_v}}{k^2}$, so the expected contribution of $t$ to $\overline{T_v}$ is $\frac{1}{s}$. Therefore, as there are $s$ vertices where $t$ can contribute, its expected contribution to $\overline{T_H}$ is 1.

      Therefore,  $\mathbb{E}\left\lbrack \overline{T_H} | T_H\right\rbrack$ is precisely the number of triangles in $G$ that are not in $G_L$, which is $T_H$.
    \end{proof}

    \begin{lemma}
      \label{thvar}
      $\var(\overline{T_H}) \lesssim T\frac{k^2}{\omega} +  T_Ek + T^2$.
    \end{lemma}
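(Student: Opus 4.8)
The plan is to expand $\overline{T_H}=\sum_v\overline{T_v}$ as a sum $\sum_{(t,v)}A_{t,v}$ over incidences of a triangle $t$ with a vertex $v\in V(t)$ having $\mathcal{T}_v\neq\lo$, where $A_{t,v}$ is the (random) contribution of $t$ to $\overline{T_v}$; by the computation behind Lemma~\ref{thexpec}, $A_{t,v}$ is $\frac{k^2}{\omega 2^{\mathcal{T}_v}s(t)}$ times the indicator of being counted at $v$, an event of probability $\frac{\omega 2^{\mathcal{T}_v}}{k^2}$. Then $\var(\overline{T_H})\leq\mathbb{E}[\overline{T_H}^2]=\sum_{(t_1,v_1),(t_2,v_2)}\mathbb{E}[A_{t_1,v_1}A_{t_2,v_2}]$, and I would bound this sum by partitioning the pairs of triangles according to $|V(t_1)\cap V(t_2)|\in\{3,2,1,0\}$, where $3$ means $t_1=t_2$. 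The disjoint case is the cleanest: conditioned on the first pass the second-pass hashes $c,h,r_C$ act independently on the disjoint vertex/edge sets of $t_1$ and $t_2$, so those terms sum to at most $\mathbb{E}_{\text{1st pass}}\!\big[(\mathbb{E}[\overline{T_H}\mid\text{1st pass}])^2\big]=\mathbb{E}[T_H^2]\leq T^2$.

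For $t_1=t_2=t$, the diagonal term $\mathbb{E}[A_{t,v}^2]=\frac{k^2}{\omega 2^{\mathcal{T}_v}s(t)^2}\leq\frac{k^2}{\omega}$ since $\mathcal{T}_v\geq 0$, and the intra-triangle cross term for two distinct owner vertices is $O(k)=O(k^2/\omega)$ (using $\omega\leq\sqrt k\leq k$); summing the $O(1)$ owner-pairs per triangle over all $T$ triangles gives $O(Tk^2/\omega)$. For $|V(t_1)\cap V(t_2)|=2$ (a shared edge $e$), in every choice of the two owner vertices the large weight factors $\frac{k^2}{\omega 2^{\mathcal{T}}}$ cancel against the small $r_C$-probabilities $\big(\tfrac{\omega 2^{\mathcal{T}}}{\sqrt k}\big)$, leaving a bound of $O(k)$ that is free of the first-pass estimates; there are $O(\sum_e T_e^2)=O(T_E)$ such ordered pairs, contributing $O(kT_E)$.

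The delicate case is $|V(t_1)\cap V(t_2)|=1$, with shared vertex $v$. Here I would further split by where the two owner vertices sit: when at least one of the two triangles is counted at a vertex other than $v$, the weight/probability cancellation is still favorable, and summing over the at most $\sum_v T_v^2$ such pairs---using the elementary identities $\sum_v T_v=3T$ and $\sum_v T_v^2\leq(\max_v T_v)(\sum_v T_v)\leq 3T^2$---contributes $O(T^2)$. The one dangerous configuration is the pair where \emph{both} $t_1$ and $t_2$ are counted at the shared vertex $v$: there $\mathbb{E}[A_{t_1,v}A_{t_2,v}]=\frac{\omega 2^{\mathcal{T}_v}}{s(t_1)s(t_2)}\leq\omega 2^{\mathcal{T}_v}$, so this part of the sum is at most $\sum_{v:\mathcal{T}_v\neq\lo}T_v^2\,\omega 2^{\mathcal{T}_v}$, and the whole lemma reduces to showing $\mathbb{E}_{\text{1st pass}}\big[\sum_{v:\mathcal{T}_v\neq\lo}T_v^2\,\omega 2^{\mathcal{T}_v}\big]=O(T^2)$.

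This last estimate is the main obstacle, and is exactly where the first-pass concentration lemmas enter. I would split vertices by $T_v$ relative to $\wt T/\sqrt k$. For "light" $v$ with $T_v<2\wt T/\sqrt k$ the ideal is $\lo$; bounding $2^{\mathcal{T}_v}\leq 2\sqrt k/\omega$ and $\Pr[\mathcal{T}_v\neq\lo]\lesssim\sqrt k\,T_v/\wt T$ by Lemma~\ref{hierror} makes the corresponding sum $\lesssim\frac{k}{\wt T}\sum_v T_v^3$, which since $T_v^2<(2\wt T/\sqrt k)^2$ on this range is $\lesssim\frac{k}{\wt T}\cdot\frac{\wt T^2}{k}\sum_v T_v\lesssim\wt T\,T\leq T^2$. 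For "heavy" $v$ I would write $\mathbb{E}[2^{\mathcal{T}_v}]$ as the contribution of $\{\mathcal{T}_v\leq h_v^{*}+2\}$, where $h_v^{*}=\lceil\log\frac{\wt T}{\omega T_v}\rceil$ is the target, plus the upper tail $\{\mathcal{T}_v>h_v^{*}+l\}$ summed over $l\geq 3$. The first piece is $O(2^{h_v^{*}})=O(\max(1,\tfrac{\wt T}{\omega T_v}))$, so $\sum_v T_v^2\,\omega 2^{h_v^{*}}=O(\omega\,T_V+\wt T\,T)$, and here the defining inequality $\omega\,T_V^{+}\leq\wt T^{2}$ makes the first term $O(\wt T^2)=O(T^2)$. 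For the tail I would use the \emph{quadratically}-decaying bound of Lemma~\ref{derror}: each weight $2^{h_v^{*}+l}$ grows only geometrically while the tail probability decays like the square of a geometric, so the $l$-sum converges to $O\big(2^{h_v^{*}}(P_v^2+Q_v^2)\big)$ with $P_v,Q_v$ the two error terms of Lemma~\ref{derror}; using $\sum_w T_{vw}^2\leq 2T_v^2$ to simplify $Q_v$ and again invoking $\sum_v T_v=3T$, $\sum_v T_v^2\leq 3T^2$, and $\omega T_V^{+}\leq\wt T^2$ (and falling back on the cap $2^{\mathcal{T}_v}\leq 2\sqrt k/\omega$ in the regime where $P_v$ or $Q_v$ exceeds $1$), this is also $O(T^2)$. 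Adding the four regimes yields $\var(\overline{T_H})=O\!\big(Tk^2/\omega+kT_E+T^2\big)$. I expect the hard part to be precisely this control of $\sum_v T_v^2\,\omega 2^{\mathcal{T}_v}$: one must show the overshoot of $\mathcal{T}_v$ is tamed tightly enough that the inflated weights do not blow up the variance, and one needs the definition of $\omega$ exactly so that the "well-estimated" contribution $\omega T_V$ is itself $O(\wt T^2)$.
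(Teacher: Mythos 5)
Your overall strategy (expand $\E[\overline{T_H}^2]$ over triangle--owner incidences and case on $|V(t_1)\cap V(t_2)|$) is the same as the paper's, and several pieces are right: the disjoint case via conditional independence given the first pass, the $t_1=t_2$ and shared-edge cases, and the reduction of the dangerous $l=1$ configuration to bounding $\E\bigl[\sum_{v:\mathcal{T}_v\neq\lo}T_v^2\,\omega 2^{\mathcal{T}_v}\bigr]$. But there are two genuine gaps. First, in the $l=1$ case where the two triangles share a vertex $w$ but at least one is counted elsewhere: when \emph{both} owners are vertices other than $w$ (and distinct), the cancellation is \emph{not} favorable. Writing out the conditions, only three vertices (not four) need $c(\cdot)=1$ because $c(w)=1$ is shared, so the per-pair expectation is $\Theta(\sqrt{k})$, exactly the paper's bound $k^{l/2}$ for $l=1$, $u\neq v$. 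Multiplying by your pair count $\sum_w T_w^2\le 3T^2$ gives $\sqrt{k}\,T^2$ (or $\sqrt{k}\,T_V$), which is not $O(T^2)$ and is not dominated by $Tk^2/\omega + kT_E + T^2$ in general (e.g.\ whenever $k\ll T$). The paper closes this by using that both owners must have $\mathcal{T}\neq\lo$: with $L=|\{v:\mathcal{T}_v\neq\lo\}|$, Cauchy--Schwarz bounds the number of contributing pairs at $w$ by $L\sum_v T_{wv}^2$, and Lemma~\ref{hierror} gives $\E[L]\lesssim\sqrt{k}\,T/\wt{T}$, yielding $kT_E\,T/\wt{T}$; your proposal is missing this idea, and without it this case simply does not fit in the budget.

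Second, your treatment of the heavy-vertex tail is not tight enough in the regime where the Chebyshev quantities $P_v=\frac{k^2T_v}{\wt{T}^2}$ or $Q_v=\frac{k\sum_w T_{vw}^2}{\wt{T}T_v}$ exceed $1$: falling back on the cap $2^{\mathcal{T}_v}\le 2\sqrt{k}/\omega$ gives a contribution $\sqrt{k}\,T_v^2$, which can exceed the claimed bound. Concretely, for the hub graph with $\wt{T}=T$, $\omega\approx 1$ and any $k$ with $\sqrt{T}<k<T^{2/3}$ (a legitimate, if suboptimal, parameter choice, and the lemma is stated for arbitrary $k$), one has $P_{v_0}>1$ at the hub and $\sqrt{k}\,T^2 > k^2T/\omega + kT_E + T^2$. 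The paper avoids this by not capping: it sums over the overshoot levels $i$ against the crossover point $x$ where the Chebyshev bound equals $1$, using the factor $2^{-|i-x|}$ on both sides of $x$, which produces the per-vertex bound $\frac{k^2T_v^2}{\wt{T}}+k\sum_w T_{vw}^2$ and hence $\frac{k^2T_V}{\wt{T}}+kT_E$ overall; your ``square of a geometric'' argument only recovers this when $P_v,Q_v\le 1$. A sharper crossover analysis (bounding the tail by roughly $T_v^2\,\omega 2^{h_v^{*}}\max\{\sqrt{P_v},Q_v\}$ instead of the cap) would repair this piece, but as written the step fails; combined with the first gap, the proposal does not yet prove the lemma.
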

    \begin{proof}
      We now care, for any triangle $t$, which vertex we are counting it at (and so which $T_v$ it may contribute to). We will therefore use $t^v$ to denote the triangle $t$, counted at the vertex $v$. We then define $Y_{t^v}$ as $\frac{k^2}{\omega 2^{\mathcal{T}_v}}$ if $t$ is counted at $v$ and 0 otherwise.

      Then, $\overline{T_v} \leq \sum_{t \in \tau(v)} Y_{t^v}$, as $Y_{t^v}$ is equal or greater to the contribution to $T_v$ from $t$. So with $Y = \sum_{v}\sum_{t \in \tau(v)} Y_{t^v}$, $\mathbb{E}\left\lbrack \overline{T_H}^2 \right\rbrack \leq \mathbb{E}\left\lbrack Y^2 \right\rbrack$. We will bound  $ \mathbb{E}\left\lbrack Y^2 \right\rbrack$ by bounding $\mathbb{E}\left\lbrack Y_{t_1^u}Y_{t_2^v} \right\rbrack$ for each pair $t_1^u, t_2^v$.

      We will bound the total contribution to $Y^2$ from each possible value of $l = |V(t_1) \cap V(t_2)|$, treating $l = 1$ as a special case, and treating $u = v$ and $u \not = v$ separately. 

      \begin{description}
        \item[$l \not = 1, u \not = v$:] $ Y_{t^v} > 0$ implies that $h(u) = \mathcal{T}_u$, $h(v) = \mathcal{T}_v$, which happens with probability $\omega^{-2}2^{-\mathcal{T}_u - \mathcal{T}_v}$. It also implies that $\forall w \in (V(t_1) \setminus \lbrace u\rbrace) \cup (V(t_2) \setminus \lbrace v\rbrace ), c(w) = 1$. There are $4 - l$ vertices in this set, so this happens with probability $\leq k^{-2 +\frac{l}{2}}$.  It also implies that $\forall e \in \left\lbrace uw \middle| uw \in E(t_1) \right\rbrace r_C(e) < \frac{\omega 2^{\mathcal{T}_u}}{\sqrt k}$ and $\forall e \in \left\lbrace vw \middle| vw \in E(t_2) \right\rbrace, r_C(e) < \frac{\omega 2^{\mathcal{T}_v}}{\sqrt k}$. These sets can overlap in at most one edge ($uv$), and each have size 2, so this happens with probability $\leq \left(\frac{\omega}{\sqrt k}\right)^{3} 2^{2\mathcal{T}_u + 2\mathcal{T}_v - \max\lbrace \mathcal{T}_u, \mathcal{T}_v\rbrace}$. Furthermore, the overlap in $uv$ can only occur if $u \in V(t_2)$ and $v\in V(t_1)$, and in this case we will also need $c(u) = 1$ and $c(v) = 1$. So this either reduces the probability by a factor of $\frac{\omega2^{\max \lbrace \mathcal{T}_u, \mathcal{T}_v\rbrace}}{\sqrt k}$ or $\frac{1}{k}$, and so in either case by at least a factor of $\frac{\omega2^{\max \lbrace \mathcal{T}_u, \mathcal{T}_v\rbrace}}{\sqrt k}$.

          So as these three conditions are independent, and multiplying the probability that $Y_{t_1^u}, Y_{t_2^v} > 0$ by the values they take when they are, $\mathbb{E}\left\lbrack Y_{t_1^u}Y_{t_2^v}\right\rbrack \leq k^\frac{l}{2}$. So the contribution to $\mathbb{E}\left\lbrack Y^2\right\rbrack$ from such pairs is $\lesssim T^2$ for $l = 0$ (as there are at most $T^2$ such pairs in $G$), $T_Ek$ for $l = 2$ (as there are at most $\sum_e T_e^2 = T_E$ such pairs in $G$), and $Tk^\frac{3}{2}$ for $l = 3$ (as any such ``pair'' has $t_1 = t_2$, so there are at most $T$ of them).

        \item[$l \not = 1, u = v$:] Note that as $u = v$, $l \not = 0$.  $Y_{t^v} > 0$ implies that  $h(v) = \mathcal{T}_v$, which happens with probability $\omega^{-1}2^{- \mathcal{T}_v}$. It also implies that  $\forall w \in V(t_1) \setminus \lbrace u\rbrace \cup V(t_2) \setminus \lbrace v\rbrace, c(w) = 1$. There are $5 - l$ vertices in this set, so this happens with probability $\leq k^{-\frac{5 - l}{2}}$. It also implies that $\forall e \in \left\lbrace uw \middle| uw \in E(t_1) \right\rbrace \cup \left\lbrace vw \middle| vw \in E(t_2) \right\rbrace, r_C(e) < \frac{\omega 2^{\mathcal{T}_v}}{\sqrt k}$. As this set has $5 - l$ elements, this happens with probability $\leq \left(\frac{\omega 2^{\mathcal{T}_v}}{\sqrt k}\right)^{5 - l}$.

          So as these three conditions are independent, and multiplying the probability that $Y_{t_1^u}, Y_{t_2^v} > 0$ by the values they take when they are, $\mathbb{E}\left\lbrack Y_{t_1^u}Y_{t_2^v}\right\rbrack \leq \left(\omega 2^{\mathcal{T}_v}\right)^{2-l}k^{l-1}$. So as $\omega 2^{\mathcal{T}_v} \leq \sqrt{k}$, this is $\leq k^\frac{l}{2}$ for $l < 3$, and $\leq k^{l - 1}$ otherwise. So the contribution to $\mathbb{E}\left\lbrack Y^2\right\rbrack$ from such pairs is $T_E k$ from $l = 2$ and $T k^\frac{3}{2}$ from $l = 3$.

        \item[$l = 1, u \not = v$:] As in the $l \not = 1$ case, $\mathbb{E}\left\lbrack Y_{t_1^u}Y_{t_2^v}\right\rbrack \leq k^\frac{l}{2} = \sqrt{k}$. So we seek to bound the number of such pairs.

          For any $w \in V(G)$, the number of such pairs intersecting at $w$ is $\leq \left(\sum_{v \in V(G), \mathcal{T}_v \not = \lo} T_{wv}\right)^2$. Now let $L = |\lbrace v \in V(G) | \mathcal{T}_v \not= \lo\rbrace |$. Suppose $L = r$. By Cauchy-Schwartz, \[\left(\sum_{v \in V(G), \mathcal{T}_v \not = \lo} T_{wv}\right)^2 \leq r \sum_{v \in V(G)} T_{wv}^2\]. So by summing across all $w$, the total number of such pairs is $\leq rT_E$. So the contribution to the expectation conditioned on $L = r$ is $\leq \sqrt{k} rT_E$. As our bound on $\mathbb{E}\left\lbrack Y_{t_1^u}Y_{t_2^v}\right\rbrack$ holds for any values of $\mathcal{T}_u, \mathcal{T}_v$, we can then bound the contribution to $\mathbb{E}\left\lbrack Y^2\right\rbrack$ from such pairs by:

          \begin{align*}
            \sum_r \sqrt{k}rT_E \mathbb{P}\left\lbrack L = r \right\rbrack & = \mathbb{E}\left\lbrack L \right\rbrack \sqrt{k}T_E\\
            & = \sqrt{k}T_E\sum_{v \in V(G)} \mathbb{P}\left\lbrack \mathcal{T}_v \not = \lo\right\rbrack\\
            & \lesssim \sqrt{k}T_E\sum_{v \in V(G)}\frac{\sqrt{k}T_v}{\wt{T}} & \mbox{By Lemma \ref{hierror}.}\\
            & = T_Ek\frac{T}{\wt{T}}
          \end{align*}

        \item[$l = 1, u = v$:]  As in the $l \not = 1$ case, $\mathbb{E}\left\lbrack Y_{t_1^u}Y_{t_2^v}\right\rbrack \leq \frac{k^{l -1}}{\omega^{l-2}2^{\mathcal{T}_v(l -2 )}} = \omega 2^{\mathcal{T}_v}$. So at any vertex $v$, the contribution to the expectation, conditioned on $\mathcal{T}_v$, is $\leq T_v^2 \omega 2^{\mathcal{T}_v}$. 

          We will consider two cases: $\mathcal{T}_v \leq \left\lceil\log\frac{\wt{T}}{\omega T_v}\right\rceil$ and $\mathcal{T}_v = \left\lceil\log \frac{\wt{T}}{\omega T_v}\right\rceil + i$ for some $i \geq 1$.

          In the first case, $\mathbb{E}\left\lbrack Y_{t_1^u}Y_{t_2^v}\right\rbrack \leq  \frac{\wt{T}}{T_v}$, and there are no more than $T_v^2$ such pairs for each  vertex $v$, so the total contribution to the expectation from such vertices is $\leq \sum_{v \in V(G)} T_v\wt{T} \lesssim T\wt{T}$.

          In the second case, $\mathbb{E}\left\lbrack Y_{t_1^u}Y_{t_2^v}\right\rbrack \leq  T_v\wt{T}2^i$, and the probability of $\mathcal{T}_v$ being at least this high is:

          \begin{align*}
            & \lesssim \min\left\lbrace 1, \left( \frac{k^2T_v}{2^{2i} \left(\wt{T}\right)^2} + \sum_w T_{vw}^2 \frac{k}{2^{i}\wt{T}T_v} \right)^2 \right\rbrace & \mbox{By Lemma \ref{derror}}
          \end{align*}

          Now let $x \in \mathbb{R}$ be the unique solution to $\frac{k^2T_v}{2^{2x} \left(\wt{T}\right)^2} + \sum_w T_{vw}^2 \frac{k}{2^{x}\wt{T}T_v} = 1$. For $i \leq x$, \[1 \leq 2^{-|i - x|}\left( \frac{k^2T_v}{2^{2i} \left(\wt{T}\right)^2} + \sum_w T_{vw}^2 \frac{k}{2^{i}\wt{T}T_v} \right)
          \]. 
          For $i \geq x$, \[\left( \frac{k^2T_v}{2^{2i} \left(\wt{T}\right)^2} + \sum_w T_{vw}^2 \frac{k}{2^{i}\wt{T}T_v} \right)^2 \leq 2^{-|i - x|}\left( \frac{k^2T_v}{2^{2i} \left(\wt{T}\right)^2} + \sum_w T_{vw}^2 \frac{k}{2^{i}\wt{T}T_v} \right)\]. 

          So in either case, the probability of $\mathcal{T}_v$ being at least this high is \[\lesssim 2^{-|i - x|}\left( \frac{k^2T_v}{2^{2i} \left(\wt{T}\right)^2} + \sum_w T_{vw}^2 \frac{k}{2^{i}\wt{T}T_v} \right)
          \]. By summing across all possible values of $i$, it follows that the contribution to $\mathbb{E}\left\lbrack Y^2 \right\rbrack$ is:

          \begin{align*}
            &\lesssim T_v\wt{T}\sum_{i = \max\left\lbrace 0, -\left\lceil\log \frac{\wt{T}}{\omega T_v}\right\rceil \right\rbrace}^{\left\lceil \log \frac{\sqrt k}{\omega}\right\rceil} 2^{i-|i - x|}\left( \frac{k^2T_v}{2^{2i} \left(\wt{T}\right)^2} + \sum_w T_{vw}^2 \frac{k}{2^{i}\wt{T}T_v} \right)\\
            & \leq \sum_{i = \max\left\lbrace 0, -\left\lceil\log \frac{\wt{T}}{\omega T_v}\right\rceil \right\rbrace}^{\left\lceil \log \frac{\sqrt k}{\omega}\right\rceil} 2^{-|i - x|}\left( \frac{k^2T_v^2}{\wt{T}} + \sum_w T_{vw}^2 k \right)\\
            & \leq \sum_{i = 0}^\infty 2^{-i}\left( \frac{k^2T_v^2}{\wt{T}} + \sum_w T_{vw}^2 k \right)\\
            & \lesssim \frac{k^2T_v^2}{\wt{T}} + \sum_w T_{vw}^2 k
          \end{align*}

          And so, by summing across all $v$, the contribution is:

          \begin{align*}
            \lesssim  \frac{k^2T_V}{\wt{T}} + T_E k
          \end{align*}

      \end{description}

      Now, by summing the bounds from all of the above cases together, we can bound $\mathbb{E}\left\lbrack Y^2 \right\rbrack$ and therefore $\var(\overline{T_H})$.

      \begin{align*}
        \var(\overline{T_H}) & \leq  \mathbb{E}\left\lbrack \overline{T_H}^2 \right\rbrack\\
        & \leq \mathbb{E}\left\lbrack Y^2 \right\rbrack\\
        &\lesssim T^2 + T\wt{T} + \frac{k^2T_V}{\wt{T}} + T_E k + Tk^\frac{3}{2}
      \end{align*}

    \end{proof}
    \subsubsection{\texorpdfstring{$\overline{T}$}{T}}
    \begin{lemma}
      \label{texpec}
      $\mathbb{E}\left\lbrack\overline{T}\right\rbrack = T$.
    \end{lemma}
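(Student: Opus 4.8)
\textbf{Proof proposal for Lemma~\ref{texpec}.} The plan is to reduce this immediately to Lemmas~\ref{tlexpec} and~\ref{thexpec} via linearity of expectation and the tower rule, conditioning on the outcome of the first pass. Recall that $\overline{T} = \overline{T_L} + \overline{T_H}$ by definition of the final output, so it suffices to add the two conditional expectations and check they combine to $T$.

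First I would fix the randomness of the first pass, i.e. the hash functions $d_i$ and $r_{D,i}$ for $i \in \{1,2\}$, which deterministically determine $\mathcal{T}_v$ for every $v$, hence the set $V_L$, the subgraph $G_L$, and therefore the quantities $T_L$ and $T_H = T - T_L$. Conditioned on this, Lemma~\ref{tlexpec} gives $\mathbb{E}[\overline{T_L} \mid \text{first pass}] = T_L$ (the estimator $\overline{T_L}$ depends only on the additional hash function $c$, and each triangle of $G_L$ survives into $G_L'$ with probability exactly $k^{-3/2}$, so the $k^{3/2}$ scaling is unbiased), and Lemma~\ref{thexpec} gives $\mathbb{E}[\overline{T_H} \mid \text{first pass}] = T_H$ (each triangle not in $G_L$ contributes expected total weight $1$ across the $s \in \{1,2,3\}$ vertices at which it may be counted, the $\tfrac1s$ reweighting exactly compensating for multiple counting). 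Adding these via linearity of expectation, $\mathbb{E}[\overline{T} \mid \text{first pass}] = T_L + T_H = T$.

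Finally, since $T$ is a deterministic function of the (fixed) input graph $G$ and does not depend on any of the algorithm's randomness, taking the expectation over the first-pass randomness is trivial: $\mathbb{E}[\overline{T}] = \mathbb{E}[\,\mathbb{E}[\overline{T} \mid \text{first pass}]\,] = \mathbb{E}[T] = T$.

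There is essentially no obstacle here; the only point requiring a word of care is that Lemmas~\ref{tlexpec} and~\ref{thexpec} are each phrased as conditioning on $T_L$ (resp.\ $T_H$), so one should note that both are in fact statements conditioned on the full first-pass outcome — which determines $T_L$ and $T_H$ simultaneously — and that the second-pass hash function $c$ is shared between $\overline{T_L}$ and $\overline{T_H}$ without affecting the additivity of expectations. With that observation the proof is a one-line application of the tower rule.
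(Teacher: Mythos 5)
Your proposal is correct and takes essentially the same route as the paper: both decompose $\overline{T} = \overline{T_L} + \overline{T_H}$, invoke Lemmas~\ref{tlexpec} and~\ref{thexpec} for the conditional expectations, and conclude $T_L + T_H = T$ by linearity and the tower rule. Your extra remark that the conditioning should be read as conditioning on the full first-pass outcome (which determines $T_L$ and $T_H$ simultaneously) is just a slightly more careful phrasing of the paper's one-line argument, not a different approach.
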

    \begin{proof}
      For any $T_L, T_H$, $T_L + T_H = T$. So as $\overline{T} =  \overline{T_L} + \overline{T_H}$, by Lemmas \ref{tlexpec}, \ref{thexpec}, $\mathbb{E}\left\lbrack\overline{T}| T_L,T_H\right\rbrack = T_L + T_H = T$.
    \end{proof}

    \begin{lemma}
      \label{tvar}
      $\var\left(\overline{T}\right) \lesssim Tk^\frac{3}{2} + \frac{k^2 T_V}{\wt{T}} +  T_Ek  + T^2 + T\wt{T}$.
    \end{lemma}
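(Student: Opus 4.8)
The plan is to exploit the decomposition $\overline{T} = \overline{T_L} + \overline{T_H}$ together with the two variance bounds already established for the summands, Lemmas~\ref{tlvar} and~\ref{thvar}. The only subtlety is that $\overline{T_L}$ and $\overline{T_H}$ are \emph{not} independent: both are functions of the second-pass vertex hash $c$ (and $\overline{T_H}$ additionally depends on $h$ and $r_C$), so their variances cannot simply be added. Instead I would use the elementary inequality $\var(A+B) \le 2\var(A) + 2\var(B)$, which follows from $2\,\mathrm{Cov}(A,B) \le 2\sqrt{\var(A)\var(B)} \le \var(A) + \var(B)$ by Cauchy--Schwarz and AM--GM. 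Applied to $A = \overline{T_L}$, $B = \overline{T_H}$, this gives $\var(\overline{T}) \le 2\var(\overline{T_L}) + 2\var(\overline{T_H})$.

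Then I would substitute the bounds. Lemma~\ref{tlvar} gives $\var(\overline{T_L}) \lesssim Tk^{3/2} + kT_E + T\wt{T}$. For $\overline{T_H}$ I would use the finer estimate that appears inside the proof of Lemma~\ref{thvar}, namely $\var(\overline{T_H}) \lesssim T^2 + T\wt{T} + \frac{k^2 T_V}{\wt{T}} + T_E k + Tk^{3/2}$; the stated form $\var(\overline{T_H}) \lesssim Tk^2/\omega + T_E k + T^2$ is just a coarsening of this (using $\omega \le \sqrt k$ and $\omega \le (\wt{T})^2/T_V^+$ with $T_V \le T_V^+$), but it is the pre-coarsening version whose terms line up with the target of this lemma. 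Adding the two bounds, every contribution falls, up to constants, into one of $\{Tk^{3/2},\ \frac{k^2 T_V}{\wt{T}},\ T_E k,\ T^2,\ T\wt{T}\}$, which is exactly the claimed bound.

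There is essentially no hard step: the content is the bookkeeping of which of the five target terms each contribution belongs to, plus the observation that the covariance cross-term only costs a constant factor. If one wanted to be maximally careful about conditioning, one could instead invoke the law of total variance over the first-pass randomness: conditioned on all the $\mathcal{T}_v$, we have $\mathbb{E}[\overline{T} \mid \mathcal{T}] = T_L + T_H = T$, a constant, by Lemmas~\ref{tlexpec} and~\ref{thexpec}, so $\var(\overline{T})$ equals the expected conditional variance, still controlled by $2\var(\overline{T_L}) + 2\var(\overline{T_H})$; but this refinement is unnecessary, since Lemmas~\ref{tlvar} and~\ref{thvar} already bound the \emph{unconditional} variances of the two estimators.
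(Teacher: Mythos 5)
Your proposal is correct and is essentially the paper's own proof: the paper likewise observes $\overline{T} = \overline{T_L} + \overline{T_H}$, writes $\var(\overline{T}) \lesssim \var(\overline{T_L}) + \var(\overline{T_H})$, and plugs in Lemmas~\ref{tlvar} and~\ref{thvar}. Your two refinements---justifying the constant-factor absorption of the covariance via Cauchy--Schwarz, and invoking the bound $T^2 + T\wt{T} + \frac{k^2 T_V}{\wt{T}} + T_E k + Tk^{\frac{3}{2}}$ established at the end of the proof of Lemma~\ref{thvar} rather than its coarser stated form---only make explicit what the paper leaves implicit.
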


    \begin{proof}
      $\overline{T} =  \overline{T_L} + \overline{T_H}$, so $\var(\overline{T}) \lesssim \var(\overline{T_L}) + \var(\overline{T_H})$. Our bound then follows by using the bounds in Lemmas \ref{tlvar} and \ref{thvar}.
    \end{proof}

    \section{Single-Pass Algorithm} \label{spalgo}
    \subsection{Outline}
    The algorithm presented earlier calculates an estimate of $T$ in two conceptual ``passes''. We show how, with only one pass, we can calculate the output of this algorithm while storing only $O\left(\frac{m\log \frac{T_V\sqrt{k}}{T^2}}{k}\right)$ edges. Our main theorem then follows as a corollary.
    
    We can do this because, in order to calculate the output of the second pass, we only need to know $\mathcal{T}_v$ when it is equal to $h(v)$. In order to calculate $\mathcal{T}_v$, we need all the edges $uv$ such that $r_{D,i}(uv) < \frac{2^{h(u)}\omega}{\sqrt{k}}$ (for $i = 1,2$) and $d(v) = 1$. So as there are $\sim \frac{\log \frac{T_V\sqrt{k}}{T^2}}{k}$ possible values $h$ of $h(u)$, each of which is taken with probability $\omega2^{-h}$, this means we need to store $O\left(\frac{m\log \frac{T_V\sqrt{k}}{T^2}}{k}\right)$ edges in expectation.

    We also show that the post-processing time required is $O\left(\overline{m}^\frac{3}{2}\right)$, where $\overline{m}$ is the number of edges we sample. 

    \subsection{Space Complexity}
    \begin{lemma}
      \label{spass}
      The first and second pass calculations can be performed while storing no more than $O\left(\frac{m\log \frac{T_V\sqrt{k}}{T^2}}{k}\right)$ edges in expectation.
    \end{lemma}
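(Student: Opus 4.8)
The plan is to give an explicit online edge-retention rule, show the retained edges determine every quantity the two conceptual passes compute under the same hash functions, and then bound the expected number of retained edges. The key observation is that whether to keep an arriving edge $uv$ can depend only on $d_1(u),d_1(v),d_2(u),d_2(v),h(u),h(v),c(u),c(v),r_{D,1}(uv),r_{D,2}(uv),r_C(uv)$, all available when $uv$ arrives, so the rule is a legitimate one-pass filter.

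I would keep $uv$ iff one of the following holds: \textbf{(a)} $d_i(u)=d_i(v)=1$ for some $i\in\{1,2\}$, or $c(u)=c(v)=1$; \textbf{(b)} $d_i(u)=1$ and $r_{D,i}(uv)<\omega 2^{h(v)}/\sqrt k$ for some $i$, or the same with $u,v$ exchanged; \textbf{(c)} $c(u)=1$ and $d_i(v)=1$ for some $i$, or the same with $u,v$ exchanged; \textbf{(d)} $c(u)=1$ and $r_C(uv)<\omega 2^{h(v)}/\sqrt k$, or the same with $u,v$ exchanged. Correctness I would check case by case. Rule (b) retains, for each $v$ and each scale $h\le h(v)$, every neighbour edge $vu$ that can contribute to $X_{v,i}^{(h)}$ (it has $d_i(u)=1$ and $r_{D,i}(vu)<\omega 2^h/\sqrt k\le\omega 2^{h(v)}/\sqrt k$), while the third edge $uw$ of the corresponding triangle has $d_i(u)=d_i(w)=1$ and so is kept by (a); hence we can form $X_{v,i}^{(h)}$ for all $h\le h(v)$ and decide whether $\mathcal{T}_v=h(v)$ — the only information about $\mathcal{T}_v$ the second pass needs when $v$ is a potential pivot. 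Rule (c), with (a) supplying third edges, retains the entire $d_i$-sampled neighbourhood of every $c$-sampled vertex $x$ (at the top scale $h=\lceil\log(\sqrt k/\omega)\rceil$ the threshold $\omega 2^h/\sqrt k$ is at least $1$, so no $r_{D,i}$ condition is needed there), which lets us form $X_{x,i}^{(h)}$ at \emph{all} scales and hence the full value of $\mathcal{T}_x$, in particular whether $\mathcal{T}_x=\lo$. Finally (d), with (a), retains the two pivot-incident edges and the third edge of every triangle counted in $\overline{T_H}$, and (a)'s $c\times c$ edges give the induced subgraph $G_L'$ needed for $\overline{T_L}$. Since a triangle counted in $\overline{T_H}$ at pivot $v$ has both its non-pivot vertices $c$-sampled, and a triangle of $G_L'$ has all three vertices $c$-sampled, the query ``$\mathcal{T}_x=\lo$?'' — needed for membership in $G_L'$ and for the weight $1/|\{x\in t:\mathcal{T}_x\ne\lo\}|$ of a counted triangle — is only ever asked at $c$-sampled vertices, so (c) suffices; with this the contribution of every counted triangle to $\overline{T}$ is computable from the retained edges exactly as in the two-pass algorithm.

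For the space bound, rules (a) and (c) keep any fixed ordered pair with probability $O(1/k)$ (a conjunction of two independent events of probability $1/\sqrt k$), contributing $O(m/k)$ edges in expectation. For rule (b), fix $i$ and an ordered pair $(u,v)$; since $d_i(u)$, $r_{D,i}(uv)$, and $h(v)$ are independent, the retention probability is
\[
\Pr[d_i(u)=1]\cdot\E_{h(v)}\!\Big[\min\!\big(1,\tfrac{\omega 2^{h(v)}}{\sqrt k}\big)\Big]\;\le\;\frac{1}{\sqrt k}\cdot\frac{\lceil\log(\sqrt k/\omega)\rceil+1}{\sqrt k}\;=\;\frac{\lceil\log(\sqrt k/\omega)\rceil+1}{k},
\]
where the middle inequality uses that each of the at most $\lceil\log(\sqrt k/\omega)\rceil+1$ scales $j$ with $\Pr[h(v)=j]>0$ contributes $\Pr[h(v)=j]\cdot\min(1,\omega 2^j/\sqrt k)\le 1/\sqrt k$ (exactly $1/\sqrt k$ for $j$ below the top scale, at most $1/\sqrt k$ at the top scale). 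Rule (d) is bounded identically with $r_C$ replacing $r_{D,i}$. Summing over all edges and the $O(1)$ choices of $i$ and of orientation gives $O\!\big(\tfrac{m}{k}\log\tfrac{\sqrt k}{\omega}\big)$, which, under the intended parameters $\wt T=\Theta(T)$ and $T_V^+=\Theta(T_V)$ (so $\log(\sqrt k/\omega)=O(\log\tfrac{\sqrt k\,T_V}{T^2})$), is the claimed bound.

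I expect the real obstacle to be the correctness bookkeeping, not the probability estimate: one must notice that the $1/s$ weighting of a counted triangle forces us to recover the \emph{full} scale estimate $\mathcal{T}_x$ at its non-pivot vertices (not merely the outcome of the test $h(x)=\mathcal{T}_x$), and then realise this stays within budget precisely because such vertices are $c$-sampled at rate $1/\sqrt k$ and the largest scale has threshold $\ge 1$, so recovering $\mathcal{T}_x$ there costs only $x$'s $d_i$-sampled neighbourhood. The secondary point is that no separate rule for ``third edges'' of triangles is needed, since any triangle relevant to either pass already has the endpoints of its third edge jointly $d_i$-sampled or jointly $c$-sampled, so rule (a) handles it for free.
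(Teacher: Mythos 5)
Your proposal is correct and essentially identical to the paper's proof: your retention rules (a)--(d) are exactly the paper's edge sets $E_0,\dots,E_4$, the per-edge probability calculation (constant sets at rate $O(1/k)$, the $r_{D,i}$- and $r_C$-filtered sets at rate $O(\log(\sqrt{k}/\omega)/k)$) is the same, and the conversion via $\omega=\min\{(\wt{T})^2/T_V^+,\sqrt{k}\}$ matches. Your correctness bookkeeping is if anything slightly more careful than the paper's (recovering the full scale estimate $\mathcal{T}_x$ at $c$-sampled vertices and handling the $1/s$ triangle weighting explicitly), but it uses the same sets and the same argument.
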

    \begin{proof}
      We define 5 sets of edges $\left({E_i}\right)_{i = 0}^4$ that our algorithm will store. These are expressed as ordered pairs $uv$, but as our input is undirected edges, an edge $e = uv$ will be kept if either $uv$ or $vu$ is in one of the sets $E_i$.

      \begin{align*}
        E_0 = \lbrace uv \in E & | \exists i, d_i(u) = 1, d_i(v) = 1\rbrace\\
        E_1 = \lbrace uv \in E & | \exists i, c(u) = 1, d_i(v) = 1\rbrace\\
        E_2 = \lbrace uv \in E & | c(u) = 1, c(v) = 1\rbrace\\
        E_3 = \lbrace uv \in E & | \exists i, r_{D,i}(uv) < \frac{2^{h(u)}\omega}{\sqrt{k}}, d_i(v) = 1\rbrace\\
        E_4 = \lbrace uv \in E & | r_C(uv) < \frac{2^{h(u)}\omega}{\sqrt{k}}, c(v) = 1\rbrace
      \end{align*}

      Then, $\forall uv \in E$:

      \begin{align*}
        \mathbb{P}\left\lbrack uv \in E_0 \right\rbrack & \lesssim \frac{1}{k}\\
        \mathbb{P}\left\lbrack uv \in E_1 \right\rbrack & \lesssim \frac{1}{k}\\
        \mathbb{P}\left\lbrack uv \in E_2 \right\rbrack & = \frac{1}{k}\\
        \mathbb{P}\left\lbrack uv \in E_3 \right\rbrack & = \frac{1}{\sqrt{k}} \sum_{i = 1}^2\sum_{h = 0}^{\left\lceil \log \frac{\sqrt{k}}{\omega} \right\rceil} \mathbb{P}\left\lbrack r_{D,i}(uv) < \frac{2^{h(u)}\omega}{\sqrt{k}} \middle| h(u) = h \right\rbrack \mathbb{P}\left\lbrack h(u) = h \right\rbrack\\
        & \lesssim \frac{1}{\sqrt{k}} \sum_{h = 0}^{\left\lceil \log \frac{\sqrt{k}}{\omega} \right\rceil} \frac{2^{h}\omega}{\sqrt{k}} \frac{1}{\omega 2^h}\\
        & \lesssim \frac{\log \frac{\sqrt{k}}{\omega}}{k}\\
        \mathbb{P}\left\lbrack uv \in E_4 \right\rbrack & \lesssim\frac{\log \frac{\sqrt{k}}{\omega}}{k}\\ 
      \end{align*}

      So these sets will contain $O\left(\frac{m \log  \frac{\sqrt{k}}{\omega}}{k}\right)$ edges in expectation. Then, as $\omega = \min \left\lbrace \frac{(\wt{T})^2}{T_V^+},\sqrt{k}\right\rbrace$, and $T_V^+$ can be any upper bound on $T_V$, maintaining these sets requires storing $O\left(\frac{m \log  \frac{T_V\sqrt{k}}{T^2}}{k}\right)$ edges.

      We will then use these to calculate $\overline{T_L}$ and $\overline{T_H}$ as follows:

      \begin{description}
        \item[$\overline{T_L}$:] Recall that $\overline{T_L}$ is the number of triangles in $G_L'$, the subgraph of $G$ induced by $V_L' = \lbrace u \in V(G) | c(u) = 1, \mathcal{T}_u = \lo \rbrace$. $E_2$ will contain every edge in $E(G_L')$, so we can calculate $\overline{T_L}$ provided we can determine which of the edges in $E_2$ are in $G_L'$. It is therefore sufficient to calculate $\mathcal{T}_v$ for all $v$ s.t. $c(v) = 1$.

          To do this, we will need to calculate $\sum_{t \in \tau(v)} X_{v, i}^{(h,t)}$ for $h = \left\lceil \log \frac{\sqrt k}{\omega} \right\rceil$, and each $i \in \lbrack 2\rbrack$. For each triangle $(u,v,w) \in \tau(v)$, $X_{v, i}^{(h,t)} = 1$ iff 

          \begin{align*}
            d_i(u) & = 1\\
            d_i(w) & = 1\\
            r_{D,i}(vu) & < \frac{\omega 2^h}{\sqrt k}\\
            r_{D,i}(vw) & < \frac{\omega 2^h}{\sqrt k}
          \end{align*}

          And 0 otherwise. The third and fourth conditions always hold, as we are dealing with the case when $h = \left\lceil \log \frac{\sqrt k}{\omega} \right\rceil$, so we need to know how many such triangles exist with $d_i(u) = d_i(w) = 1$. When $d_i(u) = d_i(w) = 1$, $uw \in E_0$ and $vu, vw \in E_1$ (as $c(v) = 1$), so $\sum_{t \in \tau(v)} X_{v, i}^{(h,t)}$ will be equal to the number of triangles $(u,v,w)$ with $d_i(u) = d_i(w) = 1$ in the edges we have sampled. So then $\mathcal{T}_{v,i} = \lo$ iff this number is $< \frac{\wt{T}}{k^\frac{3}{2}}$. 

          So we can compute $\mathcal{T}_v$ for each $v$ s.t. $c(v) = 1$ using our sampled edges, and therefore we can compute $\overline{T_L}$.

        \item[$\overline{T_H}$:] For any $v \in V(G)$, $\overline{T_v} = 0$ if $h(v) \not =  \mathcal{T}_v$. So it is sufficient to calculate $\mathcal{T}_v$ when $h(v) = \mathcal{T}_v$, and to know that $h(v) \not = \mathcal{T}_v$ otherwise. We can calculate  $\sum_{t \in \tau(v)} X_{v, i}^{(h,t)}$ for each $h \leq h(v)$, $i \in \lbrack 2 \rbrack$, as for any any triangle $(u,v,w) \in G$ s.t. $ d_i(u) = d_i(w) = 1$, $uw$ will be in $E_0$, and if $r_{D,i}(vu), r_{D,i}(vw)  < \frac{\omega 2^h}{\sqrt k}$ for some $h \leq h(v)$, $vu$ and $vw$ will be in $E_3$. 

          So then, as $H_{v,i} = \left\lbrace h \in \left\lbrace 1, \dots, \left\lceil \log \frac{\sqrt k}{\omega} \right\rceil\right\rbrace \middle| X_{v,i}^{(h)} \geq \frac{\omega \wt{T}2^h}{k^2}\right\rbrace$ we can compute $H_{v,i} \cap \lbrace 0, \dots, h(v) \rbrace$. As $\mathcal{T}_{v,i} = \min H_{v,i}$, we can compute each $\mathcal{T}_{v,i}$ if it is $\leq h$, and determine that it is $> h(v)$ or $\lo$ otherwise. Then, if $\mathcal{T}_v \leq h(v)$, at least one of $\mathcal{T}_{v,1},\mathcal{T}_{v,2}$ is $\leq h(v)$, and so we can calculate it and therefore calculate $\mathcal{T}_v$, and if not we can determine that $\mathcal{T}_v$ is either $> h(v)$ or $\lo$. (although not necessarily which one)

          So if $\mathcal{T}_v \not = h(v)$, we know $\overline{T_v} = 0$, and then if $\mathcal{T}_v = h(v)$, we need to know how many triangles $(u,v,w)$ in $G$ there are such that:

          \begin{align*}
            c(u) & = 1\\
            c(w) & = 1\\
            r_C(vu) & < \frac{\omega 2^{\mathcal{T}_v}}{\sqrt k}\\
            r_C(vw) & < \frac{\omega 2^{\mathcal{T}_v}}{\sqrt k}
          \end{align*}

          If these criteria hold, then $uw \in E_2$, and $vu,vw \in E_4$. So we can calculate $\overline{T_v}$ by counting the number of triangles amongst our sampled edges such that these criteria hold.

          We then calculate $\overline{T_H}$ by summing $\overline{T_v}$ for each $v \in V(G)$.

      \end{description}

    \end{proof}

    \restate{thm:main}
    \begin{proof}
    We will assume that $\epsilon \leq 1/2$, as if $\epsilon > 1/2$, we may follow the procedure for an $(1/2,\delta)$ approximation while losing at most a constant factor in the number of edges stored.

      By Lemma \ref{texpec}, the expectation of our estimate $\overline{T}$ of $T$ will be $T$. We now choose $k$ such that the variance of $\overline{T}$ is $O(T\wt{T} + T^2)$. By Lemma \ref{tvar}, $\var(\overline{T}) \lesssim Tk^\frac{3}{2} + \frac{k^2 T_V}{\wt{T}} +  T_Ek  + T\wt{T} + T^2$. Then, as we may choose $T_V^+$ to be any upper bound on $T_V$, we can choose it to be within a constant factor of the true value, and so the variance will be $O\left(T\wt{T} + T^2\right)$ if $k = \min\left\lbrace \wt{T}^{2/3}, \frac{\wt{T}^\frac{3}{2}}{\sqrt{T_V}}, \frac{\wt{T}^2}{T_E}\right\rbrace$.

      We can repeat the algorithm $O\left(\frac{1}{\epsilon^2}\log\frac{1}{\delta}\right)$ times. We split these results into blocks of size $O\left(\frac{1}{\epsilon^2}\right)$, take the mean of each block, and then take the median of all these means, which gives us an estimate within $\frac{1}{4}\epsilon\left(\sqrt{T\wt{T}} + T\right)$ of $T$ with probability $1 - \delta$, provided we choose the constant factors in our block sizes appropriately. We will then return this result if it is at least $\frac{3}{4}\wt{T}$, and report that $\wt{T} > T$ otherwise.
      
      To show that this gives a correct result (that is, giving an $1 \pm \epsilon$ approximation if $\wt{T} < T$, and either giving a $1 \pm \epsilon$ approximation or reporting that $\wt{T} > T$ otherwise) with probability $1 - \delta$, we consider three cases.
      
      \begin{description}
        \item[$\wt{T} \leq T$:] With probability $1 - \delta$, our result will be within $\frac{\epsilon}{2}T$ of $T$. As $\epsilon < 1/2$, it will therefore be greater than $\frac{3}{4}\wt{T}$, so we will return a $1 \pm \epsilon/2$, and therefore a $1 \pm \epsilon$, approximation of $T$.

        \item[$\wt{T} \in ( T, 2T )$:] With probability $1 - \delta$, our result will be within $\epsilon T$ of $T$. So we will return a $1 \pm \epsilon$ approximation of $T$, or report that $\wt{T} > T$, both of which are valid results for $\wt{T}$ in this range.

        \item[$\wt{T} > 2T$] With probability $1 - \delta$, our result will be within $\frac{\epsilon}{2} \wt{T}$ of $T$. So as $\epsilon < 1/2$, it will be $< \frac{3}{4} \wt{T}$, and so we will report that $\wt{T} > T$.
      \end{description}

      By Lemma \ref{spass}, we can obtain this approximation while keeping at most
      $
      \lesssim \frac{m\log \frac{T_V\sqrt{k}}{\wt{T}^2}}{k}
      $
      edges. Then, if the dominating term in our expression for $k$, $\min\left\lbrace \wt{T}^{2/3}, \frac{\wt{T}^\frac{3}{2}}{\sqrt{T_V}}, \frac{\wt{T}^2}{T_E}\right\rbrace$, is $\wt{T}^{2/3}$, $\wt{T}^{2/3} \leq \frac{\wt{T}^\frac{3}{2}}{\sqrt{T_V}}$, so $\frac{T_V}{\wt{T}^2} \leq \wt{T}^{-\frac{1}{3}}$ and $\sqrt{k} \leq \wt{T}^\frac{1}{3}$, so $\frac{T_V\sqrt{k}}{\wt{T}^2} \leq 1$, and so we need $\lesssim \frac{m}{k}$ edges. Otherwise, as $\log \frac{T_V\sqrt{k}}{\wt{T}^2} \leq \log k$, we need $\lesssim \frac{m\log k}{k}$ edges.

      The result then follows from the fact that $T_V \leq \Delta_VT$ and $T_E \leq \Delta_ET$.
    \end{proof}
    \begin{corollary}
      We can obtain a constant-error approximation to $T$ while sampling $\lesssim 
      \frac{1}{\wt{T}^{2/3}}
      + \frac{\sqrt{\Delta_V}\log \wt{T}}{\wt{T}}
      +\frac{\Delta_E\log \wt{T}}{\wt{T}} $ edges.
    \end{corollary}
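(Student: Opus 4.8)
The plan is to obtain this as the constant-parameter specialization of Theorem~\ref{thm:main}, so essentially no new work is required. First I would fix $\eps$ and $\delta$ to absolute constants, say $\eps = 1/4$ and $\delta = 1/10$. With these choices the prefactor $\frac{\log\frac{1}{\delta}}{\eps^2}$ appearing in the edge bound of Theorem~\ref{thm:main} is $O(1)$, and hence gets absorbed into the $\lesssim$ notation. Theorem~\ref{thm:main} then supplies, for any $\wt{T} \leq T$, an $(\eps,\delta)$ approximation to $T$ — in particular a constant-factor approximation succeeding with constant probability — while storing (equivalently, in the sampling model, sampling) at most
\[
\lesssim m\left(\frac{1}{\wt{T}^{2/3}} + \frac{\sqrt{\Delta_V}\log\wt{T}}{\wt{T}} + \frac{\Delta_E\log\wt{T}}{\wt{T}}\right)
\]
edges in expectation, which is precisely the claimed bound.

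Next I would dispose of the case $\wt{T} > T$ by appeal to the second half of Theorem~\ref{thm:main}: the same run of the algorithm either reports $\wt{T} > T$ or outputs a $(1\pm\eps)$ approximation, so no additional argument is needed. Concretely, the corollary is just the statement of the main theorem with the explicit $\eps,\delta$ dependence suppressed, so it suffices to cite Theorem~\ref{thm:main} and observe that the hidden constant now subsumes the $\eps^{-2}\log(1/\delta)$ factor.

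There is no real obstacle here — all of the technical content (the two conceptual passes, the variance bound of Lemma~\ref{tvar}, the setting $k = \min\{\wt{T}^{2/3}, \wt{T}^{3/2}/\sqrt{T_V}, \wt{T}^2/T_E\}$, the median-of-means amplification, the single-pass simulation of Lemma~\ref{spass}, and the reductions $T_V \leq \Delta_V T$, $T_E \leq \Delta_E T$) is already carried out inside the proof of Theorem~\ref{thm:main}. The only mild point worth noting is that fixing $\eps$ to a specific constant is harmless even if one wants a cruder constant error: as observed in the proof of Theorem~\ref{thm:main}, running the $(1/2,\delta)$ procedure costs only a constant factor in the number of stored edges, so the bound holds for any target constant accuracy.
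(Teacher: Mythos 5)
Your proposal is correct and is exactly what the paper intends: the corollary is stated immediately after Theorem~\ref{thm:main} with no separate proof, being nothing more than the theorem specialized to constant $\eps$ and $\delta$ so that the $\eps^{-2}\log(1/\delta)$ prefactor is absorbed into the implicit constant. (As a minor aside, the corollary as printed omits the factor of $m$, which your restatement correctly retains from the theorem.)
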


    \subsection{Time Complexity}
    \begin{lemma}
      If our algorithm samples $\overline{m}$ edges, we can compute $\overline{T}$ in $O(\overline{m}^\frac{3}{2})$ time.
    \end{lemma}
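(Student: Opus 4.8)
The plan is to reduce the entire post-processing to listing the triangles of the subsampled graph and then doing $O(1)$ work per listed triangle. Let $\overline{G}$ be the graph on $V(G)$ whose edge set is the union $E_0\cup\dots\cup E_4$ of the edges actually stored by the single-pass algorithm, so $\overline{G}$ is a subgraph of $G$ with $\overline{m}$ edges (in particular every triangle of $\overline{G}$ is a triangle of $G$). By the argument in the proof of Lemma~\ref{spass}, every edge of every triangle that can influence any of the quantities $\mathcal{T}_v$, $\overline{T_L}$, or $\overline{T_H}$ is present in $\overline{G}$; hence it suffices to enumerate the triangles of $\overline{G}$ and process each one. First I would run a classical $O(m^{3/2})$-time triangle-listing algorithm---e.g.\ \cite{IR78}, or one of the combinatorial variants \cite{SW05,L08}, or \cite{BPVZ14} when it is faster---on $\overline{G}$. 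Since a graph with $\overline{m}$ edges has only $O(\overline{m}^{3/2})$ triangles, this produces the full list in $O(\overline{m}^{3/2})$ time, after an $O(\overline{m})$ setup of adjacency structures and of $O(1)$-lookup tables for the hash values $d_i,c,h,r_{D,i},r_C$ restricted to the sampled vertices and edges.

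The remaining computation is two sweeps over this triangle list. In the first sweep I compute the scale estimates: for each listed triangle $t=(u,v,w)$, each of its endpoints (say $v$), and each $i\in\{1,2\}$, I check in $O(1)$ time whether $d_i$ of the other two vertices equals $1$; if so, $t$ contributes to $X_{v,i}^{(h)}$ for exactly the scales $h$ at or above a single threshold $h^\ast$ determined by $\max\{r_{D,i}(vu),r_{D,i}(vw)\}$, so I record a range-increment on a per-vertex, per-$i$ difference array indexed by $h$. A single prefix-sum pass over these arrays then recovers all the values $X_{v,i}^{(h)}$, from which $\mathcal{T}_{v,i}=\min\{h: X_{v,i}^{(h)}\geq \omega\wt{T}2^h/k^2\}$ and $\mathcal{T}_v=\min\{\mathcal{T}_{v,1},\mathcal{T}_{v,2}\}$ follow in time linear in the array sizes (for vertices where Lemma~\ref{spass} only lets us conclude $\mathcal{T}_v>h(v)$ or $\mathcal{T}_v=\lo$, that suffices, since then $\overline{T_v}=0$). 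In the second sweep I compute $\overline{T_L}$ by counting the listed triangles all of whose vertices lie in $V_L'$ (i.e.\ have $\mathcal{T}_\cdot=\lo$ and $c=1$) and multiplying by $k^{3/2}$, and simultaneously compute $\overline{T_H}$ by, for each listed triangle $(u,v,w)$ and each of its vertices $x$ with $\mathcal{T}_x\neq\lo$ and $h(x)=\mathcal{T}_x$, testing the $c$ and $r_C$ conditions for counting $(u,v,w)$ at $x$ and, when they hold, adding $\frac{k^2}{\omega 2^{\mathcal{T}_x}}\cdot\frac1s$ with $s=|\{y\in\{u,v,w\}:\mathcal{T}_y\neq\lo\}|$ to a running total; finally $\overline{T}=\overline{T_L}+\overline{T_H}$.

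Each sweep spends $O(1)$ time per (triangle, endpoint) pair, so both together cost $O(\overline{m}^{3/2})$, and the difference arrays and their prefix sums have total size $O(\overline{m}\log k)$, which is dominated. The one place where a naive implementation would be too slow is the first sweep, because a single triangle can affect $X_{v,i}^{(h)}$ at many of the $\Theta(\log\frac{\sqrt k}{\omega})$ scales; the difference-array/prefix-sum trick is exactly what collapses this to $O(1)$ amortized per triangle, and I expect this to be the only nonroutine step. Everything else is bounded by the cost of the triangle-listing subroutine, so the total running time is $O(\overline{m}^{3/2})$, as claimed.
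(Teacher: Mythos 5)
Your proposal is correct and follows essentially the same route as the paper: list the triangles of the stored edge set with an $O(\overline{m}^{3/2})$-time algorithm such as \cite{IR78}, observe that every triangle relevant to $\mathcal{T}_v$, $\overline{T_L}$, and $\overline{T_H}$ survives in the sample, and then spend $O(1)$ amortized work per listed triangle, with the only nontrivial point being to avoid a $\log k$ factor when accumulating the counts $X_{v,i}^{(h)}$ across scales. The paper handles that last point by sorting the triangle copies by their threshold scale and sweeping with running totals, whereas you use per-vertex difference arrays and prefix sums over $h$; these are interchangeable bookkeeping devices (your lower-order $O(\overline{m}\log k)$ overhead is no worse than the paper's sorting step), so the argument matches.
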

    \begin{proof}
      Let $\left(E_i\right)_{i = 1}^5$ be as described in the previous section. After executing the sampling phase of our single-pass algorithm, by Lemma \ref{spass}, $E' = \bigcup_{i=1}^5 E_i$ contains every triangle we need to calculate $\overline{T}$. We can list the triangles in $E'$ in $O\left(\overline{m}^\frac{3}{2}\right)$ time by using the algorithm in \cite{IR78}. We can then calculate $\overline{T}$ by passing over this list (of length no more than $\overline{m}^\frac{3}{2}$) and, for each triangle in the list, checking whether it contributes to $\overline{T_L}$ or $\overline{T_H}$ and if so, what that contribution is.

      Once the $\mathcal{T}_v$ are known, this only requires checking a constant number of criteria per triangle. However, we must first calculate the $\mathcal{T}_v$. Doing this na\"ively, by iterating over every triangle and checking all the $X_{v,i}^h$ it could contribute to, could take as much as $O\left(\overline{m}^\frac{3}{2} \log k\right)$ time. 

      Recall that a triangle $(u,v,w)$ contributes to $X_{v,i}^{(h)}$ iff $d(u) = d(w) = 1$ and $r_{D,i}(vu), r_{D,i}(vw) < \frac{2^h\omega}{\sqrt{k}}$. $X_{v,i}^{(h)}$ is therefore monotone increasing in $h$, so we can calculate all the $X_{v,i}^{(h)}$ in $O(\overline{m}^\frac{3}{2})$ time as follows:

      \begin{enumerate}
        \item Sort our edges $e$ by the value of $r_{D,i}(e)$.
        \item Copy each triangle $(u,v,w)$ three times, choosing a different vertex as the ``main vertex''$v$ each time.
        \item Sort these triangles in ascending order of $\max\lbrace r_{D,i}(vu), r_{D,i}(vw)\rbrace$. (and therefore by the least value of $h$ needed for the triangle to contribute to $X_{v,i}^{(h)}$)
        \item Run through the list in order, for each triangle adding 1 to $X_{v,i}^{(h)}$ for each $h$ such that \[
            h \geq \log\left(\frac{\sqrt k}{\omega}\max\lbrace r_{D,i}(vu), r_{D,i}(vw)\rbrace\right)
          \] By storing the running total as we run through the list, we can do this in a constant number of updates. For any $v,i$, the first time we set $X_{v,i}^{(h)}$ to be greater than $\frac{\omega \wt{T} 2^h}{k^2}$ for some $h$, set $\mathcal{T}_{v,i}$ to $h$.
      \end{enumerate}

      For any $v,i$ where $\mathcal{T}_{v,i}$ has not been set at the end of this process, it is then $\lo$. We now have $\mathcal{T}_{v,i}$ for $i \in \lbrack 2 \rbrack$ and for every $v$ involved in one of our triangles, so we may calculate $\overline{T}$ in time $O\left(\overline{m}i^\frac{3}{2}\right)$ by iterating over our list of triangles and checking the contribution of each one to $\overline{T_L}$ and $\overline{T_H}$.

    \end{proof}

    \section{Lower Bounds}
    We recall our definition of an \emph{instance lower bound}.
    \restate{dfn:solving}
    \restate{dfn:ilb}

    For each bound, we will show two distributions on subgraphs that cannot be distinguished with less space/samples than our lower bound.

    \restate{dfn:distinguish}
    \restate{lem:ibounddist}
    \begin{proof}
    By our definition of ``solving'' a graph, any algorithm which $(\epsilon,1/10)$ solves $G$ can also distinguish between a draw from $\mathcal{G}_1$ and $\mathcal{G}_2$ by returning 1 for any graph with at least $C$ triangles and 0 for any graph with fewer than $C$ triangles, so that $\Pr[f(\mathcal{A}(G_1)) = 1] \geq 4/5$
        and $\Pr[f(\mathcal{A}(G_2)) \not= 1] \geq 4/5$.

    \end{proof}

    \subsection{Multiple Heavy Edges}
    We will express the bound shown in \cite{BOV13} as an instance bound.

    \restate{dfn:hegraph}
    \restate{thm:edgelb}
    \begin{proof}
      As in, \cite{BOV13} we reduce to the indexing problem $\text{Index}_n$: Alice has a binary vector $w$ of length $n$, and Bob has an index $x \in \lbrack n\rbrack$. Alice must send a message to Bob such that Bob can communicate $w\lbrack x\rbrack$. By \cite{CCK10}, the randomized communication complexity of this problem is $\Omega(n)$. So for any instance of the problem:

      \begin{description}
        \item[If $r = 1$:] Alice can encode $w$ in a subgraph of $D_{r,d}$ by, for each $i$ in $\lbrace0, \dots, n -1\rbrace$, including $u_{2i}u_{2i+1}$ iff $w_i = 1$. She can then run a distinguishing algorithm on these edges and send them to Bob, who proceeds to add the $d$ vertices $\lbrace v_i\rbrace_{i=0}^{d-1}$ to the graph, connecting each of them to $u_{2x}$ and $u_{2x+1}$. If the algorithm reports the graph has 0 triangles, Bob reports that $w_x = 0$, and if it has $d$ triangles, he reports that $w_x = 1$.
        \item[If $r > 1$:] Alice and Bob can perform the same procedure but each copying their graph $r$ times, with Bob now checking if he has $rd$ triangles.
      \end{description}

      By \cite{BOV13}, there is a distribution on inputs such that this algorithm requires at least $\Omega\left(d\right)$ bits of storage. Let $G'_1$, $G'_2$ correspond to the distributions on the graph Alice and Bob create conditioned on $w_x = 1$ and $w_x = 0$ respectively. Note that $T(G_1) = T(D_{r,d})$ and $T(G_2) = 0$ with probability 1.

      Then, letting $H$ be an instance of $D_{r,d}$, each draw from a $G'_i$ is isomorphic to a subgraph of $H$. We can therefore define subgraph distributions $G_1, G_2$ on $H$ and a permutation distribution $\sigma$ such that $\sigma^{-1}(G'_i) = G_i$.

      So we now have distributions on subgraphs $G_1, G_2$ of $H$ such that it is hard to distinguish between $\sigma(G_1),\sigma(G_2)$ for some non-uniform permutation distribution $\sigma$. However, if it is hard to distinguish these, they must also be hard to distinguish for a uniformly random $\sigma$, as otherwise any distinguishing algorithm could distinguish them by randomly permuting its input.

      So counting triangles for $D_{r,d}$ requires $\Omega\left(d\right)$ bits of storage.

    \end{proof}

    \subsection{Hubs Graph}
    \restate{dfn:hubgraph}
    \restate{thm:hublb}
    \begin{proof}
      We use a reduction from the Boolean Hidden Matching problem~\cite{BJK04,Kerenidis:2006,GKKRd07}, in particular the variant set out in \cite{GKKRd07}, which we will refer to as $\text{BHM}_{n,\alpha}$. In this problem, Alice gets a string $x \in \lbrace 0, 1\rbrace^{2n}$, Bob an $\alpha$-partial matching $M$ on $\lbrack 2n \rbrack$ (interpreted as an $\alpha n \times 2n$ matrix, where the $k^\text{th}$ row corresponds to the $k^\text{th}$ edge of the matching, so that if the edge is $ij$, the $i^\text{th}$ and $j^\text{th}$ position of the row are 1, and all others are 0) and a string $w \in \lbrace 0, 1\rbrace^{\alpha n}$. 

      Then, with matrix multiplication interpreted mod 2, Bob must determine whether $Mx = w$ or $Mx = \overline{w}$.

      In \cite{GKKRd07}, it was shown that this requires $\Omega(\sqrt{n/\alpha})$ bits of communication in the randomized one-way communication model.

      We provide a protocol for $\text{BHM}_{n,\alpha}$, given an algorithm capable of distinguishing between graphs with $\alpha n$ and $0$ triangles, as follows:

      \begin{enumerate}
        \item Alice encodes her string $x \in \lbrace 0, 1\rbrace^{2n}$ as a graph as follows: a single vertex $a$, $2n$ vertices $\lbrace b_i \rbrace_{i \in \lbrack 2n\rbrack}$, and $2n$ vertices $\lbrace c_i \rbrace_{i \in \lbrack 2n\rbrack}$. Then, for each $i \in \lbrack 2n\rbrack$, she adds the edge $ab_i$ if $x_i = 0$, and $ac_i$ if $x_i = 1$. She then runs the distinguishing algorithm on the resultant graph and sends the internal state of the algorithm to Bob.

        \item Bob then, using the same vertex IDs, encodes $M$ and $w$ as follows: For the $k^\text{th}$ edge $ij$ in the matching, he adds $b_ib_j,c_ic_j$ to his graph if $w_k = 0$, and $b_ic_j,b_jc_i$ if $w_k = 1$. He then starts the triangle counting algorithm with the internal state sent to him by Alice, inserts the edges he has created, and reads the output of the algorithm. If the algorithm reports that the graph has $\alpha n$ triangles, he reports that $Mx = w$, and if it reports $0$ triangles, he reports that $Mx = \overline{w}$.
      \end{enumerate}

For each $l \in \lbrack \alpha n\rbrack$ (with the $l^\text{th}$ edge of the matching being $ij$), the graph will contain the triangle $(a,b_i,b_j)$ if $w_l = 0$ and $x_i = x_j = 0$,  $(a,c_i,c_j)$ if $w_l = 0$ and $x_i = x_j = 1$, $(a,b_i,c_j)$ if $w_l = 1$ and $x_i = 0, x_j = 1$, and $(a,b_j,c_i)$ if $w_l = 1$ and $x_i = 1, x_j = 0$. So in each copy, there will be one triangle for the $l^\text{th}$ edge $ij$ iff $w_l = x_i + x_j \mod 2$. So the number of triangles $T$ in the graph is $\alpha n$ if $Mx = w$ and $0$ if $Mx = \overline{w}$, and so this protocol correctly solves $\text{BHM}_{n, \alpha}$ provided the triangle counting algorithm approximates the number of triangles up to an $\alpha n/2$ additive error.

      By \cite{GKKRd07}, there is a distribution on inputs such that this algorithm requires at least $\Omega\left(\sqrt{n/\alpha}\right)$ bits of storage. Let $G'_1$, $G'_2$ correspond to the distributions on the graph Alice and Bob create conditioned on $Mx = w$ and $Mx = \overline{w}$ respectively. Note that $T(G_1) = \alpha n$ and $T(G_2) = 0$ with probability 1.

      Then, each of the $G'_i$ will always consist of a single hub vertex, $2n$ other vertices, $n$ edges from the hub to the other vertices, and a partial matching on the other vertices. Therefore, letting $H$ be an instance of $H_{1/\alpha,\alpha n}$, each draw from a $G'_i$ is isomorphic to a subgraph of $H$. We can therefore define subgraph distributions $G_1, G_2$ on $H$ and a permutation distribution $\sigma$ such that $\sigma^{-1}(G'_i) = G_i$.

      So we now have distributions on subgraphs $G_1, G_2$ of $H$ such that it is hard to distinguish between $\sigma(G_1),\sigma(G_2)$ for some non-uniform permutation distribution $\sigma$. However, if it is hard to distinguish these, they must also be hard to distinguish for a uniformly random $\sigma$, as otherwise any distinguishing algorithm could distinguish them by randomly permuting its input.

      So counting triangles for $H_{1/\alpha, \alpha n}$ requires $\Omega\left(\sqrt{n/\alpha}\right)$ bits of storage, and the result follows from setting $r = 1/\alpha$, $d = \alpha n$.
    \end{proof}

    \subsection{Independent Triangles}
    \restate{dfn:itgraph}
    \restate{thm:itlb}
    If $n$ is constantly bounded, the result holds automatically, so we will assume $n \geq 80$.

    We start by proving a lemma that applies to all graphs.

    \define{lem:ttwothirds}{Lemma} {
      Let $G$ be a graph with $m$ edges. Let $\mathcal{A}$ be an algorithm which, when given any graph isomorphic to $G$ as input, samples at least one triangle with constant probability. $\mathcal{A}$ must sample $\Omega\left(\frac{m}{T^{2/3}}\right)$ edges in expectation.
    }
    \state{lem:ttwothirds}
    \begin{proof}
      As such algorithm must work for an arbitrary permutation of the vertices of $G$, we may assume without loss of generality that for any sampling algorithm we consider, for all potential triangles $t$ in $G$, the probability $p_t$ that $t$ is sampled by the algorithm if it exists is the same, and likewise for all potential edges $e$ in $G$. Then, we can bound $p_t$ as follows: For any set of $l$ vertices, let $R$ be the number of edges amongst these $l$ vertices that are sampled. Clearly $\mathbb{E}\left\lbrack R\right\rbrack \leq l^2p_e$. Conditioned on $R = r$, it can be shown (\cite{stackex:823650}) that there are $\lesssim r^\frac{3}{2}$ triangles amongst the edges sampled. So as there are $\Omega(l^3)$ triangles amongst the $l$ vertices, $p_t$ conditioned on $R = r$ is $\lesssim \left(\frac{\sqrt{r}}{l}\right)^3$. By Markov's inequality, $R \lesssim l^2p_e$ with constant probability (and we may choose any constant). Letting this event be $A$, we therefore have $p_t \lesssim p_e^\frac{3}{2}$ conditioned on $A$. So $p_e$ must be $\Omega\left(\frac{1}{T^{2/3}}\right)$ if we want to sample at least one triangle with constant probability.
    \end{proof}

    Let $I$ be an instance of $I_n$. As $T(I) = n$, and $|E(I)| = 3n$, we therefore need to sample $\Omega(n^\frac{1}{3})$ edges if we are to sample at least one triangle with constant probability. We now need to demonstrate that a sampling algorithm that does not sample any triangles from $I$ will also fail to distinguish between certain graphs with very different numbers of triangles. We will need the following lemma to proceed:

    \begin{lemma}
      \label{cyclecoloring}
      Let $H$ be an acyclic graph, and let $\chi$ be a random 2-coloring of $V(H)$. The following graphs are identically distributed:

      \begin{align*}
        H_1 & = (V(H), \lbrace uv \in E(H) | \chi(u) = \chi(v) \rbrace)\\
        H_2 & = (V(H), \lbrace uv \in E(H) | \chi(u) \not= \chi(v) \rbrace)
      \end{align*}
    \end{lemma}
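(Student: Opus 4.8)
\emph{Proof proposal.} The plan is to reduce to a single tree and then use a bijection between $2$-colorings and ``edge signs.''

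First I would note that since $H$ is acyclic it is a forest, and the restrictions of $\chi$ to distinct connected components are independent. Both $H_1$ and $H_2$, viewed as distributions over spanning subgraphs of $H$, therefore factor as products over the components of $H$, so it suffices to prove the claim when $H$ is a single tree $T$ on a vertex set $V$ with edge set $E$ (so $|E| = |V| - 1$).

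Next, fix an arbitrary root $r \in V$, and to each edge $e = uv$ associate the ``sign'' $s_e \in \{0,1\}$ equal to $0$ if $\chi(u) = \chi(v)$ and $1$ otherwise. I would show that the map $\chi \mapsto (\chi(r), (s_e)_{e \in E})$ is a bijection from $\{0,1\}^V$ onto $\{0,1\} \times \{0,1\}^E$: given $\chi(r)$ and all the signs, the color of a vertex $w$ is recovered as $\chi(r)$ plus, modulo $2$, the sum of the signs along the unique $r$–$w$ path in $T$, and conversely every choice of root color and signs is realized by exactly one coloring. Since both sides have cardinality $2^{|V|}$ and $\chi$ is uniform, it follows that $\chi(r)$ together with the $s_e$ are mutually independent uniform bits. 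This step is where acyclicity is essential: in a graph with a cycle the signs along that cycle are forced to sum to $0 \bmod 2$, so they could not be independent.

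Finally, observe that $E(H_1) = \{ e \in E : s_e = 0\}$ and $E(H_2) = \{ e \in E : s_e = 1 \}$. Because $(s_e)_{e \in E}$ is a vector of i.i.d.\ uniform bits, the coordinatewise bit-flip map $s \mapsto 1 - s$ is measure-preserving and carries $E(H_1)$ to $E(H_2)$; hence these two random edge sets have the same law. As $H_1$ and $H_2$ share the vertex set $V$, they are identically distributed, and multiplying over the components of $H$ gives the lemma.

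I do not anticipate a genuine obstacle here; the only points requiring care are (i) the factorization over connected components and (ii) stating the bijection precisely, so that the independence of the edge signs — and therefore the symmetry between ``keep monochromatic'' and ``keep dichromatic'' — is transparent.
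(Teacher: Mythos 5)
Your proposal is correct. It shares the paper's high-level structure---reduce from a forest to a single tree, then show that the monochromatic/dichromatic indicators of the edges behave like independent fair coin flips---but the mechanism you use for the key step is different. The paper proves by structural induction on rooted subtrees that, for each of $H_1$ and $H_2$, every edge is included independently with probability $\tfrac{1}{2}$ (and independently of the root's color), and concludes the two distributions coincide. You instead exhibit the explicit change of variables $\chi \mapsto (\chi(r), (s_e)_{e \in E})$, observe it is a bijection between $\{0,1\}^V$ and $\{0,1\}\times\{0,1\}^E$ for a tree (well-definedness of the inverse using unique root-to-vertex paths is exactly where acyclicity enters), and deduce that the edge signs are i.i.d.\ uniform; the global bit-flip $s \mapsto 1-s$ then swaps $E(H_1)$ and $E(H_2)$ while preserving the measure. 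Your route is somewhat more transparent: it isolates in one stroke the precise joint distribution of the signs, makes the role of acyclicity explicit (a cycle would impose a parity constraint on its signs), and replaces the induction bookkeeping with a counting argument; the paper's induction, on the other hand, directly delivers the ``each edge kept independently with probability $\tfrac{1}{2}$'' form in which the fact is later invoked, without needing the bijection or the flip symmetry. Both the component factorization and the tree argument in your write-up are sound, so there is no gap to repair.
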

    \begin{proof}
      As $H$ is acyclic, it is a forest. Clearly the behavior of two distinct trees in the forest under any given coloring are independent of one another, so it suffices to prove that the result holds for $H$ a tree. Choose a vertex $v$ to act as the root of $H$. For all subtrees $H'$ of $H$, with roots $r'$, let $H'_1 = H' \cap H_1, H'_2 = H' \cap H_2$. We will show that for $i = 1,2$, $H'_i$ includes each edge in $H'$ independently with probability $\frac{1}{2}$, and independently of value of $\chi(r')$.

      We proceed by structural induction. Suppose the result holds for all proper subtrees of $H'$. Then let $C$ be the set of children of $r'$. As the proper subtrees of $H'$ do not share any vertices, the inclusion of any edge in $H'_i$ is independent between any two distinct proper subtrees, and independent of the values $\chi$ takes on $C$.  The only remaining edges are the edges $\lbrace r'c | c \in C\rbrace$. Then, for any value of $\chi(r')$, the inclusion of each $r'c$ happens with probability $\frac{1}{2}$, independently of each other and of the value of $\chi(r')$. So as the proper subtree edges are included independently of the value of $\chi$ on $C$, they are included independently of the $r'c$. So every edge in $E(H')$ is included with probability $\frac{1}{2}$, independently of one another and the value of $\chi(r')$.

      So $H_1$, $H_2$ are identically distributed.
    \end{proof}

    Now, with $\chi$ a random 2-coloring of the vertices of $I$, we define two subgraphs $I_1, I_2$ of $I$ as follows:

    \begin{align*}
      I_1 & = (V(I), \lbrace uv \in E(I) | \chi(u) = \chi(v) \rbrace)\\
      I_2 & = (V(I), \lbrace uv \in E(I) | \chi(u) \not= \chi(v) \rbrace)
    \end{align*}

    Note that $I_1,I_2$ both have $\frac{n}{2}$ edges in expectation. $T(I_2)$ is always 0, and $I_1$ will include every triangle independently with probability $\frac{1}{4}$. So as $n \geq 80$, $T(I_1) \geq \frac{n}{8} > 0 \geq T(I_2)$ with probability $\geq \frac{9}{10}$.

    So for any sampling algorithm to sample a triangle from either $I_1$ or $I_2$ with constant probability, it must sample $\Omega\left(m^\frac{1}{3}\right)$ edges in expectation. Suppose it does not. Then let $\overline{I}$ be the set of edges from $I$ it samples. (and so $\overline{I_i} = I_i \cap \overline{I}$ is the set of edges of edges from $I_i$ it samples) As $I$ contains no cycle longer than a triangle, $\overline{I}$ is acyclic. Therefore, by Lemma \ref{cyclecoloring},  $\overline{I_1},  \overline{I_2}$ are identically distributed. So conditioned on failing to sample a triangle, the algorithm is incapable of distinguishing between $I_1$ and $I_2$.

    So to distinguish between $I_1$ and $I_2$ a sampling algorithm must sample $\Omega\left(n^\frac{1}{3}\right)$ edges in expectation.

    \subsection{\texorpdfstring{$G_{n,p}$}{G(n, p)}}
    \restate{thm:gnplb}
    Let $G$ be a random draw from $G_{n,p}$. We will use the same pair of colorings as in our proof of the $\frac{m}{T^{2/3}}$ bound. However, as a graph drawn from $G_{n,p}$ can contain cycles longer than triangles, we need some extra work to prove the subsampled graph is acyclic. We start by extending a result from \cite{stackex:823650}.

    \begin{lemma}
      \label{cyclecount}
      Let $G$ be a graph with $m$ edges. There are at most $(2m)^\frac{l}{2}$ length-$l$ cycles in $G$.
    \end{lemma}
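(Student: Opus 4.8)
The plan is to bound the number of length-$\ell$ cycles by the number of \emph{closed walks} of length $\ell$, and then control the latter spectrally via the trace of a power of the adjacency matrix. As a sanity check, at $\ell = 3$ this recovers the $(2m)^{3/2}$ bound on triangles invoked a moment earlier.

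First I would fix notation: let $A$ be the adjacency matrix of $G$, an $n \times n$ real symmetric $0/1$ matrix, and recall that $(A^\ell)_{uu}$ is the number of closed walks of length $\ell$ based at $u$, so $\operatorname{tr}(A^\ell)$ counts all closed walks of length $\ell$ in $G$. Each subgraph of $G$ isomorphic to $C_\ell$ gives rise to $2\ell$ distinct such walks (a choice of start vertex and of direction), and distinct $\ell$-cycles give disjoint walk sets, so the number of length-$\ell$ cycles is at most $\operatorname{tr}(A^\ell)$; we will not even need the factor $1/(2\ell)$ this would afford.

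Next I would diagonalize. Writing $\lambda_1,\dots,\lambda_n$ for the (real) eigenvalues of $A$, we have $\operatorname{tr}(A^\ell) = \sum_i \lambda_i^\ell$, and $\sum_i \lambda_i^2 = \operatorname{tr}(A^2) = 2m$, since each of the $m$ edges contributes $2$ to $\sum_{u,v} A_{uv}^2$; in particular $\max_i \lambda_i^2 \le 2m$. Then for every $\ell \ge 2$,
\[
\#\{\text{length-}\ell\text{ cycles}\} \;\le\; \operatorname{tr}(A^\ell) \;=\; \sum_i \lambda_i^\ell \;\le\; \sum_i |\lambda_i|^\ell \;=\; \sum_i |\lambda_i|^{\ell-2}\lambda_i^2 \;\le\; \big(\max_i |\lambda_i|\big)^{\ell-2}\sum_i \lambda_i^2 \;\le\; (2m)^{(\ell-2)/2}\cdot 2m \;=\; (2m)^{\ell/2},
\]
which is exactly the claimed bound.

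The one thing I would be careful about is the parity of $\ell$. The tempting elementary argument — an $\ell$-cycle is pinned down by choosing $\ell/2$ pairwise disjoint cycle edges in order and with orientations, giving at most $(2m)^{\ell/2}$ candidates — closes cleanly only when $\ell$ is even; for odd $\ell$ one vertex is left unmatched and the bookkeeping degrades (one is left bounding a sum of common-neighbourhood sizes). The spectral estimate above has no such issue, since factoring out $(\max_i |\lambda_i|)^{\ell-2}$ and leaving $\sum_i \lambda_i^2$ is insensitive to the parity of $\ell$. So I expect no genuine obstacle here, only the need to state clearly that ``length-$\ell$ cycle'' means a $C_\ell$ subgraph (so $\ell \ge 3$, and in particular the passage through closed walks is a harmless overcount).
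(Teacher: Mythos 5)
Your proof is correct, but it takes a genuinely different route from the paper. You bound the number of $\ell$-cycles by the number of closed $\ell$-walks, $\operatorname{tr}(A^\ell)=\sum_i\lambda_i^\ell$, and then use $\max_i|\lambda_i|\le\sqrt{2m}$ together with $\sum_i\lambda_i^2=2m$ to get $\operatorname{tr}(A^\ell)\le(2m)^{(\ell-2)/2}\cdot 2m=(2m)^{\ell/2}$; all steps check out, and the overcount by walks is harmless since only an upper bound is needed. The paper instead runs exactly the ``tempting elementary argument'' you set aside, and resolves the odd-parity issue combinatorially: it defines $A(v)=\lbrace u\in N(v): d(u)\ge d(v)\rbrace$, observes $|A(v)|\le\sqrt{2m}$, orients each cycle $(v_0,\dots,v_{l-1})$ so that $v_{l-1}$ has the largest degree on the cycle, and then encodes the cycle by $l/2$ alternating directed edges when $l$ is even, or by $(l-1)/2$ directed edges plus the final vertex when $l$ is odd — the final vertex being forced to lie in $A(v_{l-2})\cap A(v_0)$, a set of size at most $\sqrt{2m}$, which supplies the missing factor. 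The trade-off: your spectral argument is shorter, parity-blind, and slots into a broader toolbox (trace/moment bounds on walk counts), whereas the paper's encoding argument is elementary, needs no linear algebra, and is a direct extension of the cited triangle-counting argument it builds on; both yield the same $(2m)^{l/2}$ bound.
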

    \begin{proof}
      As in \cite{stackex:823650}, we define $\forall v \in V(G), A(v) = \lbrace u \in N(v), d(u) \geq d(v)\rbrace$. $\forall w \in V(G)$, $|A(w)| \leq \sqrt{m}$, if $|A(w)| > \sqrt{2m}$, then there are $>\sqrt{2m}$ vertices of degree $>\sqrt{2m}$ in the graph, and thus $> m$ edges. So $|A(w)| \leq \sqrt{2m}$.

      We order each $l$-cycle as $(v_0, \dots,v_{l-1})$ so that the $v_{l-1}$ has the highest degree in the cycle. (there are two such orderings, we pick between them arbitrarily) So each cycle now corresponds to $l$ \emph{directed} edges $(v_0v_1,v_1v_2,\dots,v_{l-2}v_{l-1})$. 

      Suppose $l$ is even. The cycle is then uniquely determined by the $\frac{l}{2}$ edges $\lbrace v_{2i}v_{2i+1} \rbrace_{i \in \left\lbrack\frac{l}{2}\right\rbrack}$ and their directions. So there are at most $(2m)^\frac{l}{2}$ $l$-cycles in $G$.

      Now suppose $l$ is odd. The cycle is then uniquely determined by the $\frac{l - 1}{2}$ edges $\lbrace v_{2i}v_{2i+1} \rbrace_{i \in \left\lbrack\frac{l - 1}{2}\right\rbrack}$ and their directions, plus the final vertex $v_{l-1}$. There are at most $(2m)^\frac{l-1}{2}$ possible choices for the edges and directions. Furthermore, as $v_{l-1}$ has the highest degree of any vertex in the cycle, $v_{l-1} \in A(v_{l-2}) \cap A(v_0)$. So as this set has size $\leq \sqrt{2m}$, there are at most $\sqrt{2m}$ possibilities for $v_{l-1}$, and so there are at most $(2m)^\frac{l}{2}$ $l$-cycles in $G$.
    \end{proof}

    We can now show that a sampling algorithm that does not use $\Omega\left(\frac{1}{p}\right)$ samples will, with probability bounded below by an arbitrary constant, fail to sample any cycle in the graph. Let $\overline{G}$ be the subgraph of $K_n$ that the algorithm would sample, and let $q$ be the probability of sampling any given edge. (as in the independent triangles case, we will choose our graph permutation uniformly at random, so without loss of generality we can assume this is the same for all edges) Then $\mathbb{E}\left\lbrack |E(\overline{G})|\right\rbrack \leq qn^2$. So, by Lemma \ref{cyclecount} and Markov's inequality, the number of $l$-cycles in $\overline{G}$ is $\leq q^\frac{l}{2}(2n)^l$ with constant probability. (for any arbitrarily small constant, so let us choose it to be $\frac{1}{2}$) So, conditioned on this constant probability event $A$, the probability of any $l$-cycle being in $\overline{G} \cap G$ is $\leq p^lq^\frac{l}{2}(2n)^l$. So, taking the union bound across all $l$, this probability is $\leq \sum_{l = 3}^\infty p^lq^\frac{l}{2}(2n)^l$. So $q$ must be $\Omega\left(\frac{1}{n^2p^2}\right)$ for at least one $l$-cycle to be found with probability $\geq \frac{1}{2}$.

    Then, for $\chi$ a random coloring of $\lbrack n\rbrack$, let $G_1$, $G_2$ be defined as follows:

    \begin{align*}
      G_1 & = (V(G), \lbrace uv \in E(G) | \chi(u) = \chi(v) \rbrace)\\
      G_2 & = (V(G), \lbrace uv \in E(G) | \chi(u) \not= \chi(v) \rbrace)
    \end{align*}

    Note that $\mathbb{E}\left\lbrack T(G) \right\rbrack = \binom{n}{3}p^3$. $T(G_2)$ will always be 0. Then, there are at least $\left\lceil \frac{n}{2} \right\rceil$ vertices with the same coloring under $\chi$. So we can take $H \subseteq G_1$ such that all vertices of $H$ have the same color and $H$ is distributed as $G_{\left\lceil\frac{n}{2}\right\rceil,p}$. Then, let $B$ be the number of triangles in $H$. $\mathbb{E}\left\lbrack B\right\rbrack = \binom{\left\lceil n/2\right\rceil}{3}p^3$, and by \cite{matvon08}, $\frac{\var(B)}{\mathbb{E}\left\lbrack B\right\rbrack^2} = O\left(\frac{1}{n^3p^3} + \frac{1}{n^2p}\right)$, so for $p = \Omega\left(\frac{1}{n}\right)$, $B$ will be a constant fraction of $\mathbb{E}\left\lbrack T(G)\right\rbrack = O\left(n^3p^3\right)$ with probability $\frac{9}{10}$.

    Now, by Lemma \ref{cyclecoloring}, conditioned on a sampling algorithm failing to sample any cycles, the samples taken from $G_1$ and $G_2$ are identically distributed, in which case the algorithm can distinguish them with probability at most $\frac{1}{2}$. So $q$ must be $\Omega\left(\frac{1}{n^2p^2}\right)$ to distinguish them with probability $\geq \frac{3}{4}$, and so as $G_1$, $G_2$ each contain $\Omega(n^2p)$ edges in expectation, the algorithm must sample $\Omega\left(\frac{1}{p}\right)$ edges in expectation.

    \section{Triangle-Dependent Sampling Lower Bound}

    \restate{dfn:triangledep}

    \restate{dfn:dgepsilon}

    \begin{lemma}
      \label{epsneeded}
      Let $f$ be a triangle-dependent sampling algorithm that solves triangle counting for a graph $G$ with $(\epsilon,\delta)$ error. For any set $A$ such that $A \subseteq V(G) \cup E(G)$, if the number of triangles in $G$ involving edges or vertices from $A$ is $\geq 2\epsilon T$, the probability that $f$ samples no triangles involving edges or vertices from $A$ must be $\leq 2\delta$.
    \end{lemma}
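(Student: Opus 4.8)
The plan is a coupling argument between two runs of $f$: one on $G$ itself, and one on the graph $G_A$ obtained from $G$ by deleting every vertex and edge belonging to $A$ (together with all edges incident to a deleted vertex). Both $G$ and $G_A$ are (isomorphic to) subgraphs of $G$, so Definition~\ref{dfn:solving} applies to each: with probability at least $1-\delta$, $f$ run on $G$ outputs a value in $[T-\epsilon T,\,T+\epsilon T]$, and with probability at least $1-\delta$, $f$ run on $G_A$ outputs a value in $[T(G_A)-\epsilon T,\,T(G_A)+\epsilon T]$ (the additive error is $\epsilon T(G)=\epsilon T$ in both cases). Since at least $2\epsilon T$ triangles of $G$ use a vertex or edge of $A$, deleting $A$ destroys all of them, so $T(G_A)\le T-2\epsilon T$, and hence the two target intervals are disjoint (up to a shared endpoint in the extreme case, which we may safely ignore, e.g.\ by treating the approximation guarantee as strict).

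Next I would exploit triangle-dependence. Fix the algorithm's random edge sample $S$ together with its internal coins, and run $f$ on $G$ and on $G_A$ with these same choices; this is legitimate because $f$'s sampling distribution does not depend on the input graph. The triangles of $G_A$ all of whose edges lie in $S$ are precisely the triangles of $G$ all of whose edges lie in $S$ that avoid $A$. Hence, on the event $\mathcal E_A$ that $f$ run on $G$ samples no triangle touching $A$, the two runs observe exactly the same set of sampled triangles, and so---since $f$ is triangle-dependent, its output being a fixed (randomized) function of the sampled-triangle set---they produce the same output. On $\mathcal E_A$ this common output cannot lie in both of the disjoint intervals from the previous paragraph, so on $\mathcal E_A$ at least one of the two runs falls outside its target interval.

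Therefore $\mathcal E_A$ is contained in the union of the two failure events, and a union bound gives $\Pr[\mathcal E_A]\le\delta+\delta=2\delta$, which is the claim. The one step that needs care is the appeal to triangle-dependence: one must check that it really forces the output to be determined by the sampled-triangle set alone (so that, under the coupling, identical sampled-triangle sets force identical outputs, the extra internal randomness being coupled as well), rather than merely letting the output distribution depend on that set in aggregate. This is exactly what Definition~\ref{dfn:triangledep} provides, since from its sampled triangles the algorithm cannot tell whether it is looking at $G$ or at $G_A$; everything else in the proof is a routine union bound.
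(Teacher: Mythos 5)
Your proof is correct and is essentially the paper's argument: the paper also compares $G$ with $G\setminus A$ (presented as a $\tfrac12$--$\tfrac12$ mixture rather than an explicit coupling), uses triangle-dependence to argue that conditioned on sampling no triangle touching $A$ the output cannot depend on which input was given, and concludes the event has probability at most $2\delta$ since the two correct answers differ by at least $2\epsilon T$. Your coupling-plus-union-bound phrasing is just a slightly more explicit rendering of the same idea.
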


    \begin{proof} 
      Suppose this did not hold for some $A$. Then we can give the algorithm as input $G$ with probability $\frac{1}{2}$ and $G \setminus A$ with probability $\frac{1}{2}$. Suppose $f$ samples no triangles involving $A$. Then, as $f$ is triangle-dependent, the output of the algorithm is independent of which of the two inputs the algorithm received. So there is at most a $\frac{1}{2}$ chance the algorithm will return a value within $\epsilon T(G)$ of its input. So this occurs with probability at most $2\delta$.
    \end{proof}

    We can now prove several lower bounds for such algorithms. Throughout, we will use the fact that for any sampling algorithm that is resilient to permutations of the input, the probability of sampling an edge can be taken to the same for all edges, or copies of any other fixed subgraph, without loss of generality. We will use $m$ to refer to $|E(G)|$ throughout.

    \begin{lemma}
      \label{tdindep}
      Let $f$ be a triangle-dependent sampling algorithm that solves triangle-counting for a graph $G$ with probability $\frac{9}{10}$ and error $\epsilon$. Then $f$ samples $\Omega\left(\frac{m}{T^{2/3}}\right)$ edges in expectation.
    \end{lemma}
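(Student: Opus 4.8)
The plan is to obtain this as a two-line composition of Lemma~\ref{epsneeded} with the graph-independent Lemma~\ref{lem:ttwothirds}. First I would apply Lemma~\ref{epsneeded} with the set $A = V(G)\cup E(G)$ taken to be \emph{all} of $G$. Every triangle of $G$ uses a vertex (in fact an edge) of $A$, so the number of triangles ``involving'' $A$ is exactly $T$; since we are in the regime of a sufficiently small constant $\epsilon$ (any $\epsilon < 1/2$), we have $T \geq 2\epsilon T$, so the hypothesis of Lemma~\ref{epsneeded} is met. With $\delta = 1/10$, Lemma~\ref{epsneeded} then tells us that the probability $f$ samples no triangle of $G$ whatsoever is at most $2\delta = 1/5$; equivalently, $f$ samples at least one triangle with probability at least $4/5$.

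The next point to note is that this conclusion is not special to one labelling of $G$: by Definition~\ref{dfn:solving}, an algorithm that solves $G$ with $(\epsilon,1/10)$ error must in particular handle every relabelling of $G$ (each such relabelling is isomorphic to a subgraph of $G$, namely $G$ itself), and the argument above used only that $f$ is a triangle-dependent algorithm solving $G$. Hence, run on any graph isomorphic to $G$, the algorithm $f$ samples at least one triangle with probability at least $4/5$.

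Finally I would invoke Lemma~\ref{lem:ttwothirds} with $\mathcal{A} = f$: an algorithm that, on every graph isomorphic to $G$, samples a triangle with constant probability must use $\Omega(m/T^{2/3})$ edges in expectation. That is exactly the claimed bound. There is essentially no hard step; the content lives entirely in Lemmas~\ref{epsneeded} and~\ref{lem:ttwothirds}, and the only things worth checking are bookkeeping --- that $\epsilon$ is small enough that $2\epsilon T < T$ (so the reduction in Lemma~\ref{epsneeded} has a genuine gap to exploit, since $G\setminus A$ has $0$ triangles and $G$ has $T$), that the constant $4/5$ counts as the ``constant probability'' required by Lemma~\ref{lem:ttwothirds} (it does, and could be amplified by repetition if one wanted it larger), and that the reduction in Lemma~\ref{epsneeded} does not itself cost samples (it does not, as it merely chooses between feeding $f$ the input $G$ or the input $G\setminus A$).
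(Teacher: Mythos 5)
Your proposal is correct and matches the paper's (very terse) argument: the paper proves this lemma as a direct consequence of Lemma~\ref{lem:ttwothirds}, with the needed hypothesis---that a triangle-dependent solver must sample at least one triangle with constant probability on every relabelling of $G$---left implicit, which is exactly what you supply via Lemma~\ref{epsneeded} applied with $A = V(G)\cup E(G)$. So you have simply made explicit the step the paper leaves to the reader; no substantive difference.
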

    \begin{proof}
      This is a direct consequence of Lemma~\ref{lem:ttwothirds}.
    \end{proof}

    \begin{lemma}
      \label{tdhubs}
      Let $f$ be a triangle-dependent sampling algorithm that solves triangle-counting a graph $G$ with probability $\frac{9}{10}$ and error $\epsilon$. Then $f$ samples $\Omega\left( m\frac{ \sqrt{\Delta_{V,2\epsilon}}}{T}\right)$ edges in expectation.
    \end{lemma}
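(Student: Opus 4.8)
Write $\Delta=\Delta_{V,2\epsilon}$, and let $q$ denote the common probability (by permutation‑resilience) that $f$ samples a given edge, so that $f$ samples $mq$ edges in expectation; the goal is to show $q=\Omega(\sqrt{\Delta}/T)$. The plan is to invoke Lemma~\ref{epsneeded} with a set $A$ of ``heavy'' vertices. By Definition~\ref{dfn:dgepsilon}, the heaviest‑vertex prefix $H_V$ satisfies $\sum_{v\in H_V}T_v\le 2\epsilon T$ with every $v\in H_V$ having $T_v\ge\Delta$, while adding the next vertex pushes the sum past $2\epsilon T$. Take $A$ to be $H_V$ together with $O(1)$ further vertices of weight $\asymp\Delta$, chosen so that the number of triangles touching $A$ is at least $2\epsilon T$ (the resulting constant loss in $\epsilon$ is the one already built into Theorem~\ref{thm:tdbound}). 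Then $|A|=O(\epsilon T/\Delta)$, each $v\in A$ carries $\Theta(\Delta)$ triangles, and $\ge 2\epsilon T$ triangles touch $A$; so by Lemma~\ref{epsneeded}, $f$ must sample some triangle touching $A$ with probability at least $1-2\delta=4/5$, and it suffices to show that this probability is $o(1)$ when $q=o(\sqrt{\Delta}/T)$.

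I would get this from a union bound over $v\in A$ together with a per‑vertex estimate of the shape
\[
\Pr\bigl[f\text{ samples a triangle through }v\bigr]\;\lesssim\;T\,q^{2}\qquad(T_v\asymp\Delta),
\]
after which $\Pr[f\text{ samples a triangle touching }A]\le\sum_{v\in A}\Pr[\cdots]\lesssim |A|\,Tq^{2}\asymp \tfrac{\epsilon T^{2}}{\Delta}q^{2}$, which for constant $\epsilon$ is $o(1)$ unless $q=\Omega(\sqrt{\Delta}/T)$; since $f$ samples $mq$ edges this finishes the lemma. To prove the displayed bound I would mirror the structure of Lemma~\ref{lem:ttwothirds}, replacing its ``hidden clique'' by a hidden hub (more precisely a hidden union of hubs, which is the extremal graph here and whose automorphism group acts transitively on its $\asymp T$ mutually isomorphic triangles). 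Sampling a triangle $(v,a,b)$ through $v$ requires sampling both edges $va,vb$ of a \emph{closed} wedge at $v$ and the far edge $ab$; writing $R_v$ for the number of sampled edges at $v$ and $R'_v$ for the number of sampled far edges at $v$ (so $\E[R_v]=d_v q$ and $\E[R'_v]=T_v q$), each sampled far edge that actually completes consumes two specific sampled $v$‑edges, so the number of sampled triangles through $v$ is at most a Cauchy–Schwarz combination of $R_v$ and $R'_v$; combining this with the fact that a constant fraction of $v$'s triangle mass lies on a matching‑like sub‑collection of $\Omega(\sqrt{T_v})\ge\Omega(\sqrt\Delta)$ triangles (a maximal matching in $v$'s link graph — the remaining, edge‑concentrated mass is governed by $\Delta_E$ or is trivial here), and averaging over the hidden relabelling, should yield the $Tq^{2}$ bound.

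The main obstacle is exactly this per‑vertex estimate. The naive route — bounding $\Pr[f\text{ samples a triangle through }v]$ by $T_v\cdot\Pr[\text{a fixed triangle is sampled}]\le T_v q^{3/2}$ — only reproduces the $m/T^{2/3}$ term and is useless in the hub regime $\Delta\ge T^{2/3}$. The difficulty is that a permutation‑resilient sampler can \emph{correlate} (``burst'') the edges incident to a vertex and thereby sample a fixed wedge with probability as large as $q$, not $q^{2}$, so the cost of the two $v$‑edges cannot be charged in isolation: the far edge $ab$ has to be used in the same step, and one must rule out the algorithm concentrating its whole edge budget in one heavy vertex's neighbourhood — which is precisely where the $\Theta(T/\Delta)$ other vertices of $A$ and the global budget $mq$ re‑enter. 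Making the second‑moment bookkeeping close cleanly (aligning the triangle masses, degrees, and far‑edge counts across $A$, choosing $A$ so the constants work out, and handling vertices whose link graphs are far from matchings, for which the bound is anyway only $\Omega(1)$, consistent with the statement) is the technical heart of the proof.
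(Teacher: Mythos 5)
Your overall skeleton is the same as the paper's (apply Lemma~\ref{epsneeded} to the prefix $A$ of heaviest vertices, note $|A|\lesssim T/\Delta_{V,2\epsilon}$ with each $T_v\gtrsim\Delta_{V,2\epsilon}$, and reduce to showing that a low-rate sampler catches a triangle at some $v\in A$ only with probability $o(1)$), but the step you yourself flag as ``the technical heart'' is exactly where the content of the lemma lives, and it is not closed. Worse, the target inequality you propose, $\Pr[f\text{ samples a triangle through }v]\lesssim T q^{2}$ for $T_v\asymp\Delta_{V,2\epsilon}$ with $q$ the marginal edge rate, is false as a pointwise statement in precisely the regime you need to rule out: a sampler that ``bursts,'' i.e.\ with some small probability $p$ takes $\approx n/\sqrt{\Delta_{V,2\epsilon}}$ of the potential edges at a vertex (together with the potential far edges among the chosen neighbours), captures a triangle at a hub-like heavy vertex with probability $\Theta(p)$, which is \emph{linear} in the burst probability while the marginal rate $q$ stays small; for $p$ small enough this exceeds any bound of the form $Tq^{2}$. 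So no estimate quadratic in the marginal rate can be proven vertex-by-vertex, and your sketched tools (Cauchy--Schwarz on $R_v,R'_v$, matchings in the link graph, symmetrizing over a hidden union of hubs) do not obviously substitute for what is actually needed, which is an argument about the \emph{distribution} of the number of sampled edges at a vertex rather than its marginal rate.

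The paper resolves this differently and more directly. Let $X_v$ be the number of sampled potential edges incident to $v$; by permutation-resilience the $X_v$ are identically distributed, and conditioned on $X_v=i$ the sampled wedges at $v$ occupy a uniformly placed $\Theta(i^{2})$ out of $\Theta(n^{2})$ slots, so $\Pr[\text{triangle caught at }v\mid X_v=i]\lesssim\min\{1,\,i^{2}T_v/n^{2}\}$ (the far edge is simply discarded from the upper bound). Summing over $A$ and noting the worst case is $|A|\asymp T/\Delta_{V,2\epsilon}$ vertices with $T_v\asymp\Delta_{V,2\epsilon}$, the requirement that some triangle at $A$ be caught with constant probability forces $\E\bigl[\min\{1,\,X_v^{2}\Delta_{V,2\epsilon}/n^{2}\}\bigr]\gtrsim\Delta_{V,2\epsilon}/T$, and the key point is the extremal/convexity step: among all distributions satisfying this, $\E[X_v]$ is minimized by the two-point ``burst'' distribution ($X_v=n/\sqrt{\Delta_{V,2\epsilon}}$ with probability $\asymp\Delta_{V,2\epsilon}/T$), giving $\E[X_v]\gtrsim n\sqrt{\Delta_{V,2\epsilon}}/T$ and hence an expected sample count of $\Omega\bigl(m\sqrt{\Delta_{V,2\epsilon}}/T\bigr)$ without ever passing through $q^{2}$. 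Your proposal correctly identifies the bursting obstruction but stops short of this argument, so as written it has a genuine gap.
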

    \begin{proof}
    Our proof will make use of the fact that $f$ must sample at least one triangle from the $2\epsilon T$ triangles at the heaviest vertices. We will also use the fact that, as $f$ needs to work for an arbitrary relabelling of the vertices, and because our sampling strategy is not allowed to vary based on which edges we've already seen, our sampling strategy will be the same for each vertex.

      Let $n$ be the number of vertices in $G$. Let $X_v$ denote the distribution on how many of the $n - 1$ edges that could be incident to $v$ will be sampled by $f$ if they are in $G$. As we require our algorithm to work on all permutations of the vertices, $X_v$ can be taken to be identically distributed for each $v \in V(G)$. Then, conditioned on $X_v = i$, $\Theta( i^2)$ of the $\Theta(n^2)$ potential wedges next to $v$ will be sampled if present. So the probability of sampling any given triangle that uses $v$ is less than the chance of sampling its two edges incident to $v$, which is in turn $\Theta(i^2/n^2)$.
      Hence the chance of sampling a triangle at $v$ is at most $\Theta(\frac{i^2 T_v}{n^2})$.

      Let the vertices $v$ of $G$ be ordered as $(v_i)_{i \geq 0}$ in descending order of $T_v$. Let $A$ be the minimal prefix of $(v_i)_{i \geq 0}$ such that $\sum_{v \in A} T_v \geq 2\epsilon T$. Then as $f$ is triangle-dependent, it must sample at least one triangle involving a vertex in $A$ with probability at least $\frac{4}{5}$.

      Conditioned on $X_v = i$, the chance of sampling a triangle at $v$ is at most $\Theta(\frac{i^2 T_v}{n^2})$, reaching $\Theta(1)$ when $\frac{i}{n} = \Theta(\sqrt{T_v}^{-1})$. So, letting $Y$ be the number of vertices $v \in A$ such that at least one triangle is sampled at $v$, $\E\lbrack Y| X_v = i_v \rbrack \lesssim \sum_{v \in A} \min \left\lbrace 1, \frac{i_v^2 T_v}{n^2} \right\rbrace$, and so \[ %
      \E \lbrack Y \rbrack  \lesssim \sum_{v \in A} \E \left \lbrack \min \left\lbrace 1, \frac{X_v^2 T_v}{n^2} \right\rbrace \right\rbrack
      \]
      As the $X_v$ are identically distributed, $\E\lbrack X_v^2\rbrack$ is the same for all $v$. Given the constraints that $\sum_{v\in A} T_v = O(T)$ and $\forall v \in A, T_v \geq  \Delta_{V,\epsilon}$, the sum above will be maximised when there are $O(T/ \Delta_{V,\epsilon})$ vertices in $A$, each with $T_v  = \Delta_{V,\epsilon}$. (as this is the configuration that minimizes the truncation of the tails.) So, \[ %
       \E \lbrack Y \rbrack  \lesssim \frac{T}{\Delta_{V,\epsilon}} \E \left \lbrack \min \left\lbrace 1, \frac{X_v^2 \Delta_{V,\epsilon}}{n^2} \right\rbrace \right\rbrack
       \]
       Now, for $\E \lbrack Y \rbrack$ to be $\Omega(1)$, we need \[
       \E \left \lbrack \min \left\lbrace 1, \frac{X_v^2 \Delta_{V,\epsilon}}{n^2} \right\rbrace \right\rbrack \gtrsim \frac{\Delta_{V,\epsilon}}{T}
       \]
     This, in turn, requires that $\E \lbrack X_v \rbrack$ be at least (up to constants) $\frac{\Delta_{V,\epsilon}}{T}\frac{n}{\sqrt{\Delta_{V,\epsilon}}} = \frac{n\sqrt{\Delta_{V,\epsilon}}}{T}$. (as we minimize $\E \lbrack X_v \rbrack$ by setting $X_v$ to $\frac{n}{\sqrt{\Delta_{V,\epsilon}}}$ with probability $\frac{\Delta_{V,\epsilon}}{T}$, and 0 otherwise.) 
      
      As, at each vertex, there are $n$ possible edges, this implies sampling edges with $\Omega \left(\frac{\sqrt{\Delta_{V,\epsilon}}}{T}\right)$ probability, and therefore sampling \[
      \Omega\left( m\frac{ \sqrt{\Delta_{V,2\epsilon}}}{T}\right)
      \] edges in expectation. As we need $\E \lbrack Y \rbrack = \Omega(1)$ to sample at least one triangle with $4/5$ probability, this completes our proof.

    \end{proof}

    \begin{lemma}
      \label{tdedges}
      Let $f$ be a triangle-dependent sampling algorithm that solves triangle-counting for a graph $G$ with probability $\frac{9}{10}$ and error $\epsilon$. Then $f$ samples $\Omega\left( m\frac{ \Delta_{E,2\epsilon}}{T}\right)$ edges in expectation.
    \end{lemma}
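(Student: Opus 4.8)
The plan is to follow the template of Lemma~\ref{tdhubs}, but the edge case is genuinely easier: sampling a triangle through an edge $e$ requires sampling the single edge $e$, so there is no need for the distributional/second-moment argument used for vertices. As in the preceding lemmas we may assume the oblivious sample set $S$ is permutation-invariant, so every potential edge lies in $S$ with a common probability $q$ and the expected number of sampled edges is $mq$; it therefore suffices to prove $q = \Omega(\Delta_{E,2\epsilon}/T)$.

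First I would order the edges of $G$ in decreasing order of $T_e$ and take $A$ to be the shortest prefix whose incidence sum $\sum_{e\in A}T_e$ is a large enough constant multiple of $\epsilon T$ that the number of \emph{distinct} triangles meeting an edge of $A$ is at least $2\epsilon T$ (since each triangle has three edges, $\sum_{e\in A}T_e\ge 6\epsilon T$ suffices; the constant-in-$\epsilon$ loss incurred here is exactly the slack already present in Lemma~\ref{tdhubs} and in Theorem~\ref{thm:tdbound}). By minimality of the prefix every $e\in A$ satisfies $T_e\ge\Delta_{E,2\epsilon}$, so $|A|\,\Delta_{E,2\epsilon}\le\sum_{e\in A}T_e=O(T)$ and hence $|A|=O(T/\Delta_{E,2\epsilon})$. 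By Lemma~\ref{epsneeded} applied to the set $A$ (with $\delta=1/10$, so $2\delta=1/5$), $f$ must sample at least one triangle using some edge of $A$ with probability at least $4/5$.

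It remains to bound that probability in terms of $q$. Let $Y$ count the edges $e\in A$ such that $f$ samples some triangle containing $e$. Any sampled triangle lies in $E(G)\cap S$, so in particular each of its edges lies in $S$; hence $\Pr[f\text{ samples a triangle through }e]\le\Pr[e\in S]=q$, and by a union bound $\E[Y]\le\sum_{e\in A}q=|A|\,q$. The event that $f$ samples a triangle meeting $A$ is exactly $\{Y\ge 1\}$, so $4/5\le\Pr[Y\ge 1]\le\E[Y]\le|A|\,q$, giving $q=\Omega(1/|A|)=\Omega(\Delta_{E,2\epsilon}/T)$ and therefore $mq=\Omega\!\left(m\,\Delta_{E,2\epsilon}/T\right)$ sampled edges in expectation, as claimed.

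The only delicate point is the bookkeeping around the truncation: one must move between the incidence sum $\sum_{e\in A}T_e$, the number of distinct triangles meeting $A$ (which is what Lemma~\ref{epsneeded} requires), and the definition of $\Delta_{E,2\epsilon}$, and this transfer costs constant factors in $\epsilon$ of the kind the paper already absorbs. Unlike the hub lemma there is no second-moment obstacle, since the object that must be sampled in order to witness a triangle through $e$ is the single edge $e$ rather than a pair of edges incident to a common vertex.
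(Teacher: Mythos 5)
Your proof is correct and follows essentially the same route as the paper's: order the edges by $T_e$, take a minimal heavy prefix $A$, invoke Lemma~\ref{epsneeded} to force a sampled triangle through $A$ with probability $4/5$, bound $|A| \lesssim T/\Delta_{E,\cdot}$ by minimality, and deduce the sampling rate via a union bound over $e \in A$ (a step the paper leaves implicit in ``so the algorithm must sample at a rate $\gtrsim \Delta_{E,\epsilon}/T$''). The one discrepancy is the bookkeeping you yourself flag: insisting on $2\epsilon T$ \emph{distinct} triangles pushes the prefix threshold to $6\epsilon T$, so your minimality argument actually gives $T_e \geq \Delta_{E,6\epsilon}$ (not $\Delta_{E,2\epsilon}$ as you wrote) and hence the bound $\Omega\left(m\frac{\Delta_{E,6\epsilon}}{T}\right)$ --- a constant-in-$\epsilon$ relaxation of the stated lemma which the paper's own two-line proof sidesteps by ignoring the multiplicity issue entirely, and which the section's conventions (matching upper and lower bounds only up to constant factors in $\epsilon$) tolerate.
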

    \begin{proof}
      Let the edges $e$ of $G$ be ordered as $(e_i)_{i \geq 0}$ in descending order of $T_e$. Let $A$ be the minimal prefix of $(e_i)_{i \geq 0}$ such that $\sum_{e \in A} T_e \geq 2\epsilon T$. Then as $f$ is triangle-dependent, it must sample at least one triangle involving an edge in $A$ with probability $\frac{4}{5}$.

      So as $\Delta_{E,\epsilon} \leq T_e$ for all $e \in A$, $|A| \lesssim \frac{T}{\Delta_{E, \epsilon}}$. So the algorithm must sample at a rate $\gtrsim \frac{\Delta_{E,\epsilon}}{T}$. 

    \end{proof}

    \restate{thm:tdbound}
    \begin{proof}
      By combining the above three bounds.
    \end{proof}

    \section{Refining the Upper Bound}

    \restate{dfn:dgepsilon}

    \restate{thm:refined}
    \begin{proof}
      As in the proof of Theorem~\ref{thm:main}, the variance of the
      output $\overline{T}$ of a single iteration of the algorithm will be
      $O(T\wt{T})$ if $k = \min\left\lbrace \wt{T}^{2/3},
      \frac{\wt{T}^\frac{3}{2}}{\sqrt{\Delta_V}},
      \frac{\wt{T}^2}{\Delta_E}\right\rbrace$.

      Now define the graph $G'$ as follows: order the vertices $v \in V(G)$ as $(v_i)_{i \geq 0}$ in descending order of $T_v$, and let $V'$ be the minimal postfix of $(v_i)_{i \geq 0}$ such that $\sum_{v \in V(G) \setminus V'} T_v \leq \frac{\epsilon}{24} T$. Order the edges $e \in E(G)$ as $(e_i)_{i \geq 0}$ in descending order of $T_e$, and let $E'$ be the minimal postfix of $(e_i)_{i \geq 0}$ such that $\sum_{e \in E(G) \setminus E'} T_e \leq \frac{\epsilon}{24} T$. $G'$ is $(V', E' \cap (V' \times V'))$.

      Let $\overline{T_\epsilon}$ be $\overline{T}$, less any contributions \emph{in the second pass} from triangles not contained in $G'$. We can then compare this to $\overline{T}(G')$, the result the algorithm would have given if run with $G'$ as input. The only difference between $\overline{T_\epsilon}$ and $\overline{T}(G')$ will come from triangles counted in the \emph{first} pass. As the expectation of $\overline{T}(G')$ is independent of the output of the first pass, $\mathbb{E}\left\lbrack \overline{T_\epsilon} \right\rbrack = \mathbb{E}\left\lbrack \overline{T}(G') \right\rbrack = T(G') = (1 - \epsilon')T$ for some $\epsilon' \in \lbrack 0, \epsilon / 12 \rbrack$. Furthermore, as counting additional triangles in the second pass either has no effect or increases the estimate of $T$, $\overline{T} - \overline{T}_\epsilon$ will always be $\geq 0$.

      We now seek to bound the variance of $\overline{T}_\epsilon$. We do so by comparison to the variance of $\overline{T}(G')$. As the execution of the two algorithms differs only in the first pass, and therefore in the values of the ``triangle degree estimates'' for each vertex, we need consider only how this impacts lemmas \ref{loerror}, \ref{hierror}, and \ref{derror}, the three results about the distributions of these estimates on which the variance depends. This will allow us to show that the proof of our bound on $\var(\overline{T}(G'))$ is, within a constant factor, a bound on $\var(\overline{T}_\epsilon)$.

      For clarity, we will use $\mathcal{T}_v$ to refer to the estimate for the vertex $v$ in the calculation of $\overline{T}_\epsilon$ (as it is identical to the estimate calculated for $\overline{T}$), and $\mathcal{T}_v'$ for the estimate used in the calculation of $\overline{T}$. As $\mathcal{T}_v$ is monotonic decreasing in the number of first-pass triangles counted at $v$, $\mathcal{T}_v \leq \mathcal{T}_v'$ for all v. 

      We note that $\mathcal{T}_v \leq \mathcal{T}_v'$ for all $v$, as $\mathcal{T}_v$ is monotonic decreasing in the number of first-pass triangles counted at $v$. Furthermore, if $\mathcal{T}_v$ is $\lo$, so is $\mathcal{T}_v'$. Therefore, Lemmas \ref{loerror} and \ref{derror} will hold without further work. Lemma \ref{hierror}, however, gives us a bound on $\mathcal{T}_v'$ not being $\lo$, in terms of the parameters of $G'$, which does not directly transfer to a bound on $\mathcal{T}_v$.

      \restate{hierror}

      The lemma does, however, give us a bound on $\mathcal{T}_v$ in terms of the parameters $T_v(G)$ and $\wt{T}$, rather than $T_v(G')$. The only place this lemma is used in bounding the $l = 1, u\not = v$ case in Lemma \ref{thvar}, and so substituting in this bound instead of the one we would normally benefit from, we replace the $T_E(G')k\frac{T(G')}{\wt{T}}$ term in the variance with 
      \begin{align*}
        \sqrt{k}T_E(G')\sum_{v \in V(G')} \mathbb{P}\left\lbrack \mathcal{T}_v \not= \lo \right\rbrack & \lesssim \sqrt{k}T_E(G')\sum_{v \in V(G')} \frac{\sqrt{k} T_v(G)}{\wt{T}} \\
        & = T_E(G')k\frac{T(G)}{\wt{T}}\\
        & \lesssim T_E(G')k\frac{T(G')}{\wt{T}}
      \end{align*}

      As $T(G') > (1 - \epsilon/12T(G))$. Therefore, we lose at most a constant factor in going from our bound on the variance of $\overline{T(G')}$ to bounding the variance of $\overline{T_\epsilon}$. So $\var(\overline{T}_\epsilon) \lesssim T\wt{T}$ if $k = \min\left\lbrace \wt{T}^{2/3}, \frac{\wt{T}^\frac{3}{2}}{\sqrt{\Delta_{V,\epsilon/24}}}, \frac{\wt{T}^2}{\Delta_{E, \epsilon/24}}\right\rbrace$. (as $\Delta_V(G') \leq \Delta_{V, \epsilon}(G)$ and $\Delta_E(G') \leq \Delta_{E, \epsilon}(G)$)

      We can therefore split $\overline{T}$ into two (not independent) random variables, $\overline{T}_\epsilon$ and $T - \overline{T}_\epsilon$, with $\mathbb{E}\left\lbrack \overline{T}_\epsilon\right\rbrack = (1 - \epsilon')T$, $\mathbb{E}\left\lbrack T - \overline{T}_\epsilon \right\rbrack = \epsilon' T$, and $\var(\overline{T}_\epsilon) \lesssim T\wt{T}$. Now, we can repeat the algorithm $c$ times, averaging the results as $\overline{T}^{(c)}$. We then have $\overline{T}^{(c)} = \overline{T}_\epsilon^{(c)} + (\overline{T}^{(c)} - \overline{T}_\epsilon^{(c)})$, where $\overline{T}_\epsilon^{(c)}$ is the average of the $c$ instances of $\overline{T}_\epsilon$. So $\mathbb{E}\left\lbrack \overline{T}_\epsilon^{(c)}\right\rbrack = (1 - \epsilon')T$, $\mathbb{E}\left\lbrack \overline{T}^{(c)} - \overline{T}_\epsilon^{(c)}\right\rbrack = \epsilon' T$, and $\var(\overline{T}_\epsilon^{(c)}) \lesssim \frac{1}{\sqrt{c}} T\wt{T}$. So we can choose $c = O\left(\frac{1}{\epsilon}\right)$ to get $\var(\overline{T}_\epsilon^{(c)}) \leq \frac{\epsilon}{4} T\wt{T}$.

      Then, by Chebyshev's inequality, $\overline{T}_\epsilon^{(c)}$ is within $\frac{\epsilon}{2}\sqrt{T\wt{T}}$ of $(1 - \epsilon')T$ with probability $\frac{3}{4}$, and by Markov's inequality (and the fact that it is non-negative), $\overline{T}^{(c)} - \overline{T}_\epsilon^{(c)}$ is $< 6\epsilon'T \leq \frac{\epsilon}{2}T$ with probability $\frac{5}{6}$. So by taking a union bound, both of those hold with probability $\frac{2}{3}$. So if $\wt{T} \leq T$, $\overline{T}$ will be $(1 \pm \epsilon)T$ with probability $\frac{2}{3}$, and if $\wt{T} \geq T$, $\overline{T}$ will be within $\epsilon\wt{T}$ of $T$ with probability $\frac{2}{3}$. (so, by responding that $T < \wt{T}$ if $\overline{T} < (1-\epsilon)\wt{T}$, and returning $\overline{T}$ otherwise, it will either determine $T < \wt{T}$ correctly or return a $(1 \pm \epsilon)$ approximation to $T$)

      Then, by repeating this process $O\left(\log\frac{1}{\delta}\right)$ times and taking the median, we can guarantee that the above will hold with probability $1 - \delta$ instead, giving us our final result.
    \end{proof}

    \section{Counting Constant Size Subgraphs}
    We now consider the problem of counting subgraphs of a constant size in our graph stream. Let $A$ be a fixed subgraph, with $s = |A|$. We will attempt to estimate $M$, the number of subgraphs of $G$ that are isomorphic to $A$.

    We will use the same algorithm as for triangles, slightly generalized. $\widetilde{M}$ will be a lower bound on $M$. $\forall S \subseteq V(G), \mu(S)$ is the set of copies of $A$ in $G$ that contain $S$, $M_S = |\mu(S)|$, and $\forall i \in \lbrack s\rbrack$, $C_i = \sum_{S \subseteq V(G), |S| = i} T_S^2$. With $C_i^+$ as an upper bound on $C_i$, $\omega$ is then defined, analogously to the triangle case, as $\min\left\lbrace \frac{\left(\widetilde{M}\right)^2}{C_1^+}, \sqrt{k}\right\rbrace$.
    \subsection{First Pass}
    For $i = 1,2$:
    Let $r_{D,i} : E \rightarrow \lbrack 0,1\rbrack$ be a uniformly random hash function.

    Let $\mathcal{M}_i : V \rightarrow \lbrace 0,1\rbrace$ be a random hash function such that $\forall v \in V$:

    \begin{align*}
      \mathcal{M}_i(v) = \begin{cases}
        1 & \mbox{With probability $\frac{1}{\sqrt{k}}$.}\\
        0 & \mbox{Otherwise.}
      \end{cases}
    \end{align*}

    For each $v$ in $V(G)$, $a \in \mu(\lbrace v\rbrace)$, and $h \in \left\lbrace 0, \dots, \left\lceil \log \frac{\sqrt k}{\omega} \right\rceil\right\rbrace$, define $X_{v,a,i}^{(h)}$ to be 1 if the following hold:

    \begin{align*}
      \forall u \in V(a) \setminus \lbrace v\rbrace, & \mathcal{M}_i(u) = 1\\
      \forall \left\lbrace u \middle| vu \in E(a) \right\rbrace, & r_{D,i}(vu) < \frac{\omega 2^h}{\sqrt k}
    \end{align*}

    And 0 otherwise. We define $x_{v,a,i}^{(h)} = \mathbb{E}\left\lbrack X_{v,a,i}^{(h)}\right\rbrack$, so if $t = |\left\lbrace vu \middle| uv \in E(a) \right\rbrace|$, $x_{v,a,i}^{(h)} = \frac{\omega^t2^{ht}}{k^{\frac{s + t -1}{2}}}$. Then let $X_{v,i}^{(h)} = \sum_{a \cong A, v \in V(a)} \frac{X_{v,a,i}^{(h)}}{x_{v,a,i}^{(h)}}$.

    We then define $H_{v,i} = \left\lbrace h \in \left\lbrace 0, \dots, \left\lceil \log \frac{\sqrt k}{\omega} \right\rceil\right\rbrace \middle| X_{v,i}^{(h)} \geq \frac{\widetilde{M}}{\omega 2^h}\right\rbrace$. $\mathcal{M}_{v,i}$ is then defined as follows:

    \begin{align*}
      \mathcal{M}_{v,i} = \begin{cases}
        \min H_{v,i} & \mbox{If $H_{v,i} \not= \emptyset$.}\\
        \lo & \mbox{Otherwise.}
      \end{cases}
    \end{align*}

    Then, for each $v \in V(G)$, we define $\mathcal{M}_v$ to be $\lo$ if $\mathcal{M}_{v,1}$ and $\mathcal{M}_{v,2}$ are $\lo$, and otherwise to be the smallest numerical value amongst $\mathcal{M}_{v,1}, \mathcal{M}_{v,2}$.

    \subsection{Second Pass}
    \subsubsection{Splitting the Graph}
    Let $V_{L} = \lbrace v \in V | \mathcal{M}_v = \lo\rbrace$, and let $G_L$ be the subgraph induced by $V_L$. Then we define $M_L$ as the number of copies of $A$ in $G_L$, and $M_H = M - M_L$.

    We will compute estimates $\overline{M_L},\overline{M_H}$ of $M_L,M_H$, and estimate $M$ as $\overline{M} = \overline{M_L} + \overline{M_H}$.

    \subsection{Estimating \texorpdfstring{$M_L$}{M\_L}}
    We will estimate $M_L$ by sampling the vertices of $V_L$ with probability $\frac{1}{\sqrt k}$ each and calculating the number of copies of $A$ in the resulting graph.

    Let $c : V \rightarrow \lbrace 0,1\rbrace$ be a random hash function such that $\forall v \in V$:

    \begin{align*}
      c(v) = \begin{cases}
        1 & \mbox{With probability $\frac{1}{\sqrt k}$.}\\
        0 & \mbox{Otherwise.}
      \end{cases}
    \end{align*}

    Let $V_L' = \lbrace v \in V_L | c(v) = 1\rbrace$, and let $G_L'$ be the subgraph of $G$ induced by $V_L'$. Then $\overline{M_L}$ is the number of copies of $A$ in $G_L'$, multiplied by $k^\frac{s}{2}$.

    \subsection{Estimating \texorpdfstring{$M_H$}{M\_H}}
    We will estimate $M_H$ by considering every vertex in $V\setminus V_L$ separately. We will achieve this by sampling vertices $v$ with probability $2^{-\mathcal{M}_v}$, and then sampling edges incident to $v$ with probability proportional to $2^{\mathcal{M}_v}$.

    Let $h : V \rightarrow \lbrace 0,1\rbrace$ be a random hash function such that $\forall v \in V$:

    \begin{align*}
      h(v) = \begin{cases}
        i & \mbox{With probability $\frac{1}{\omega2^i}$ for each $i \in \lbrace 0,\dots,\left\lceil\log \frac{\sqrt k}{\omega}\right\rceil\rbrace$.}\\
        -\infty & \mbox{Otherwise.}
      \end{cases}
    \end{align*}

    And let $r_C : V \rightarrow \lbrack 0,1\rbrack$ be a uniformly random hash function.

    Then, for each $v\in V_H = V\setminus V_L$, we allow $v$ to contribute to $\overline{M_L}$ iff $h(v) = \mathcal{M}_v$. If it does, we calculate its contribution $\overline{M_v}$ as follows:

    We count a subgraph $a \subseteq$ such that $a \cong A$ and $v \in V(a)$ iff:

    \begin{align*}
      \forall u \in V(a) \setminus \lbrace v\rbrace, & c(u) = 1\\
      \forall e \in \left\lbrace u \middle| vu \in E(a) \right\rbrace, & r_C(e) < \frac{\omega 2^h}{\sqrt k}
    \end{align*}

    Then, for each such $a$, let $t = |\left\lbrace vu \middle| u \in E(a) \right\rbrace|$. We add $\frac{k^{\frac{s + t - 1}{2}}}{\omega^{t-1} 2^{\mathcal{M}_v(t-1)}} \times \frac{1}{|\lbrace x \in V(a) | \mathcal{M}_x \not = \lo\rbrace |}$ to $\overline{M_v}$. (with the second term then compensating for the fact that a motif could potentially be counted at multiple different vertices) 

    We then define $\overline{M_H} = \sum_{v \in V_H,\ h(v) = \mathcal{M}_v} \overline{M_v}$.

    \subsection{Final Output}
    We then output our estimate $\overline{M}$ of $M$ as $\overline{M} = \overline{M_L} + \overline{M_H}$.

    \section{Analysis of Generalized Algorithm}
    \subsection{First Pass}
    \subsubsection{Relation of \texorpdfstring{$\mathcal{M}_v$}{dv} to \texorpdfstring{$M_v$}{Mv}}
    \begin{lemma} 
      \label{xvariance2}
      For any $v\in \lbrack n\rbrack$, let $X_{v}^{(h)}$ be as defined previously. Then, $\mathbb{E}\left\lbrack X_{v}^{(h)} \right\rbrack = M_v$ and $\var\left(X_{v}^{(h)}\right) \leq \sum_{l = 2}^s \sum_{S \subseteq V(G), |S| = l, v \in S} M_S^2 \left(\frac{k}{\omega2^h}\right)^{l - 1}$.
    \end{lemma}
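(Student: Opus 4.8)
The plan is to follow the structure of the proof of Lemma~\ref{xvariance}, adapted to the \emph{normalized} estimators $X_{v,i}^{(h)} = \sum_{a \cong A,\, v \in V(a)} X_{v,a,i}^{(h)}/x_{v,a,i}^{(h)}$; fix $i$ throughout and abbreviate $X_v^{(h)} := X_{v,i}^{(h)}$. The expectation is immediate from linearity: each summand has expectation $\mathbb{E}[X_{v,a,i}^{(h)}]/x_{v,a,i}^{(h)} = 1$, and there are exactly $M_{\{v\}} = M_v$ copies of $A$ through $v$, so $\mathbb{E}[X_v^{(h)}] = M_v$.

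For the variance, expand $\mathbb{E}[(X_v^{(h)})^2] = \sum_{a_1,a_2} \frac{\mathbb{E}[X_{v,a_1}^{(h)} X_{v,a_2}^{(h)}]}{x_{v,a_1}^{(h)} x_{v,a_2}^{(h)}}$ over ordered pairs of copies of $A$ through $v$, and group by $l := |V(a_1)\cap V(a_2)| \in \{1,\dots,s\}$. Since $M_v^2 = \sum_{a_1,a_2} 1$, this gives $\var(X_v^{(h)}) = \sum_{l=1}^s \sum_{|V(a_1)\cap V(a_2)| = l} \bigl(\tfrac{\mathbb{E}[X_{v,a_1}^{(h)} X_{v,a_2}^{(h)}]}{x_{v,a_1}^{(h)} x_{v,a_2}^{(h)}} - 1\bigr)$. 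For $l = 1$ the only shared vertex is $v$, and the edges of $a_1$ and of $a_2$ incident to $v$ run to disjoint vertex sets, so $X_{v,a_1}^{(h)}$ and $X_{v,a_2}^{(h)}$ are functions of disjoint families of hash values $\mathcal{M}_i(\cdot), r_{D,i}(\cdot)$ and hence independent; the ratio is exactly $1$ and these terms contribute $0$, exactly as in the triangle case.

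The substance is the $l\ge 2$ terms. Put $t_1 = \deg_{a_1}(v)$, $t_2 = \deg_{a_2}(v)$, and $j := |N_{a_1}(v)\cap N_{a_2}(v)|$; common neighbors of $v$ lie in $(V(a_1)\cap V(a_2))\setminus\{v\}$, so $j \le l-1$. The event $X_{v,a_1}^{(h)} X_{v,a_2}^{(h)} = 1$ forces $\mathcal{M}_i = 1$ on the $2s - l - 1$ vertices of $(V(a_1)\cup V(a_2))\setminus\{v\}$ and $r_{D,i} < \omega 2^h/\sqrt k$ on the $t_1 + t_2 - j$ distinct edges at $v$; since $\mathcal{M}_i$ and $r_{D,i}$ are independent hash functions, this has probability $k^{-(2s-l-1)/2}\,(\omega 2^h/\sqrt k)^{t_1+t_2-j}$. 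Dividing by $x_{v,a_1}^{(h)} x_{v,a_2}^{(h)} = \omega^{t_1+t_2} 2^{h(t_1+t_2)} k^{-(2s+t_1+t_2-2)/2}$, the powers telescope and the ratio becomes $k^{(l-1)/2}\bigl(\tfrac{\sqrt k}{\omega 2^h}\bigr)^j$. Using $\omega 2^h \le \sqrt k$ (so this is nondecreasing in $j$) together with $j \le l-1$, the ratio is at most $\bigl(\tfrac{k}{\omega 2^h}\bigr)^{l-1}$. Finally, grouping the $l\ge 2$ pairs by $S := V(a_1)\cap V(a_2)$ (a size-$l$ set containing $v$, admitting at most $M_S$ choices for each of $a_1,a_2$) gives at most $M_S^2$ pairs per $S$, so $\var(X_v^{(h)}) \le \sum_{l=2}^s \bigl(\tfrac{k}{\omega 2^h}\bigr)^{l-1} \sum_{|S|=l,\,v\in S} M_S^2$, which is the claim.

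I expect the only real obstacle to be the bookkeeping in the middle of the last paragraph: correctly identifying which vertices and which \emph{distinct} edges the joint event constrains — in particular the bound $j \le l-1$ on the near-$v$ edge overlap — and verifying that the powers of $k$, $\omega$, and $2^h$ cancel to leave exactly $k^{(l-1)/2}(\sqrt k/\omega 2^h)^j$. Everything else is the same pair-counting already carried out for Lemma~\ref{xvariance}.
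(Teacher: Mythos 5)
Your proof is correct and follows essentially the same route as the paper's: expand $\mathbb{E}[(X_v^{(h)})^2]$ over ordered pairs of copies through $v$, group by the intersection size $l$, bound each normalized pair term by $(k/(\omega 2^h))^{l-1}$, and count pairs via $M_S^2$ for $S = V(a_1)\cap V(a_2)$. Your treatment is in fact slightly more explicit than the paper's (exact cancellation of the $l=1$ terms by independence, and exact tracking of the edge overlap $j\le l-1$ at $v$, where the paper just lower-bounds the number of distinct constrained edges by $t_1+t_2-(l-1)$ and uses $\omega 2^h \le \sqrt k$), but the substance is identical.
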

    \begin{proof}
      $X_{v}^{(h)} = \sum_{a \in \mu(\lbrace v\rbrace)} \frac{X_{v,a}^{(h)}}{\mathbb{E}\left\lbrack X_{v,a}^{(h)}\right\rbrack}$, so $\mathbb{E}\left\lbrack X_{v}^{(h)}\right\rbrack = M_v$.

      Then, to bound the variance, we start by bounding $\mathbb{E}\left\lbrack \left(X_{v}^{(h)}\right)^2\right\rbrack = \sum_{a_1,a_2 \in \mu(\lbrace v\rbrace)}\frac{\mathbb{E}\left\lbrack X_{v,a_1}^{(h)}X_{v,a_2}^{(h)}\right\rbrack}{x_{v,a_1}^{(h)}x_{v,a_2}^{(h)}}$. We will consider the contribution of these terms for each possible value of $l = |V(a_1) \cap V(a_2)|$. (noting that, for any such subgraph $a$, if $t = |\left\lbrace vu \middle| u \in E(a) \right\rbrace|$, $t \leq s - 1$):

      Letting $t_i = |\left\lbrace vu \middle| vu \in E(a_i) \right\rbrace|$ for $i = 1,2$,  $x_{v,a_i}^{(h)} = \frac{\omega^{t_i}2^{ht_i}}{k^{\frac{s + t_i -1}{2}}}$. 

      Then, for  $X_{v,a_1}^{(h)}X_{v,a_2}^{(h)} = 1$ to hold, we need the $2s - l - 1$ vertices  $u \in V(a_1) \cup V(a_2) \setminus \lbrace v\rbrace$ to have $d(u) =1$, and the $\geq t_1 + t_2 - (l - 1)$ edges $e \in \left\lbrace vu \middle| vu \in E(a_1) \right\rbrace \cup \left\lbrace vu \middle| vu \in E(a_2) \right\rbrace$ to have $r_D(e) < \frac{\omega 2^h}{\sqrt k}$. So $\mathbb{E}\left\lbrack X_{v,a_1}^{(h)}X_{v,a_2}^{(h)}\right\rbrack \leq \frac{(\omega2^h)^{t_1 + t_2 + 1 - l}}{k^{(s - l) + \frac{t_1 + t_2}{2}}}$. 

      So $\frac{\mathbb{E}\left\lbrack X_{v,a_1}^{(h)}X_{v,a_2}^{(h)}\right\rbrack}{x_{v,a_1}^{(h)}x_{v,a_2}^{(h)}} \leq \left(\frac{k}{\omega2^h}\right)^{l - 1}$. There are $\leq \sum_{S \subseteq V(G), |S| = l, v \in S} M_S^2$ such pairs for each $l \in \lbrace 2,s\rbrace$, and so the total contribution to the sum is $\leq \sum_{S \subseteq V(G), |S| = l, v \in S} M_S^2 \left(\frac{k}{\omega2^h}\right)^{l - 1}$.

      This gives us:

      \begin{align*}
        \var\left(X_{v}^{(h)}\right) & = \mathbb{E}\left\lbrack \left(X_{v}^{(h)}\right)^2\right\rbrack - \mathbb{E}\left\lbrack \left(X_{v}^{(h)}\right)^2\right\rbrack\\
        & \leq  \sum_{i =1}^s \sum_{S \subseteq V(G), |S| = l, v \in S} M_S^2 \left(\frac{k}{\omega2^h}\right)^{l - 1} - M_v^2\\
        & =  \sum_{l = 2}^s \sum_{S \subseteq V(G), |S| = l, v \in S} M_S^2 \left(\frac{k}{\omega2^h}\right)^{l - 1}
      \end{align*}

    \end{proof}
    \begin{lemma}
      \label{loerror2}
      If $M_v \geq 2\frac{\widetilde{M}}{\sqrt{k}}$, then $\mathbb{P}\left\lbrack \mathcal{M}_v = \lo \right\rbrack \lesssim \frac{1}{M_v^2}\sum_{l = 2}^s \sum_{S \subseteq V(G), |S| = l, v \in S} M_S^2 k^\frac{l - 1}{2}$.

    \end{lemma}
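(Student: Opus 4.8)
The plan is to adapt the proof of Lemma~\ref{loerror} to the subgraph setting. The key observation is that $\mathcal{M}_v = \lo$ forces even the single estimate $X_{v,1}^{(h)}$ at the coarsest scale $h^\ast := \left\lceil \log \frac{\sqrt k}{\omega} \right\rceil$ to fall well below its mean $M_v$, and this is an event whose probability Chebyshev's inequality bounds in terms of $\var\!\left(X_{v,1}^{(h^\ast)}\right)$, for which Lemma~\ref{xvariance2} supplies exactly the right-hand side of the claim.

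First I would unwind the definitions: $\mathcal{M}_v = \lo$ requires $\mathcal{M}_{v,1} = \lo$, i.e. $H_{v,1} = \emptyset$, so in particular $X_{v,1}^{(h^\ast)} < \frac{\widetilde M}{\omega 2^{h^\ast}}$. Since $2^{h^\ast} \ge \frac{\sqrt k}{\omega}$ we have $\omega 2^{h^\ast} \ge \sqrt k$, hence $X_{v,1}^{(h^\ast)} < \frac{\widetilde M}{\sqrt k}$. The hypothesis $M_v \ge 2\frac{\widetilde M}{\sqrt k}$ then gives $X_{v,1}^{(h^\ast)} < \tfrac12 M_v = \tfrac12 \E\!\left[X_{v,1}^{(h^\ast)}\right]$, using $\E\!\left[X_{v,1}^{(h^\ast)}\right] = M_v$ from Lemma~\ref{xvariance2}.

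Next I would apply Chebyshev's inequality. The event above implies $\left|X_{v,1}^{(h^\ast)} - \E\!\left[X_{v,1}^{(h^\ast)}\right]\right| \ge \tfrac12\E\!\left[X_{v,1}^{(h^\ast)}\right]$, so it has probability at most $\frac{4\var\!\left(X_{v,1}^{(h^\ast)}\right)}{M_v^2}$. Substituting the variance bound of Lemma~\ref{xvariance2} at $h = h^\ast$ and using $\omega 2^{h^\ast} \ge \sqrt k$ once more to replace $\left(\frac{k}{\omega 2^{h^\ast}}\right)^{l-1}$ by $k^{\frac{l-1}{2}}$ yields
\[
\var\!\left(X_{v,1}^{(h^\ast)}\right) \;\le\; \sum_{l=2}^s \sum_{\substack{S \subseteq V(G),\ |S| = l,\ v \in S}} M_S^2\, k^{\frac{l-1}{2}},
\]
which is the stated bound up to the absorbed constant, so $\mathbb{P}[\mathcal{M}_v = \lo] \le \mathbb{P}[\mathcal{M}_{v,1} = \lo] \lesssim \frac{1}{M_v^2}\sum_{l=2}^s \sum_{S \subseteq V(G),\ |S|=l,\ v\in S} M_S^2 k^{\frac{l-1}{2}}$.

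There is no real obstacle here; the argument is essentially a transcription of the triangle case, and the only mild care needed is to keep $\omega 2^{h^\ast}$ as a quantity that is at least $\sqrt k$ (rather than exactly $\sqrt k$, as the triangle write-up treats it) because of the ceiling in the definition of $h^\ast$, and to note that the generalized first pass indexes scales starting from $0$ rather than from $1$. One could first state a Chebyshev-style corollary analogous to Corollary~\ref{xdeviance} to keep the exposition parallel, but it is not needed.
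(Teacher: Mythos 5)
Your proposal is correct and follows essentially the same route as the paper: observe that $\mathcal{M}_v = \lo$ forces $X_v^{(h)} < \frac{\widetilde{M}}{\omega 2^h} \leq \frac{\widetilde{M}}{\sqrt{k}} \leq \frac{1}{2}M_v$ at the top scale $h = \left\lceil \log \frac{\sqrt{k}}{\omega} \right\rceil$, then apply Chebyshev with the expectation and variance bounds of Lemma~\ref{xvariance2}, bounding $\left(\frac{k}{\omega 2^h}\right)^{l-1}$ by $k^{\frac{l-1}{2}}$ at that scale. Your extra care about the ceiling and the scale indexing is fine but does not change the argument.
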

    \begin{proof}
      For $\mathcal{M}_v = \lo$ to hold, we need that $X_{v}^{(h)} < \frac{\widetilde{M}}{\omega 2^h}$ for all $h \in \left\lbrack\left\lceil\log \frac{\sqrt k}{\omega}\right\rceil\right\rbrack$, and so in particular $X_{v}^{\left(\left\lceil\log \frac{\sqrt k}{\omega}\right\rceil\right)} < \frac{\widetilde{M}}{\sqrt{k}}$.

      By Lemma \ref{xvariance2}, $\mathbb{E}\left\lbrack X_{v}^{\left(\left\lceil \log \frac{\sqrt k}{\omega}\right\rceil\right)}\right\rbrack = M_v \geq 2 \frac{\widetilde{M}}{\sqrt k}$, and \[
      \var\left(  X_{v}^{\left(\left\lceil \log \frac{\sqrt k}{\omega}\right\rceil\right)}\right) \lesssim \sum_{l = 2}^s \sum_{S \subseteq V(G), |S| = l, v \in S} M_S^2 k^\frac{l - 1}{2}\]
      . So by Chebyshev's inequality, this holds with probability $\lesssim \frac{1}{M_v^2}\sum_{l = 2}^s \sum_{S \subseteq V(G), |S| = l, v \in S} M_S^2 k^\frac{l - 1}{2}$.
    \end{proof}

    \begin{lemma}
      \label{hierror2}
      $\forall v \in V, \mathbb{P}\left\lbrack \mathcal{M}_v \not = \lo \right\rbrack \lesssim \frac{\sqrt{k} M_v}{\widetilde{M}}$.
    \end{lemma}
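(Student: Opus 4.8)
This is the direct generalization of Lemma~\ref{hierror}, and I would mirror that proof. First I would unpack the definition of $\mathcal{M}_v$: since $\mathcal{M}_v \neq \lo$ requires $\mathcal{M}_{v,i} \neq \lo$ for some $i \in [2]$, and $\mathcal{M}_{v,i} \neq \lo$ means $H_{v,i} \neq \emptyset$, the event $\mathcal{M}_v \neq \lo$ is contained in the event that there exist $h \in \left\{0,\dots,\left\lceil\log\frac{\sqrt k}{\omega}\right\rceil\right\}$ and $i \in [2]$ with $X_{v,i}^{(h)} \geq \frac{\widetilde M}{\omega 2^h}$.

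Next I would invoke Lemma~\ref{xvariance2}, which gives $\mathbb{E}\big[X_{v,i}^{(h)}\big] = M_v$ for every $h$ (note we only need the expectation here, not the variance bound). Since $X_{v,i}^{(h)} \geq 0$, Markov's inequality yields
\[
\mathbb{P}\!\left[X_{v,i}^{(h)} \geq \frac{\widetilde M}{\omega 2^h}\right] \leq \frac{M_v \,\omega 2^h}{\widetilde M}.
\]

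Finally I would take a union bound over the $2$ values of $i$ and the $O\!\left(\log\frac{\sqrt k}{\omega}\right)$ values of $h$. The resulting sum $2\sum_{h=0}^{\lceil\log(\sqrt k/\omega)\rceil}\frac{M_v \omega 2^h}{\widetilde M}$ is a geometric series in $h$ and is therefore dominated (up to a constant factor) by its largest term, namely $h = \left\lceil\log\frac{\sqrt k}{\omega}\right\rceil$, where $\omega 2^h = \Theta(\sqrt k)$. This gives the bound $\lesssim \frac{\sqrt k\, M_v}{\widetilde M}$, as claimed. There is no real obstacle here — the only point requiring a moment's care is recognizing that the union-bound sum is geometric and hence collapses to its top term rather than costing an extra logarithmic factor; everything else is bookkeeping identical to the triangle case.
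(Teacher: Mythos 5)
Your proposal is correct and follows essentially the same route as the paper: Markov's inequality applied to each $X_{v,i}^{(h)}$ using the expectation $M_v$ from Lemma~\ref{xvariance2}, followed by a union bound over $h$ (and $i$) whose geometric sum collapses to the top scale $\omega 2^h \approx \sqrt{k}$, yielding the stated bound.
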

    \begin{proof}
      For $\mathcal{M}_v \not = \lo$ to hold, we need there to be at least one $h\in  \left\lbrack\left\lceil\log \frac{\sqrt k}{\omega}\right\rceil\right\rbrack$ such that $X_{v}^{(h)} \geq \frac{\widetilde{M}}{\omega 2^h}$. By Lemma \ref{xvariance2}, $\mathbb{E}\left\lbrack X_{v}^{(h)} \right\rbrack = M_v$,  so by Markov's inequality this occurs with probability $\leq \frac{M_v}{\widetilde{M}}\omega2^h$. So the probability that it holds for any $h$ is $\leq \sum_{h=\left\lceil \log \frac{\sqrt k}{\omega}\right\rceil}^1 \frac{M_v}{\widetilde{M}}\omega2^h \lesssim \sum_{i=0}^\infty\frac{M_v\sqrt{k}}{\widetilde{M}} 2^{-i}\lesssim \frac{\sqrt{k} M_v}{\widetilde{M}}$.
    \end{proof}

    \begin{lemma}
      \label{derror2}
      $\forall v \in V, l \geq 1, \left\lceil\log \frac{\widetilde{M}}{\omega M_v}\right\rceil \geq 1$, 
      \[\mathbb{P}\left\lbrack \mathcal{M}_v \geq \max\left\lbrace\left\lceil\log \frac{\widetilde{M}}{\omega M_v}\right\rceil + j, 0\right\rbrace\right\rbrack \lesssim \left(\frac{1}{M_v^2}\sum_{l = 2}^s \sum_{S \subseteq V(G), |S| = l, v \in S} M_S^2 \left(\frac{kM_v}{2^j\widetilde{M}}\right)^{l - 1}\right)^2\]
    \end{lemma}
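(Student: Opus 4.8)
The plan is to run the argument of Lemma~\ref{derror} almost verbatim, substituting the first-- and second--moment estimates of Lemma~\ref{xvariance2} for those of Lemma~\ref{xvariance} and invoking Chebyshev's inequality. Write $h^\star = \max\{\lceil \log \tfrac{\widetilde M}{\omega M_v}\rceil + j,\, 0\}$; under the hypotheses ($j$ at least a suitable absolute constant, and $\lceil \log \tfrac{\widetilde M}{\omega M_v}\rceil \ge 1$) we have $h^\star \ge 1$, so the truncation at $0$ plays no role. Unpacking the definition of $\mathcal{M}_v$, with $\lo$ regarded as lying above every numerical value, the event $\mathcal{M}_v \ge h^\star$ holds iff $\mathcal{M}_{v,i} \ge h^\star$ for both $i\in\{1,2\}$; since $\mathcal{M}_{v,i} = \min H_{v,i}$, this forces in each copy that the index $h := h^\star-1 = \lceil \log \tfrac{\widetilde M}{\omega M_v}\rceil + j - 1$ fails the membership test for $H_{v,i}$, i.e.\ $X_{v,i}^{(h)} < \widetilde M/(\omega 2^{h})$.

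Next I would translate this into a deviation bound. Because $2^{h} \ge \tfrac{\widetilde M}{\omega M_v}\,2^{\,j-1}$, the inequality above says $X_{v,i}^{(h)} < M_v/2^{\,j-1} \le \tfrac12 M_v$ once $j$ exceeds a small constant (this is where the lower bound on $j$ is used, exactly as $l\ge 2$ is used in Lemma~\ref{derror}). Since $\mathbb{E}[X_{v,i}^{(h)}] = M_v$ by Lemma~\ref{xvariance2}, this is a downward deviation by at least a constant factor, so Chebyshev bounds the probability of this one--copy event by $O\!\big(\var(X_{v,i}^{(h)})/M_v^2\big)$. Plugging in the variance bound of Lemma~\ref{xvariance2} and using $\tfrac{k}{\omega 2^{h}} \le \tfrac{2kM_v}{2^j\widetilde M} \lesssim \tfrac{kM_v}{2^j\widetilde M}$ (the lost factor $2^{\,l-1}\le 2^{\,s-1}=O(1)$ since $s$ is constant) gives
\[
\Pr\!\left[\mathcal{M}_{v,i} \ge h^\star\right] \;\lesssim\; \frac{1}{M_v^2}\sum_{l=2}^{s}\ \sum_{\substack{S\subseteq V(G)\\ |S|=l,\ v\in S}} M_S^2 \left(\frac{kM_v}{2^j\widetilde M}\right)^{l-1}.
\]
Finally, $X_{v,1}^{(h)}$ and $X_{v,2}^{(h)}$ are independent, since the two copies use disjoint families of independent hash functions; hence the event $\mathcal{M}_v \ge h^\star$, being the intersection of the two one--copy events, has probability at most the square of the displayed bound, which is precisely the claim.

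I do not expect a genuine obstacle: this lemma is the constant--size--subgraph analogue of Lemma~\ref{derror}, and the proof is structurally identical once Lemma~\ref{xvariance2} is in hand. The only points demanding care are bookkeeping ones: the off--by--one between the threshold index $\lceil\cdot\rceil+j$ appearing in the event and the index $h=\lceil\cdot\rceil+j-1$ at which Chebyshev is applied (and the correspondingly required lower bound on $j$ for the deviation to be a constant fraction of the mean); verifying that the $2^{\,l-1}$ absorbed into the $\lesssim$ is $O(1)$ because $s=O(1)$; and noting that for the smallest admissible $j$ the right--hand side is already $\Omega(1)$, so the statement is vacuous there.
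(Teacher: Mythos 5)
Your proposal is correct and follows essentially the same route as the paper's proof: the event forces the membership test to fail at the index one below the threshold for both independent copies $i=1,2$, and then the mean/variance bounds of Lemma~\ref{xvariance2} plus Chebyshev's inequality and independence give the squared bound. Your extra care about the off-by-one and the need for $j$ to exceed a small constant (so the deviation is a constant fraction of the mean) is a legitimate bookkeeping point that the paper's one-line proof glosses over (its stated hypothesis, like Lemma~\ref{derror}'s $l\ge 2$, is what makes the $M_v/2$ deviation valid), but it does not change the argument.
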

    \begin{proof}
      For $ \mathcal{M}_{v,i} \geq \max\left\lbrace\left\lceil\log \frac{\widetilde{M}}{\omega M_v}\right\rceil + j, 0\right\rbrace$ to hold, $X_{v}^{\max\left\lbrace\left\lceil\log \frac{\widetilde{M}}{\omega M_v}\right\rceil + j - 1, 1\right\rbrace} < \frac{M_v}{2}$ must hold. By Lemma \ref{xvariance2} and Chebyshev's inequality, this happens with probability \[\lesssim \frac{1}{M_v^2}\sum_{l = 2}^s \sum_{S \subseteq V(G), |S| = l, v \in S} M_S^2 \left(\frac{kM_v}{2^j\widetilde{M}}\right)^{l - 1}\]. So it holds for both $i = 1,2$, and therefore for $\mathcal{M}_v$, with probability \[\lesssim \left(\frac{1}{M_v^2}\sum_{l = 2}^s \sum_{S \subseteq V(G), |S| = l, v \in S} M_S^2 \left(\frac{kM_v}{2^j\widetilde{M}}\right)^{l - 1}\right)^2\]
    \end{proof}

    \subsection{Second Pass}
    \subsubsection{\texorpdfstring{$\overline{M_L}$}{M\_L}}
    \begin{lemma}
      \label{tlexpec2}
      $\mathbb{E}\left\lbrack \overline{M_L} | M_L \right\rbrack = M_L$.
    \end{lemma}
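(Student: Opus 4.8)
\textbf{Proof proposal for Lemma~\ref{tlexpec2}.}
The plan is to mirror the argument for the triangle case (Lemma~\ref{tlexpec}) verbatim, with $3$ replaced by $s$. I would first condition on the outcome of the first pass, i.e.\ on the set $V_L = \{v : \mathcal{M}_v = \lo\}$ and hence on the fixed subgraph $G_L$ induced by $V_L$; by definition $M_L$ is then a fixed number, the number of copies of $A$ in $G_L$. The only remaining randomness is the hash $c$, which is independent of everything used in the first pass, so it suffices to show that conditioned on $G_L$ we have $\mathbb{E}[\overline{M_L}\mid G_L] = M_L$.

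Next I would expand $\overline{M_L}$ by linearity of expectation over the (now fixed) collection $\mu(G_L)$ of copies of $A$ in $G_L$. Recall $\overline{M_L} = k^{s/2}$ times the number of copies of $A$ in $G_L'$, the subgraph of $G_L$ induced by $V_L' = \{v \in V_L : c(v) = 1\}$. A fixed copy $a \in \mu(G_L)$ lies in $G_L'$ precisely when $c(u) = 1$ for every one of its $s = |V(a)|$ vertices; since $c$ assigns each vertex value $1$ independently with probability $1/\sqrt{k}$, this event has probability $(1/\sqrt{k})^s = k^{-s/2}$. Summing over the $M_L$ copies in $\mu(G_L)$ gives that the expected number of copies of $A$ in $G_L'$ is $M_L / k^{s/2}$, and multiplying by the normalization $k^{s/2}$ yields $\mathbb{E}[\overline{M_L}\mid G_L] = M_L$.

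Finally, I would take expectation over the first pass: since the identity $\mathbb{E}[\overline{M_L}\mid G_L] = M_L$ holds for every outcome of $G_L$, and $M_L$ is a deterministic function of $G_L$, the tower rule gives $\mathbb{E}[\overline{M_L}\mid M_L] = M_L$. There is no real obstacle here; the only point requiring a word of care is the independence of the edge/vertex hash $c$ from the first-pass hashes (so that conditioning on $G_L$ leaves $c$ uniform), and the fact that each copy of $A$ has exactly $s$ vertices so the survival probability is exactly $k^{-s/2}$ rather than merely an upper bound.
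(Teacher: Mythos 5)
Your proposal is correct and follows essentially the same argument as the paper's proof: each copy of $A$ in $G_L$ survives into $G_L'$ exactly when all $s$ of its vertices are selected by $c$, which happens with probability $k^{-s/2}$, and linearity of expectation plus the $k^{s/2}$ normalization gives the claim. Your explicit conditioning on $G_L$ and the independence of $c$ from the first-pass hashes is just a more careful spelling-out of what the paper leaves implicit.
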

    \begin{proof}
      $M_L$ is the number of copies of $A$ in the random graph $G_L$, and $\overline{M_L}$ is $k^\frac{s}{2}$ times the number of triangles in the random graph $G_L'$. Any subgraph $a \subseteq G$ is in $G_L'$ iff $\forall v \in V(a), c(v) = 1$, which occurs with probability $\frac{1}{k^\frac{s}{2}}$ for subgraphs $a$ such that $a \cong A$. So the expected number of copies in $G_L'$ is $\frac{M_L}{k^\frac{s}{2}}$, and so $\mathbb{E}\left\lbrack \overline{M_L} | M_L \right\rbrack = M_L$.
    \end{proof}
    \begin{lemma}
      \label{tlvar2}
      $\var(\overline{M_L}) \lesssim M^2 + \sum_{l = 2}^sC_lk^\frac{l}{2}$.
    \end{lemma}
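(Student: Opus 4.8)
The plan is to mirror the proof of Lemma~\ref{tlvar} for triangles. Write $M_L'$ for the number of copies of $A$ in $G_L'$, so that $\overline{M_L} = k^{s/2}M_L'$ and $\var(\overline{M_L}) = k^s\bigl(\mathbb{E}[M_L'^2] - \mathbb{E}[M_L']^2\bigr)$. I will expand $\mathbb{E}[M_L'^2] = \sum_{a_1,a_2 \cong A}\mathbb{P}[a_1,a_2 \in G_L']$ over ordered pairs of copies of $A$ in $G$, where $a \in G_L'$ means every vertex of $a$ is sampled by $c$ \emph{and} has $\mathcal{M}_v = \lo$, and group the pairs by $l = |V(a_1)\cap V(a_2)| \in \{0,\dots,s\}$. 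Since the hash $c$ is independent of everything in the first pass, for each pair the probability that both copies survive factors as (probability the $2s-l$ vertices of $V(a_1)\cup V(a_2)$ are sampled by $c$) $\times$ (probability those vertices are all $\lo$), and the first factor is at most $k^{-(2s-l)/2} = k^{l/2 - s}$.

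For $l = 0$ (disjoint copies), the survival probability is exactly $k^{-s}$ times the probability both copies lie in $G_L$, so these terms sum to $k^{-s}\sum_{a_1\cap a_2 = \emptyset}\mathbb{P}[a_1,a_2 \in G_L] \le k^{-s}\mathbb{E}[M_L^2]$; subtracting $\mathbb{E}[M_L']^2 = k^{-s}\mathbb{E}[M_L]^2$ leaves $k^{-s}\var(M_L) \le k^{-s}M^2$ (using $M_L \le M$ pointwise), which after the factor $k^s$ is the $M^2$ term. For $2 \le l \le s$, a pair with overlap $l$ determines an $l$-set $S = V(a_1)\cap V(a_2)$ with $a_1,a_2 \in \mu(S)$, so there are at most $\sum_{|S| = l}M_S^2 = C_l$ of them, each surviving with probability $\le k^{l/2-s}$; this contributes at most $C_l k^{l/2}$ after scaling.

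The remaining case $l = 1$ is the crux, and it is handled exactly as in Lemma~\ref{tlvar}: a pair sharing only the vertex $v$ survives only if $c$ samples $v$ (probability $1/\sqrt k$) and the remaining $2(s-1)$ vertices (probability $k^{-(s-1)}$) and $\mathcal{M}_v = \lo$, so the survival probability is $\le k^{1/2-s}\,\mathbb{P}[\mathcal{M}_v = \lo]$, and there are at most $M_v^2$ such pairs at $v$. I split the vertices at the threshold $2\widetilde{M}/\sqrt{k}$. For $v$ with $M_v < 2\widetilde{M}/\sqrt{k}$, bounding $\mathbb{P}[\mathcal{M}_v = \lo] \le 1$ and using $\sum_v M_v = sM$ gives $\sum_{v:\,M_v < 2\widetilde M/\sqrt k}M_v^2 \le \tfrac{2\widetilde M}{\sqrt k}\sum_v M_v = \tfrac{2s\widetilde M M}{\sqrt k}$, so these contribute $\lesssim k^{-s}\widetilde{M}M \le k^{-s}M^2$, i.e.\ $M^2$ after scaling, since $\widetilde M \le M$. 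For $v$ with $M_v \ge 2\widetilde{M}/\sqrt{k}$, Lemma~\ref{loerror2} gives $M_v^2\,\mathbb{P}[\mathcal{M}_v = \lo] \lesssim \sum_{l=2}^s\sum_{|S| = l,\,v \in S}M_S^2 k^{(l-1)/2}$; summing over $v$ (each $S$ counted $|S| = O(1)$ times) this is $\lesssim \sum_{l=2}^s C_l k^{(l-1)/2}$, so after the factors $k^{1/2-s}$ and $k^s$ these contribute $\lesssim \sum_{l=2}^s C_l k^{l/2}$. Adding up all cases gives $\var(\overline{M_L}) \lesssim M^2 + \sum_{l=2}^s C_l k^{l/2}$.

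I expect the $l = 1$ case to be the main obstacle: the statement allows no $C_1$ term, so the naive bound $C_1 k^{1/2}$ is not good enough and one must instead exploit that a vertex with many copies of $A$ through it is unlikely to be declared $\lo$ --- this is precisely Lemma~\ref{loerror2}, whose proof in turn rests on the first-pass variance bound (Lemma~\ref{xvariance2}). The disjoint-pair cancellation in the $l = 0$ case also needs a little care to phrase correctly in terms of the independence of $c$ from the first pass, but is otherwise routine bookkeeping; the $l \ge 2$ cases are entirely mechanical.
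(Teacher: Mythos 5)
Your proof is correct and follows essentially the same route as the paper: expand $\mathbb{E}[M_L'^2]$ over pairs grouped by overlap size $l$, bound $l\ge 2$ pairs by $C_l k^{l/2-s}$, and handle $l=1$ by splitting at the threshold $2\widetilde{M}/\sqrt{k}$ and invoking Lemma~\ref{loerror2} together with the independence of $c$ from the first pass. The only (immaterial) difference is that you carry out the disjoint-pair cancellation against $\mathbb{E}[M_L']^2$ as in the triangle case, whereas the paper simply bounds the number of disjoint pairs by $M^2$ and drops the subtraction, since $M^2$ is already in the target bound.
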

    \begin{proof}
      Let $M_L'$ be the number of copies of $A$ in $G_L'$. (so $\overline{M_L} = k^\frac{s}{2} M_L'$) A subgraph $a \cong A$ in $G$ is in $G_L'$ iff $\forall v \in V(a), c(v) = 1$ and $\mathcal{M}_v =  \lo$. We proceed by bounding $\mathbb{E}\left\lbrack M_L'^2\right\rbrack$, by considering, for any pair of subgraphs $a_1,a_2 \cong A$, the probability they are both in $G_L'$. We will bound this for each value of $l = V(a_1) \cap V(a_2)$, treating $l = 1$ as a special case.

      \begin{description}
        \item[$l \not = 1$:] $\mathbb{P}\left\lbrack\forall v \in V(a), c(v) = 1 \right\rbrack = k^{-\frac{|V(a_1) \cap V(a_2)|}{2}} = k^{-\frac{2s - l}{2}}$. So the total contribution to $\mathbb{E}\left\lbrack M_L'^2\right\rbrack$ from these pairs is $\leq C_l k^\frac{l-2s}{2}$.

        \item[$l = 1$:] Let $v$ be the unique vertex in $V(a_1) \cap V(a_2)$. Then in this case, we will also make use of the fact that $\mathcal{M}_v = \lo$ must hold. As above, $\mathbb{P}\left\lbrack\forall v \in V(a), c(v) = 1 \right\rbrack = k^{-\frac{2s - 1}{2}}$.

          So for vertices $v$ s.t. $M_v \leq \frac{2\widetilde{M}}{\sqrt{k}}$, we can bound the contribution to $\mathbb{E}\left\lbrack M_L'^2\right\rbrack$ from these cases by $k^{-\frac{2s - 1}{2}}\sum_{v \in V(G)} M_v^2 \lesssim k^{-\frac{2s - 1}{2}}\frac{\left(\widetilde{M}\right)^2}{\sqrt k} = k^{-s}\left(\widetilde{M}\right)^2 \leq k^{-s} M^2$.

          We can then consider vertices $v$ s.t. $M_v > \frac{2\widetilde{M}}{\sqrt{k}}$. By Lemma \ref{loerror2}, \[
          \mathbb{P}\left\lbrack \mathcal{M}_v = \lo \right\rbrack \lesssim \frac{1}{M_v^2}\sum_{i = 2}^{s} \sum_{S \subseteq V(G), |S| = i, v \in S} M_S^2 k^{\frac{i - 1}{2}}\]. 
          As $\mathcal{M}_v$ is independent of the hash function, $c$, the probability of $a_1$ and $a_2$ both being in $G_L'$ is $\lesssim \frac{1}{M_v^2}\sum_{i = 2}^{s} \sum_{S \subseteq V(G), |S| = i, v \in S} M_S^2 k^\frac{i - 2s}{2}$, and so the total contribution to $\mathbb{E}\left\lbrack M_L'^2\right\rbrack$ is $\lesssim \sum_{v \in V(G)}\sum_{i = 2}^{s} \sum_{S \subseteq V(G), |S| = i, v \in S} M_S^2 k^\frac{i - 2s}{2} \lesssim \sum_{i = 2}^{s} C_i k^\frac{i - 2s}{2}$.
      \end{description}

      So by summing these cases together:

      \begin{align*}
        \mathbb{E}\left\lbrack M_L'^2\right\rbrack & \lesssim k^sM^2 + \sum_{l = 2}^{s} C_i k^\frac{l - 2s}{2}
      \end{align*}

      Which then gives us our variance bound:

      \begin{align*}
        \var(\overline{M_L}) & \leq \mathbb{E}\left\lbrack M_L^2\right\rbrack\\
        & \leq k^s\mathbb{E}\left\lbrack M_L'^2\right\rbrack\\
        & \lesssim M^2 + \sum_{l = 2}^{s} C_l k^\frac{l}{2}
      \end{align*}

    \end{proof}

    \subsubsection{\texorpdfstring{$\overline{M_H}$}{M\_H}}
    \begin{lemma}
      \label{thexpec2}
      If $\wt{M} \leq M$, $\mathbb{E}\left\lbrack \overline{M_H} | M_H \right\rbrack = M_H$.
    \end{lemma}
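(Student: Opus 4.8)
The plan is to follow the proof of Lemma~\ref{thexpec} (the triangle case, $s=3$) essentially verbatim; the only genuinely new ingredient is that a general copy $a\cong A$ meets a vertex $v\in V(a)$ in a variable number of edges, so one extra exponent must be tracked. First I would condition on the entire first pass: this fixes every $\mathcal{M}_v$, hence $V_L$, the induced subgraph $G_L$, the partition of the $A$-copies of $G$ into those contained in $G_L$ and the remaining $M_H$ copies, and the value $M_H$ itself. Since $M_H$ is measurable with respect to the first pass, it suffices to prove that $\mathbb{E}[\overline{M_H}\mid\text{first pass}]=M_H$ over the second-pass hashes $c,h,r_C$, and then the stated identity follows by the tower property, $\mathbb{E}[\overline{M_H}\mid M_H]=\mathbb{E}[\mathbb{E}[\overline{M_H}\mid\text{first pass}]\mid M_H]=M_H$. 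A copy $a$ contained in $G_L$ has $\mathcal{M}_x=\lo$ for every $x\in V(a)$, hence is never counted at any vertex (the counting rule requires $h(v)=\mathcal{M}_v$, and $h(v)$ is never $\lo$) and contributes $0$ to every $\overline{M_v}$; so only the $M_H$ copies not contained in $G_L$ can contribute, and each such copy has at least one vertex with $\mathcal{M}\neq\lo$.

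Now fix such a copy $a$, set $S_a=\{x\in V(a):\mathcal{M}_x\neq\lo\}$ and $s_a=|S_a|\ge1$, and for $v\in S_a$ let $t$ denote the number of edges of $a$ incident to $v$. I would compute the expected contribution of $a$ to $\overline{M_H}$ by linearity over $v\in S_a$. For a fixed $v$, the event that $a$ is counted at $v$ is the conjunction of $h(v)=\mathcal{M}_v$ (probability $\tfrac1{\omega 2^{\mathcal{M}_v}}$, nonzero precisely because $\mathcal{M}_v\in\{0,\dots,\lceil\log\tfrac{\sqrt k}{\omega}\rceil\}$) together with the conditions $c(u)=1$ for the $s-1$ vertices of $V(a)\setminus\{v\}$ and $r_C(vu)<\tfrac{\omega 2^{\mathcal{M}_v}}{\sqrt k}$ for the $t$ edges of $a$ at $v$; these last two conditions are exactly the first-pass conditions evaluated at hash value $\mathcal{M}_v$, so by independence of $c$ and $r_C$ they jointly hold with probability $q:=\tfrac{\omega^{t}2^{\mathcal{M}_v t}}{k^{(s+t-1)/2}}$, which is the first-pass normalizer $x_{v,a,i}^{(\mathcal{M}_v)}$ (the same for either $i$). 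Since $h,c,r_C$ are mutually independent, $a$ is counted at $v$ with probability $\tfrac{q}{\omega 2^{\mathcal{M}_v}}$, and when it is, the amount added to $\overline{M_v}$ is $\tfrac{k^{(s+t-1)/2}}{\omega^{t-1}2^{\mathcal{M}_v(t-1)}}\cdot\tfrac1{s_a}=\tfrac{\omega 2^{\mathcal{M}_v}}{s_a\,q}$; multiplying, the expected contribution of $a$ at $v$ is exactly $\tfrac1{s_a}$. Summing over the $s_a$ vertices of $S_a$ gives expected total contribution $1$ for $a$, and summing over all $M_H$ copies not contained in $G_L$ gives $\mathbb{E}[\overline{M_H}\mid\text{first pass}]=M_H$.

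The only point requiring care is the exponent identity used at the end: that the second-pass weight $\tfrac{k^{(s+t-1)/2}}{\omega^{t-1}2^{\mathcal{M}_v(t-1)}}$ equals $\omega 2^{\mathcal{M}_v}$ times $1/q$, so that after multiplying by $\Pr[h(v)=\mathcal{M}_v]=\tfrac1{\omega 2^{\mathcal{M}_v}}$ everything cancels. This duality is precisely why the algorithm uses $x_{v,a,i}^{(h)}$ as the normalizer in the first pass — where it is the exact sampling probability — and builds the second-pass weight from the same expression with $h=\mathcal{M}_v$; the computation also goes through unchanged in the degenerate case $t=0$ (a vertex isolated within a disconnected $A$). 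Note that the hypothesis $\wt M\le M$ is never used: the identity holds conditionally on any first-pass outcome, so nothing beyond the linearity-of-expectation computation above, together with the tower property, is required.
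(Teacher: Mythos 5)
Your proposal is correct and follows essentially the same argument as the paper's proof: condition on the first pass, note that copies inside $G_L$ contribute nothing, and show by the exact cancellation of the second-pass weight $\frac{k^{(s+t-1)/2}}{\omega^{t-1}2^{\mathcal{M}_v(t-1)}}$ against the counting probability $\frac{(\omega 2^{\mathcal{M}_v})^t}{\omega 2^{\mathcal{M}_v}k^{(s+t-1)/2}}$ that each remaining copy contributes $1/l$ in expectation at each of its $l$ heavy vertices, hence $1$ in total. Your explicit tower-property step and the observation that the hypothesis $\wt{M}\leq M$ is not actually needed are consistent with the paper, whose proof of Lemma~\ref{thexpec2} (like that of Lemma~\ref{thexpec}) never invokes it.
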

    \begin{proof}
      $\overline{M_H} = \sum_{v\in V_H} \overline{M_v}$, and $M_H$ is the number of copies of $A$ in $G$ that are not in $G_L$. If a subgraph in $G$ is in $G_L$, it can never contribute to a $T_v$, as every vertex $u$ it uses will have $\mathcal{M}_u = \lo$. If a subgraph $a \cong A$ in $G$ is not in $G_L$, it has $l \in \lbrack s\rbrack$ vertices $u$ s.t. $\mathcal{M}_u \not = \lo$. It will therefore have $l$ vertices $v$ such that it can contribute to $M_v$.

      At each of those vertices $v$, a subgraph $a \cong A$ with $t = |\lbrace u | vu \in E(a)\rbrace|$ will contribute $\frac{k^\frac{s + t - 1}{2}}{\omega^{t - 1} 2^{\mathcal{M}_v(t - 1)}}\frac{1}{l}$ to $\overline{M_v}$ iff:

      \begin{align*}
        & h(v) = \mathcal{M}_v\\
        \forall u \in V(a) \setminus \lbrace v\rbrace, & c(u) = 1\\
        \forall e \in \left\lbrace u \middle| vu \in E(a) \right\rbrace, & r_C(e) < \frac{\omega 2^{\mathcal{M}_v}}{\sqrt k}
      \end{align*}

      This happens with probability $\frac{\left(\omega 2^\mathcal{M}_v\right)^t}{\omega2^{\mathcal{M}_v}k^\frac{s + t - 1}{2}} $, so the expected contribution of $a$ to $\overline{M_v}$ is $\frac{1}{l}$. Therefore, as there are $l$ vertices where $a$ can contribute, its expected contribution to $\overline{M_H}$ is 1.

      Therefore,  $\mathbb{E}\left\lbrack \overline{M_H} | M_H\right\rbrack$ is precisely the number of copies of $A$ in $G$ that are not in $G_L$, which is $M_H$.
    \end{proof}

    \begin{lemma}
      \label{thvar2}
      If $\wt{M} \leq M$, $\var(\overline{M_H}) \lesssim M^2 + \sum_{l = 2}^s C_l \frac{k^{l-1}}{\omega^{l-2}}$
    \end{lemma}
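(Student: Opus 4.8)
The plan is to mirror the proof of Lemma~\ref{thvar} (the $s = 3$ case) essentially line for line, replacing a triangle $t$ counted at a vertex $v$ by a copy $a \cong A$ counted at $v$ and carrying along the degree $t_a^v = \abs{\{u : vu \in E(a)\}}$ of the counted vertex inside the copy, which was always $2$ in the triangle case. First I would pass to a dominating statistic: for $a \in \mu(\{v\})$ set $Y_{a^v} = \frac{k^{(s + t_a^v - 1)/2}}{\omega^{\,t_a^v - 1}\, 2^{\mathcal{M}_v(t_a^v - 1)}} = \frac{\omega 2^{\mathcal{M}_v}}{x_{v,a}^{(\mathcal{M}_v)}}$ when $h(v) = \mathcal{M}_v$ and the $c$- and $r_C$-conditions for counting $a$ at $v$ all hold, and $Y_{a^v} = 0$ otherwise. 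As in the proof of Lemma~\ref{thexpec2}, a present copy is counted at $v$ with probability $x_{v,a}^{(\mathcal{M}_v)}/(\omega 2^{\mathcal{M}_v})$, so $\E[Y_{a^v}] = 1$; and since the factor $1/\abs{\{x \in V(a) : \mathcal{M}_x \ne \lo\}} \le 1$ appearing in $\overline{M_v}$ is dropped, $\overline{M_v} \le \sum_{a \in \mu(\{v\})} Y_{a^v}$. Hence, with $Y = \sum_v \sum_{a \in \mu(\{v\})} Y_{a^v}$, we have $\var(\overline{M_H}) \le \E[\overline{M_H}^2] \le \E[Y^2]$ and $\E[Y] = M$.

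Next I would expand $\E[Y^2] = \sum_{(a_1,u),(a_2,v)} \E[Y_{a_1^u} Y_{a_2^v}]$ (over copies $a_1,a_2\cong A$ and vertices $u\in V(a_1)$, $v\in V(a_2)$) and bound each term as the product of the probabilities of three independent events --- that $h$ agrees with $\mathcal{M}$ on $\{u\}$ and on $\{v\}$; that every vertex of $(V(a_1) \cup V(a_2)) \setminus \{u,v\}$ is $c$-sampled; and that the $r_C$-constraints on the edges of $a_1$ at $u$ and of $a_2$ at $v$ hold, these two edge sets overlapping in at most the single edge $uv$ --- times the weights $Y_{a_1^u} Y_{a_2^v}$. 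As in Lemma~\ref{thvar} the $\mathcal{M}$-dependence cancels against the weights, and using $\omega 2^{\mathcal{M}_v} \le \sqrt k$ one obtains, for a pair with $l = \abs{V(a_1) \cap V(a_2)} \ge 2$, the bound $\E[Y_{a_1^u}Y_{a_2^v}] \lesssim k (k/\omega)^{l-2} = k^{l-1}/\omega^{l-2}$, whether or not $u = v$ (the extreme case $a_1 = a_2$, where $l = s$, being absorbed into the $l = s$ summand since $\omega \le \sqrt k$). There are at most $C_l$ such ordered pairs, so they contribute $\lesssim \sum_{l=2}^s C_l k^{l-1}/\omega^{l-2}$; and the $l = 0$ (vertex-disjoint, hence $u\ne v$) pairs are independent, multiply to $\E[Y_{a_1^u}]\E[Y_{a_2^v}] = 1$, and sum to $\E[Y]^2 = M^2$.

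The main obstacle, exactly as for triangles, is $l = 1$. When $u \ne v$ (so $a_1, a_2$ meet at a single vertex $w \notin \{u,v\}$) the per-pair bound is still $\lesssim \sqrt k$, but the number of such pairs is not $O(C_1)$; I would bound it by $\sum_w\bigl(\sum_{z : \mathcal{M}_z \ne \lo} M_{\{w,z\}}\bigr)^2$, apply Cauchy--Schwarz to get $\le L \cdot C_2$ with $L = \abs{\{z : \mathcal{M}_z \ne \lo\}}$, and then take expectations using $\E[L] \lesssim \sum_z \sqrt k\, M_z/\wt M$ from Lemma~\ref{hierror2} together with $\wt M \le M$; the definition $\omega \le (\wt M)^2/C_1^+$ is what turns the resulting $C_1$-flavored quantity into $M^2 + \sum_l C_l k^{l-1}/\omega^{l-2}$. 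When $u = v$ one splits on whether $\mathcal{M}_v \le \lceil \log \tfrac{\wt M}{\omega M_v}\rceil$ or exceeds it by $j \ge 1$: in the first case $\E[Y_{a_1^v}Y_{a_2^v}]$ is controlled by $\wt M$-type quantities summing over $v$ to $\lesssim M^2 + \sum_l C_l k^{l-1}/\omega^{l-2}$, and in the second case one plays the geometric decay $2^{-\abs{j-x}}$ of $\Pr[\mathcal{M}_v \ge \lceil\log\tfrac{\wt M}{\omega M_v}\rceil + j]$ from Lemma~\ref{derror2} (with $x$ the root of the associated threshold equation) against the growth of the weight in $j$ and sums the geometric series, just as in Lemma~\ref{thvar}. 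This last computation is the one place where the varying degree $t_a^v$ genuinely differs from the triangle argument, and one must check that the decay still dominates --- which it does because $\omega 2^{\mathcal{M}_v} \le \sqrt k$ keeps every $Y_{a^v}$ bounded. Summing the contributions of all cases yields $\E[Y^2] \lesssim M^2 + \sum_{l=2}^s C_l k^{l-1}/\omega^{l-2}$, hence the claimed bound on $\var(\overline{M_H})$.
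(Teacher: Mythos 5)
Your proposal follows the paper's proof essentially line for line: the same dominating statistic $Y_{a^v}$ with the $1/\abs{\{x : \mathcal{M}_x \neq \lo\}}$ factor dropped, the same case split on $l = \abs{V(a_1)\cap V(a_2)}$ and on $u=v$ versus $u\neq v$, the same Cauchy--Schwarz plus Lemma~\ref{hierror2} argument for $l=1,\,u\neq v$, and the same geometric-decay argument via Lemma~\ref{derror2} for $l=1,\,u=v$. The only deviations are cosmetic (e.g.\ using the uniform bound $k^{l-1}/\omega^{l-2}$ for all $l\geq 2$ rather than $k^{l/2}$ when $u\neq v$, and folding the paper's separate $\mathcal{M}_v=0$ subcase into the overshoot case), so the argument is correct and matches the paper's.
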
 
    \begin{proof}
      We now care, for any subgraph $a \cong A$, which vertex we are counting it at (and so which $M_v$ it may contribute to. We will therefore use $a^v$ to denote the subgraph $a$, counted at the vertex $v$. With $t = |\lbrace u | vu \in E(a)\rbrace|$, we then define $Y_{a^v}$ as $\frac{k^\frac{s + t - 1}{2}}{\omega^{t - 1} 2^{\mathcal{M}_v(t - 1)}}$ if $a$ is counted at $v$ and 0 otherwise.

      Then, $\overline{M_v} \leq \sum_{a \subseteq G, a \cong A, v \in V(a)} Y_{a^v}$. So with $Y = \sum_{a,v , a \cong A, v \in V(a)} Y_{a^v}$, $\mathbb{E}\left\lbrack \overline{M}^2 \right\rbrack \leq \mathbb{E}\left\lbrack Y^2 \right\rbrack$. We will bound  $ \mathbb{E}\left\lbrack Y^2 \right\rbrack$ by bounding $\mathbb{E}\left\lbrack Y_{a_1^u}Y_{a_2^v} \right\rbrack$ for each pair $a_1^u, a_2^v$.

      We will bound these with respect to $l = |V(a_1) \cap V(a_2)|$, treating $l = 1$ as a special case, and treating $u = v$ and $u \not = v$ separately. In each case, let $t_1 = |\left\lbrace uw \middle| uw \in E(a_1) \right\rbrace|, t_2 = |\left\lbrace vw \middle| vw \in E(a_2) \right\rbrace|$.

      \begin{description}
        \item[$l \not = 1, u \not = v$:] We need $\mathcal{M}_u = h(u)$, $\mathcal{M}_v = h(v)$, which happens with probability $\omega^{-2}2^{-\mathcal{M}_u - \mathcal{M}_v}$. We need $\forall w \in V(a_1) \setminus \lbrace u\rbrace \cup V(a_2) \setminus \lbrace v\rbrace, c(w) = 1$. There are at least $2s - l - 2$ vertices in this set, so this happens with probability $\leq k^{-s + 1 +\frac{l}{2}}$.  We need $\forall e \in \left\lbrace uw \middle| uw \in E(a_1) \right\rbrace r_C(e) < \frac{\omega 2^{\mathcal{M}_u}}{\sqrt k}$ and $\forall e \in \left\lbrace vw \middle| vw \in E(a_2) \right\rbrace, r_C(e) < \frac{\omega 2^{\mathcal{M}_v}}{\sqrt k}$. These sets can overlap in at most one edge ($uv$), so this happens with probability $\leq \left(\frac{\omega}{\sqrt k}\right)^{t_1 + t_2 -1} 2^{t_1\mathcal{M}_u + t_2\mathcal{M}_v - \max\lbrace \mathcal{M}_u, \mathcal{M}_v\rbrace}$. Furthermore, the overlap in $uv$ can only occur if $u \in V(a_2)$ and $v\in V(a_1)$, and in this case we will also need $c(u) = 1$ and $c(v) = 1$. So this either reduces the probability by a factor of $\frac{\omega2^{\max \lbrace \mathcal{M}_u, \mathcal{M}_v\rbrace}}{\sqrt k}$ or $\frac{1}{k}$, and so in either case by at least a factor of $\frac{\omega2^{\max \lbrace \mathcal{M}_u, \mathcal{M}_v\rbrace}}{\sqrt k}$.

          So as these three conditions are independent, and multiplying by the values $Y_{a_1^u}, Y_{a_2^v}$ take when $a_1^u, a_2^v$ are counted, $\mathbb{E}\left\lbrack Y_{a_1^u}Y_{a_2^v}\right\rbrack \leq k^\frac{l}{2}$. So the contribution to $\mathbb{E}\left\lbrack Y^2\right\rbrack$ from such pairs is $\lesssim C_l k^\frac{l}{2}$.

        \item[$l \not = 1, u = v$:] We need $\mathcal{M}_v = h(v)$, which happens with probability $\omega^{-1}2^{- \mathcal{M}_v}$. We need $\forall w \in V(a_1) \setminus \lbrace u\rbrace \cup V(a_2) \setminus \lbrace v\rbrace, c(w) = 1$. There are at least $2s - l - 1$ vertices in this set, so this happens with probability $\leq k^{-s +\frac{l + 1}{2}}$. We need \[
          \forall e \in \left\lbrace uw \middle| uw \in E(a_1) \right\rbrace \cup \left\lbrace vw \middle| vw \in E(a_2) \right\rbrace, r_C(e) < \frac{\omega 2^{\mathcal{M}_v}}{\sqrt k}\]
          . As this set has at least $\max\lbrace t_1,t_2\rbrace$ elements, this happens with probability $\leq \left(\frac{\omega 2^{\mathcal{M}_v}}{\sqrt k}\right)^{\max\lbrace t_1,t_2\rbrace}$.

          So as these three conditions are independent, and multiplying by the values $Y_{a_1^u}, Y_{a_2^v}$ take when $a_1^u, a_2^v$ are counted, $\mathbb{E}\left\lbrack Y_{a_1^u}Y_{a_2^v}\right\rbrack \leq $\\$k^{\frac{l + \min\lbrace t_1,t_2\rbrace - 3}{2}}\omega^{1 - \min\lbrace t_1,t_2\rbrace}2^{1 -\mathcal{M}_v(\min\lbrace t_1,t_2\rbrace)}$. So as $\frac{\omega 2^{\mathcal{M}_v}}{\sqrt k} \leq 1$, and $t_1, t_2 \leq l -1$, this gives us $\mathbb{E}\left\lbrack Y_{a_1^u}Y_{a_2^v}\right\rbrack \leq \frac{k^{l -1}}{\omega^{l-2}2^{\mathcal{M}_v(l - 2)}} \leq \frac{k^{l -1}}{\omega^{l-2}}$. So the contribution to $\mathbb{E}\left\lbrack Y^2\right\rbrack$ from such pairs is $\lesssim C_l \frac{k^{l -1}}{\omega^{l-2}}$.

        \item[$l = 1, u \not = v$:] As in the $l \not = 1$ case, $\mathbb{E}\left\lbrack Y_{a_1^u}Y_{a_2^v}\right\rbrack \leq k^\frac{l}{2} = \sqrt{k}$. So we seek to bound the number of such pairs.

          For any $w \in V(G)$, the number of such pairs intersecting at $w$ is $\leq \left(\sum_{v \in V(G), \mathcal{M}_v \not = \lo} M_{wv}\right)^2$. Now let $L = |\lbrace v \in V(G) | \mathcal{M}_v \not= \lo\rbrace |$. Suppose $L = r$. By Cauchy-Schwartz, this means the number of such pairs intersecting at $w$ is $\leq r \sum_{v \in V(G)} M_{wv}^2$. So by summing across all $w$, the total number of such pairs is $\leq rC_2$. So the contribution to the expectation conditioned on $L = r$ is $\leq \sqrt{k} rC_2$. As our bound on $\mathbb{E}\left\lbrack Y_{a_1^u}Y_{a_2^v}\right\rbrack$ is independent of the values of the $\mathcal{M}_v$, we can then bound the unconditional contribution to $\mathbb{E}\left\lbrack Y^2\right\rbrack$ by:

          \begin{align*}
            \sum_r \sqrt{k}rC_2 \mathbb{P}\left\lbrack L = r \right\rbrack & = \mathbb{E}\left\lbrack L \right\rbrack \sqrt{k}C_2\\
            & = \sqrt{k}C_2\sum_{v \in V(G)} \mathbb{P}\left\lbrack \mathcal{M}_v = \lo\right\rbrack\\
            & = \sqrt{k}C_2\sum_{v \in V(G)}\frac{\sqrt{k}M_v}{\widetilde{M}} & \mbox{By Lemma \ref{hierror2}.}\\
            & = C_2k\frac{M}{\widetilde{M}}
          \end{align*}

        \item[$l = 1, u = v$:]  As in the $l \not = 1$ case, $\mathbb{E}\left\lbrack Y_{a_1^u}Y_{a_2^v}\right\rbrack \leq \frac{k^{l -1}}{\omega^{l-2}2^{\mathcal{M}_v(l -2 )}} = \omega 2^{\mathcal{M}_v}$. So at any vertex $v$, the contribution to the expectation, conditioned on $\mathcal{M}_v$, is $\leq M_v^2 \omega 2^{\mathcal{M}_v}$. 

          We will consider three (exhaustive, but not necessarily mutually exclusive) cases: $\mathcal{M}_v \leq \left\lceil\log\frac{\widetilde{M}}{\omega M_v}\right\rceil$, $\mathcal{M}_v = 0$, and $\mathcal{M}_v = \max\left\lbrace\left\lceil\log \frac{\widetilde{M}}{\omega M_v}\right\rceil + i, 0\right\rbrace$ for some $i \geq 1$.

          In the first case, $\mathbb{E}\left\lbrack Y_{a_1^u}Y_{a_2^v}\right\rbrack \leq  \frac{\widetilde{M}}{M_v}$, so the total contribution to the expectation from such vertices is $\leq \sum_{v \in V(G)} M_v\widetilde{M} \lesssim M^2$.

          In the second case, $\mathbb{E}\left\lbrack Y_{a_1^u}Y_{a_2^v}\right\rbrack \leq \omega$, so the total contribution to the expectation from such vertices is $\leq \sum_{v \in V(G)} M_v^2\omega = C_1\omega \leq M^2$.

          In the third case, $\mathbb{E}\left\lbrack Y_{a_1^u}Y_{a_2^v}\right\rbrack \leq  \frac{\widetilde{M}2^i}{M_v}$, and the probability of $\mathcal{M}_v$ being at least this high is:

          \begin{align*}
            & \lesssim \min\left\lbrace 1, \left(\frac{1}{M_v^2}\sum_{l = 2}^s \sum_{S \subseteq V(G), |S| = l, v \in S} M_S^2 \left(\frac{kM_v}{2^i\widetilde{M}}\right)^{l - 1}\right)^2 \right\rbrace & \mbox{By Lemma \ref{derror2}}
          \end{align*}

          Now let $x \in \mathbb{R}$ be the unique solution to \[
            \frac{1}{M_v^2}\sum_{l = 2}^s \sum_{S \subseteq V(G), |S| = l, v \in S} M_S^2 \left(\frac{kM_v}{2^i\widetilde{M}}\right)^{l - 1} = 1
          \]  For $i \leq x$, $1 \leq 2^{-|i - x|} \frac{1}{M_v^2}\sum_{l = 2}^s \sum_{S \subseteq V(G), |S| = l, v \in S} M_S^2 \left(\frac{kM_v}{2^i\widetilde{M}}\right)^{l - 1}$. For $i \geq x$, \[
            \left( \frac{1}{M_v^2}\sum_{l = 2}^s \sum_{S \subseteq V(G), |S| = l, v \in S} M_S^2 \left(\frac{kM_v}{2^i\widetilde{M}}\right)^{l - 1}\right)^2 \leq 2^{-|i - x|}\frac{1}{M_v^2}\sum_{l = 2}^s \sum_{S \subseteq V(G), |S| = l, v \in S} M_S^2 \left(\frac{kM_v}{2^i\widetilde{M}}\right)^{l - 1}
          \] So in either case, the contribution to  $\mathbb{E}\left\lbrack Y^2\right\rbrack$ is bounded by:

          \begin{align*}
            &\lesssim \frac{\widetilde{M}}{M_v}\sum_{i = \max\left\lbrace 0, -\left\lceil\log \frac{\widetilde{M}}{\omega M_v}\right\rceil \right\rbrace}^{\left\lceil \log \frac{\sqrt k}{\omega}\right\rceil} 2^{i - |i - x|} \sum_{l = 2}^s \sum_{S \subseteq V(G), |S| = l, v \in S} M_S^2 \left(\frac{kM_v}{2^i\widetilde{M}}\right)^{l - 1} \\
            & \leq \sum_{i = \max\left\lbrace 0, -\left\lceil\log \frac{\widetilde{M}}{\omega M_v}\right\rceil \right\rbrace}^{\left\lceil \log \frac{\sqrt k}{\omega}\right\rceil} 2^{- |i - x|} \sum_{l = 2}^s \sum_{S \subseteq V(G), |S| = l, v \in S} M_S^2 k^{l-1}\left(\frac{M_v}{2^i\widetilde{M}}\right)^{l - 2}\\
            & \leq \sum_{i = 0}^{\infty} 2^{-i}\sum_{l = 2}^s \sum_{S \subseteq V(G), |S| = l, v \in S} M_S^2 \frac{k^{l-1}}{\left(2^i\omega\right)^{l-2}}\\
            & \lesssim \sum_{l = 2}^s \sum_{S \subseteq V(G), |S| = l, v \in S} M_S^2 \frac{k^{l-1}}{\omega^{l-2}}
          \end{align*}

          And so, by summing across all $v$, the contribution is:

          \begin{align*}
            \lesssim \sum_{l = 2}^s C_l \frac{k^{l-1}}{\omega^{l-2}}
          \end{align*}

      \end{description}

      Now, by summing across all the cases and using the fact that, with $\omega \leq \sqrt{k}$ and $l \geq 2$, $k^\frac{l}{2} \leq \frac{k^{l-1}}{\omega^{l-2}}$ (and noting that $C_l \leq M^2$), this gives us:

      \begin{align*}
        \var(\overline{M_H}) & \leq \mathbb{E}\left\lbrack Y^2 \right\rbrack\\
        &\lesssim M^2 + \sum_{l = 2}^s C_l \frac{k^{l-1}}{\omega^{l-2}}
      \end{align*}

    \end{proof}

    \subsection{\texorpdfstring{$\overline{M}$}{M}}
    \begin{lemma}
      $\mathbb{E}\left\lbrack\overline{M}\right\rbrack = M$.
    \end{lemma}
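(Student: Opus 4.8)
The plan is to mirror exactly the argument used for the triangle case in Lemma~\ref{texpec}: condition on the random partition of the graph, apply the two expectation lemmas already established for the two estimators separately, and combine by linearity of expectation and the tower property.

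First I would observe that, regardless of the outcome of the first pass, the split is exhaustive and disjoint: every copy of $A$ in $G$ either lies entirely inside $G_L$ or uses at least one vertex $u$ with $\mathcal{M}_u \neq \lo$, so $M_L + M_H = M$ deterministically. Next, I would invoke Lemma~\ref{tlexpec2}, which gives $\mathbb{E}[\overline{M_L}\mid M_L] = M_L$, and Lemma~\ref{thexpec2}, which gives $\mathbb{E}[\overline{M_H}\mid M_H] = M_H$ (under the hypothesis $\wt{M}\le M$, inherited here). Since $\overline{M} = \overline{M_L} + \overline{M_H}$, conditioning on the pair $(M_L, M_H)$ determined by the first pass and using linearity yields
\[
  \mathbb{E}\!\left[\overline{M}\mid M_L, M_H\right] = \mathbb{E}\!\left[\overline{M_L}\mid M_L\right] + \mathbb{E}\!\left[\overline{M_H}\mid M_H\right] = M_L + M_H = M.
\]
Taking expectations over the first-pass randomness (the hash functions $r_{D,i}$, $\mathcal{M}_i$ and hence the $\mathcal{M}_v$) and applying the law of total expectation gives $\mathbb{E}[\overline{M}] = M$.

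The only point requiring a little care — and the closest thing to an obstacle — is the conditioning structure: one must check that $\overline{M_L}$ is a function only of $M_L$ (equivalently $V_L$) together with the independent second-pass hash $c$, and similarly that $\overline{M_H}$ depends only on the configuration of heavy vertices plus the independent hashes $h$, $c$, $r_C$, so that the conditional expectations from Lemmas~\ref{tlexpec2} and~\ref{thexpec2} are legitimately applied with the first pass fixed. This is immediate from the definitions of the two estimators, so the proof is essentially a one-line combination of the previous two lemmas.
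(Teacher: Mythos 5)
Your proof is correct and follows essentially the same route as the paper: condition on the first-pass outcome, apply Lemmas~\ref{tlexpec2} and~\ref{thexpec2} together with $M_L + M_H = M$, and conclude by linearity and the law of total expectation (your explicit note that the $\wt{M}\le M$ hypothesis of Lemma~\ref{thexpec2} is inherited is a small point the paper leaves implicit). No issues.
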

    \begin{proof}
      By Lemma \ref{thexpec2}, $\mathbb{E}\left\lbrack\overline{M} | M_L \right\rbrack = \mathbb{E}\left\lbrack\overline{M_L} | M_L \right\rbrack + \mathbb{E}\left\lbrack\overline{M_H} | M_L \right\rbrack = \mathbb{E}\left\lbrack\overline{M_L} | M_L \right\rbrack + \mathbb{E}\left\lbrack\overline{M_H} | M_H \right\rbrack = M_L + M_H = M$. So $\mathbb{E}\left\lbrack\overline{M}\right\rbrack = M$.
    \end{proof}

    \begin{lemma}
      \label{mvariance}
      $\var\left(\overline{M}\right) \lesssim M^2 + \sum_{l = 2}^s C_l\left(k^\frac{l}{2} + k\left(\frac{C_1^+}{M^2}k\right)^{l - 2}\right)$ 
    \end{lemma}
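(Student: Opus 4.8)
The plan is to obtain Lemma~\ref{mvariance} as an immediate consequence of the two variance bounds already established, Lemma~\ref{tlvar2} for $\overline{M_L}$ and Lemma~\ref{thvar2} for $\overline{M_H}$, together with the definition of $\omega$. Since $\overline{M} = \overline{M_L} + \overline{M_H}$, the elementary inequality $\var(X+Y) \le 2\var(X) + 2\var(Y)$ gives
\[
\var(\overline{M}) \;\lesssim\; \var(\overline{M_L}) + \var(\overline{M_H}) \;\lesssim\; M^2 + \sum_{l=2}^s C_l k^{l/2} + \sum_{l=2}^s C_l \frac{k^{l-1}}{\omega^{l-2}},
\]
so the only thing left to check is that, term by term, $\frac{k^{l-1}}{\omega^{l-2}}$ is dominated (up to a constant) by $k^{l/2} + k\bigl(\frac{C_1^+}{M^2}k\bigr)^{l-2}$, which will let us fold the $\overline{M_H}$ contribution into the shape of the claimed bound.

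For that step I would substitute $\omega = \min\{\widetilde{M}^2/C_1^+,\ \sqrt{k}\}$, i.e. $1/\omega = \max\{C_1^+/\widetilde{M}^2,\ 1/\sqrt{k}\}$, and use that for nonnegative $a,b$ and any exponent $l-2 \ge 0$ one has $\max\{a,b\}^{l-2} \le a^{l-2} + b^{l-2}$. This yields
\[
\frac{k^{l-1}}{\omega^{l-2}} \;\le\; k^{l-1}\Bigl(\frac{C_1^+}{\widetilde{M}^2}\Bigr)^{l-2} + k^{l-1}k^{-(l-2)/2} \;=\; k\Bigl(\frac{C_1^+ k}{\widetilde{M}^2}\Bigr)^{l-2} + k^{l/2},
\]
which is exactly the desired form once we record that the lemma is applied with $\widetilde{M} = \Theta(M)$ (equivalently, replacing $\widetilde{M}$ by $M$). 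Summing over $l$ and combining the $M^2$ terms then gives $\var(\overline{M}) \lesssim M^2 + \sum_{l=2}^s C_l\bigl(k^{l/2} + k(\tfrac{C_1^+}{M^2}k)^{l-2}\bigr)$, as claimed.

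The computation is entirely routine; there is no genuine obstacle beyond correctly splitting the maximum inside $1/\omega$ and bookkeeping the exponents of $k$. The only points meriting a moment of care are the boundary case $l = 2$, where $(\tfrac{C_1^+k}{\widetilde M^2})^{l-2} = 1$ and the $\overline{M_H}$ term $k^{l-1}/\omega^{l-2} = k$ must be absorbed into the $C_2 k$ summand (which it is, since the stated bound contains $C_2(k + k)$), and a final check that every $C_l$ and $M^2$ coefficient coming out of Lemmas~\ref{tlvar2} and~\ref{thvar2} is at most a constant multiple of the corresponding coefficient in the target expression.
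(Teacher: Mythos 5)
Your proposal is correct and follows essentially the same route as the paper: bound $\var(\overline{M}) \lesssim \var(\overline{M_L}) + \var(\overline{M_H})$, invoke Lemmas~\ref{tlvar2} and~\ref{thvar2}, and then substitute $\omega = \min\{\widetilde{M}^2/C_1^+, \sqrt{k}\}$ to turn $k^{l-1}/\omega^{l-2}$ into $k^{l/2} + k(\tfrac{C_1^+}{M^2}k)^{l-2}$. If anything, your explicit splitting of the maximum and your remark that $\widetilde{M}$ must be taken as $\Theta(M)$ is slightly more careful than the paper's one-line ``the result then follows from the definition of $\omega$.''
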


    \begin{proof}
      $\var(M) \lesssim \var(M_L) + \var(M_H)$, so from Lemmas \ref{thvar2} and \ref{tlvar2}, using the fact that $\frac{k^{l-1}}{\omega^{l-2}} \geq k^\frac{l}{2}$ for $\omega \leq \sqrt{k}$, $l \geq 2$, we get  $\var\left(\overline{M}\right) \lesssim M^2 + C_2k\log k + \sum_{l = 3}^s C_l \frac{k^{l-1}}{\omega^{l-2}}$. The result then follows from $\omega = \min \left\lbrace \frac{\left(\widetilde{M}\right)^2}{C_1^+}, \sqrt{k} \right\rbrace$. 
    \end{proof}

    \subsection{Single-Pass Algorithm}
    \begin{lemma}
      \label{medgecost}
      The first and second pass calculations can be performed while storing no more than $O\left(\frac{m\log \frac{M_V\sqrt{k}}{M^2}}{k}\right)$ edges in expectation.
    \end{lemma}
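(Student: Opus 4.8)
The plan is to follow the proof of Lemma~\ref{spass} almost verbatim, replacing the degree hash $d_i$ by $\mathcal{M}_i$ and ``triangle'' by ``copy of $A$'' throughout. First I would introduce five sets of sampled edges: $E_0$, the edges $uv$ with $\mathcal{M}_i(u)=\mathcal{M}_i(v)=1$ for some $i$; $E_1$, the edges $uv$ with $c(u)=1$ and $\mathcal{M}_i(v)=1$ for some $i$; $E_2$, the edges $uv$ with $c(u)=c(v)=1$; $E_3$, the edges $uv$ with $r_{D,i}(uv)<2^{h(u)}\omega/\sqrt k$ and $\mathcal{M}_i(v)=1$ for some $i$; and $E_4$, the edges $uv$ with $r_C(uv)<2^{h(u)}\omega/\sqrt k$ and $c(v)=1$. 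The algorithm stores an edge $uv$ of $G$ whenever $uv$ or $vu$ lies in $\bigcup_{j=0}^4 E_j$.

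The second step is to bound the expected number of stored edges. For a fixed $uv\in E$ we have $\Pr[uv\in E_0],\Pr[uv\in E_1]=O(1/k)$ and $\Pr[uv\in E_2]=1/k$, since $\mathcal{M}_i$ and $c$ each pick a vertex with probability $1/\sqrt k$. For $E_3$ and $E_4$, conditioning on $h(u)=h$ (probability $1/(\omega 2^h)$) makes the $r$-event have probability $2^h\omega/\sqrt k$ and the $\mathcal{M}_i$/$c$-event probability $1/\sqrt k$, so each of the $O(\log\frac{\sqrt k}{\omega})$ levels contributes $O(1/k)$, giving $\Pr[uv\in E_3],\Pr[uv\in E_4]=O(\frac1k\log\frac{\sqrt k}{\omega})$. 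Summing over the $m$ edges gives $O(\frac mk\log\frac{\sqrt k}{\omega})$ stored edges in expectation; since $\omega=\min\{(\widetilde M)^2/C_1^+,\sqrt k\}$ with $C_1^+$ an arbitrary upper bound on $C_1$ which may be taken tight up to constants, this equals the claimed $O\!\left(\frac{m\log\frac{M_V\sqrt k}{M^2}}{k}\right)$.

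The third step is to show both conceptual passes are recoverable from $\bigcup_j E_j$. For $\overline{M_L}$ it suffices to compute $\mathcal{M}_v$ for every $v$ with $c(v)=1$; this is read off from $X_{v,i}^{(h)}$ at the top level $h=\lceil\log\frac{\sqrt k}{\omega}\rceil$, where the $r_{D,i}$ constraints are vacuous, and every copy of $A$ contributing there has its spoke edges at $v$ in $E_1$ (one endpoint has $c=1$, the other has $\mathcal{M}_i=1$) and its remaining edges in $E_0$ (both endpoints have $\mathcal{M}_i=1$); then the copies of $A$ in $G'_L$ are counted inside $E_2$. For $\overline{M_H}$ it suffices to know, for each $v$, whether $\mathcal{M}_v=h(v)$ and, if so, its value; since $X_{v,i}^{(h)}$ is monotone in $h$ we can compute $H_{v,i}\cap\{0,\dots,h(v)\}$ from $E_0$ and $E_3$, which suffices. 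When $\mathcal{M}_v=h(v)$, the copies contributing to $\overline{M_v}$ have their spoke edges in $E_4$ and their remaining edges in $E_2$, so $\overline{M_v}$, hence $\overline{M_H}$ and $\overline{M}=\overline{M_L}+\overline{M_H}$, is computable.

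The main obstacle is the bookkeeping in this last step. Unlike a triangle, a size-$s$ copy of $A$ carries edges not incident to whichever vertex we designate as its ``main'' vertex, and one must verify that each such edge is nonetheless forced into $E_0$ (in the first pass) or $E_2$ (in the second) whenever the copy is counted; this holds because such an edge joins two non-main vertices of the copy, both of which are required to satisfy the relevant sampling predicate ($\mathcal{M}_i=1$, resp.\ $c=1$) for the copy to be counted. Given this, the reconstruction and the edge-count bound combine exactly as in Lemma~\ref{spass}.
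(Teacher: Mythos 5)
Your proposal matches the paper's proof essentially verbatim: the same five sampled edge sets $E_0,\dots,E_4$ (with $\mathcal{M}_i$ in place of $d_i$), the same per-edge probability bounds summing to $O\bigl(\frac{m}{k}\log\frac{\sqrt{k}}{\omega}\bigr)$, and the same reconstruction of $\overline{M_L}$ and $\overline{M_H}$ via the top-level estimate for $c(v)=1$ vertices and the monotonicity of $X_{v,i}^{(h)}$ up to $h(v)$. Your explicit check that the non-spoke edges of a counted copy land in $E_0$ (first pass) or $E_2$ (second pass) is exactly the observation the paper uses, so the argument is correct and no different in substance.
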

    \begin{proof}
      We will now demonstrate how both conceptual passes can be calculated in a single pass, by only calculating the value $\mathcal{M}_v$ for a vertex $v$ when necessary.

      We will store the following edges (noting that edges in this graph are undirected, and so an edge $e = uv$ is stored if it would be stored as either $uv$ or $vu$):

      \begin{align*}
        E_0 = \lbrace uv \in E & | \exists i, \mathcal{M}_i(u) = 1, \mathcal{M}_i(v) = 1\rbrace\\
        E_1 = \lbrace uv \in E & | \exists i, c(u) = 1, \mathcal{M}_i(v) = 1\rbrace\\
        E_2 = \lbrace uv \in E & | c(u) = 1, c(v) = 1\rbrace\\
        E_3 = \lbrace uv \in E & | \exists i, r_{D,i}(uv) < \frac{2^{h(u)}\omega}{\sqrt{k}}, d(v) = 1\rbrace\\
        E_4 = \lbrace uv \in E & | r_C(uv) < \frac{2^{h(u)}\omega}{\sqrt{k}}, c(v) = 1\rbrace
      \end{align*}

      Then, $\forall uv \in E$:

      \begin{align*}
        \mathbb{P}\left\lbrack uv \in E_0 \right\rbrack & \lesssim \frac{1}{k}\\
        \mathbb{P}\left\lbrack uv \in E_1 \right\rbrack & \lesssim \frac{1}{k}\\
        \mathbb{P}\left\lbrack uv \in E_2 \right\rbrack & = \frac{1}{k}\\
        \mathbb{P}\left\lbrack uv \in E_3 \right\rbrack & = \frac{1}{\sqrt{k}} \sum_{i = 1}^2\sum_{h = 0}^{\left\lceil \log \frac{\sqrt{k}}{\omega} \right\rceil} \mathbb{P}\left\lbrack r_{D,i}(uv) < \frac{2^{h(u)}\omega}{\sqrt{k}} \middle| h(u) = h \right\rbrack \mathbb{P}\left\lbrack h(u) = h \right\rbrack\\
        & \lesssim \frac{1}{\sqrt{k}} \sum_{h = 0}^{\left\lceil \log \frac{\sqrt{k}}{\omega} \right\rceil} \frac{2^{h}\omega}{\sqrt{k}} \frac{1}{\omega 2^h}\\
        & \lesssim \frac{\log \frac{\sqrt{k}}{\omega}}{k}\\
        \mathbb{P}\left\lbrack uv \in E_4 \right\rbrack & \lesssim\frac{\log \frac{\sqrt{k}}{\omega}}{k}\\ 
      \end{align*}

      So these sets will contain $O\left(\frac{m \log  \frac{\sqrt{k}}{\omega}}{k}\right)$ edges in expectation. Then, as $\omega = \min \left\lbrace \frac{(\wt{M})^2}{M_V^+},\sqrt{k}\right\rbrace$, and $M_V^+$ can be any upper bound on $M_V$, we can achieve this by storing $O\left(\frac{m \log  \frac{M_V\sqrt{k}}{M^2}}{k}\right)$ edges.

      We will then use these to calculate $\overline{M_L}$ and $\overline{M_H}$ as follows:

      \begin{description}
        \item[$\overline{M_L}$:] Recall that $\overline{M_L}$ is the number of copies of $A$ in $G_L'$, the subgraph of $G$ induced by $V_L' = \lbrace u \in V(G) | c(u) = 1, \mathcal{M}_u \not = \lo$. $E_2$ will contain every edge in $E(G_L')$, so we can calculate $\overline{M_L}$ provided we can calculate $\mathcal{M}_v$ for all $v$ s.t. $c(v) = 1$.

          To do this, we will need to calculate $\sum_{a \in \mu(\lbrace v\rbrace)} X_{v, i}^{(h,a)}$ for $h = \left\lceil \log \frac{\sqrt k}{\omega} \right\rceil$, and each $i \in \lbrack 2\rbrack$. For each subgraph $a \in \mu(\lbrace v\rbrace)$, $X_{v, i}^{(h,a)} = 1$ iff 

          \begin{align*}
            \forall u \in V(a) \setminus \lbrace v \rbrace, \mathcal{M}_i(u) & = 1\\
            \forall \lbrace u | vu \in E(a) \rbrace, r_{D,i}(vu) & < \frac{\omega 2^h}{\sqrt k}
          \end{align*}

          And 0 otherwise. The third and fourth conditions always hold when $h = \left\lceil \log \frac{\sqrt k}{\omega} \right\rceil$, so we need to know how many such subgraphs $a$ exist with $\forall u \in V(a) \setminus \lbrace v \rbrace, \mathcal{M}_i(u) = 1$. But in that case, $\lbrace uw \in E(a) | u \not = v \not = w\rbrace \subseteq E_0$, and $\lbrace vu | vu \in E(a) \rbrace \subseteq E_1$, so $\sum_{a \in \mu(\lbrace v\rbrace)} X_{v, i}^{(h,a)}$ will be equal to the number of subgraphs $a\in \mu(\lbrace v\rbrace)$ s.t. $\forall u \in V(a) \setminus \lbrace v \rbrace, \mathcal{M}_i(u) = 1$ in the edges we have sampled. So then $\mathcal{M}_{v,i} = \lo$ iff this number is $< \frac{\wt{M}}{k^\frac{3}{2}}$. 

          So we can compute $\mathcal{M}_v$ for each $v$ s.t. $c(v) = 1$ using our sampled edges, and therefore we can compute $\overline{M_L}$.

        \item[$\overline{M_H}$:] For any $v \in V(G)$, $\overline{M_v} = 0$ if $h(v) \not =  \mathcal{M}_v$. So it is sufficient to calculate $\mathcal{M}_v$ when $h(v) = \mathcal{M}_v$, and to know that $h(v) \not = \mathcal{M}_v$ otherwise. We can calculate  $\sum_{a \in \mu(\lbrace v\rbrace)} X_{v, i}^{(h,a)}$ for each $h \leq h(v)$, $i \in \lbrack 2 \rbrack$, as for any $a \in \mu(\lbrace v\rbrace)$ s.t. $\forall u \in V(a) \setminus \lbrace v \rbrace, \mathcal{M}_i(u)  = 1$, $\lbrace uw \in E(a) | u \not = v \not = w\rbrace \subseteq E_0$, and if $\forall \lbrace u | vu \in E(a) \rbrace, r_{D,i}(vu)  < \frac{\omega 2^h}{\sqrt k}$ for some $h \leq h(v)$, $\lbrace vu | vu \in E(a) \rbrace \subseteq E_3$. 

          So then, as $H_{v,i} = \left\lbrace h \in \left\lbrace 1, \dots, \left\lceil \log \frac{\sqrt k}{\omega} \right\rceil\right\rbrace \middle| X_{v,i}^{(h)} \geq \frac{\omega \wt{M}2^h}{k^2}\right\rbrace$ we can compute $H_{v,i} \cap \lbrace 0, \dots, h(v) \rbrace$. As $\mathcal{M}_{v,i} = \min H_{v,i}$, we can compute each $\mathcal{M}_{v,i}$ if it is $\leq h$, and determine that it is $> h(v)$ or $\lo$ otherwise. Then, if $\mathcal{M}_v \leq h(v)$, at least one of $\mathcal{M}_{v,1},\mathcal{M}_{v,2}$ is $\leq h(v)$, and so we can calculate it and therefore calculate $\mathcal{M}_v$, and if not we can determine that $\mathcal{M}_v$ is either $> h(v)$ or $\lo$. (although not necessarily which one)

          So if $\mathcal{M}_v \not = h(v)$, we know $\overline{M_v} = 0$, and then if $\mathcal{M}_v = h(v)$, we need to know how many $a \in \mu(\lbrace v\rbrace)$ there are such that:

          \begin{align*}
            \forall u \in V(a) \setminus \lbrace v \rbrace, c(u) & = 1\\
            \forall \lbrace u | vu \in E(a) \rbrace, r_{C}(vu) & < \frac{\omega 2^h}{\sqrt k}
          \end{align*}

          If these criteria hold, then $a \subseteq E_2 \cap E_4$. So we can calculate $\overline{M_v}$ by counting the number of subgraphs amongst our sampled edges such that these criteria hold.

          We then calculate $\overline{M_H}$ by summing $\overline{M_v}$ for each $v \in V(G)$.

      \end{description}
    \end{proof}

    We are now ready to prove our generalized theorem.
    \restate{thm:general}
    \begin{proof}
      By Lemma \ref{mvariance}, if we choose $\wt{M} \leq M$, we can obtain variance $\lesssim M^2$ with \[k = \sum_{l = 2}^s\left(\left(\frac{M^2}{C_l}\right)^\frac{2}{l} + \left(\frac{M^2}{C_l}\right)^\frac{1}{l-1}\left(\frac{M^2}{C_1}\right)^\frac{l-2}{l-1}\right) = \sum_{l = 2}^s\left(\left(\frac{M^2}{C_l}\right)^\frac{2}{l} + \frac{M^2}{\left(C_l C_1^{l -2}\right)^\frac{1}{l-1}}\right)\], by taking our upper bound $C_1^+$ within a constant factor of the true value of $C_1$. 

      So as by Lemma \ref{medgecost}, we need to sample $\frac{\log k}{k}$ edges to attain this, and we can then repeat the algorithm $O\left(\frac{1}{\epsilon^2}\log\frac{1}{\delta}\right)$ times, taking a median-of-means of our results, to get an estimate within $\epsilon M$ of $M$ with probability $1 - \delta$. 

      Our result then follows from the fact that $f_l = \Theta\left(\frac{C_l}{M^2}\right)$.
    \end{proof}

    \begin{corollary}
      We can obtain a constant-error approximation to $M$ while keeping
      \[
        O\left(m\sum_{l = 2}^s
        \left(f_l^\frac{2}{l} + f_l^\frac{1}{l -1}f_1^{1 - \frac{1}{l-1}}\right)\right)
      \]
      edges. 
    \end{corollary}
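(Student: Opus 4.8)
The plan is that this is essentially immediate from Theorem~\ref{thm:general}. That theorem yields a $(1+\eps)$-factor estimate of $M$ with probability $1-\delta$ using order $m\,\tfrac{\log(1/\delta)}{\eps^{2}}\,\log M\,\sum_{\ell=2}^{s}\bigl(f_\ell^{2/\ell}+f_\ell^{1/(\ell-1)}f_1^{1-1/(\ell-1)}\bigr)$ samples, so the first step is to fix $\eps$ and $\delta$ to absolute constants, which turns $\tfrac{\log(1/\delta)}{\eps^{2}}$ into $O(1)$ and leaves $O\bigl(m\log M\sum_\ell(\cdots)\bigr)$. The corollary is this bound with the residual $\log M$ suppressed — exactly the convention used for logarithmic factors elsewhere (e.g.\ the caption of Figure~\ref{fig:existing}), and the same step that produces the corollary to Theorem~\ref{thm:main} from its statement.

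To make the translation explicit I would unwind the two ingredients of Theorem~\ref{thm:general}'s proof. By Lemma~\ref{mvariance}, variance $O(M^{2})$ is obtained for a $k$ with $1/k=\Theta\!\bigl(\sum_{\ell=2}^{s}\bigl((C_\ell/M^{2})^{2/\ell}+(C_\ell C_1^{\ell-2})^{1/(\ell-1)}/M^{2}\bigr)\bigr)$ — the number of terms is $O(1)$ since $s$ is constant, so it does not matter here whether the defining expression is read as a sum or as a minimum of the individual constraints. By Lemma~\ref{medgecost} the expected number of stored edges is $O\!\bigl(\tfrac{m}{k}\log\tfrac{C_1\sqrt{k}}{M^{2}}\bigr)$, with $\log$ of a quantity $\le 1$ read as $O(1)$ (this is how $\omega=\min\{(\wt M)^{2}/C_1^{+},\sqrt k\}$ enters). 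Dropping (or, when a pure $(C_\ell/M^{2})^{2/\ell}$ term dominates $k$, genuinely losing — as with the $\wt T^{2/3}$ term in Theorem~\ref{thm:main}, where dominance forces $C_1\sqrt k=O(M^{2})$) the logarithmic factor, and substituting $C_\ell=\Theta(f_\ell M^{2})$ — which sends $(C_\ell/M^{2})^{2/\ell}\mapsto f_\ell^{2/\ell}$ and $(C_\ell C_1^{\ell-2})^{1/(\ell-1)}/M^{2}\mapsto f_\ell^{1/(\ell-1)}f_1^{1-1/(\ell-1)}$ — turns $O(m/k)$ into the claimed $O\!\bigl(m\sum_{\ell=2}^{s}(f_\ell^{2/\ell}+f_\ell^{1/(\ell-1)}f_1^{1-1/(\ell-1)})\bigr)$.

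There is no real obstacle here: the substance is entirely in Theorem~\ref{thm:general} (i.e.\ in Lemmas~\ref{tlvar2}, \ref{thvar2}, \ref{mvariance} and~\ref{medgecost}), and the only thing needing a moment's care is bookkeeping the logarithmic factor across the $\ell$-indexed sum, which — just as in the step from Theorem~\ref{thm:main} to its corollary — is harmless, since the factor either disappears or is absorbed into slack that is already present.
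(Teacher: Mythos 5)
Your proposal is correct and matches what the paper does: the corollary is stated with no separate proof, being exactly Theorem~\ref{thm:general} (equivalently, the choice of $k$ from Lemmas~\ref{mvariance} and~\ref{medgecost}) instantiated with constant $\epsilon$ and $\delta$ and with the logarithmic factor suppressed, which is precisely your derivation, including the correct observations that constant $s$ lets the sum/minimum reading of $k$ be interchanged and that $C_\ell=\Theta(f_\ell M^2)$ gives the stated exponents.
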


    \section{Transitivity coefficient}\label{app:transitivity}
    Let $P_2$ denote the number of length $2$ paths in the graph, so the
    transitivity coefficient $\alpha = 3T/P_2$.  The condition that the
    graph have no isolated edges implies that $P_2 \gtrsim m$, or $\alpha
    \lesssim T/m$.  Hence to show that the bound in~\cite{BFKP14} is
    weaker than the one in~\cite{PT12}, it suffices to show the following
    lemma.
    \begin{lemma}
      Consider any graph with $m$ edges, $T$ triangles, $P_2$ length-2
      paths, transitivity coefficient $\alpha = 3T/P_2$, and at most $\Delta_E$
      triangles sharing a common edge.  Then
      \[
        \frac{\sqrt{m}}{\alpha} + \frac{m \sqrt{m}}{T} \gtrsim m\left(\frac{\Delta_E}{T} + \frac{1}{\sqrt{T}}\right)
      \]
    \end{lemma}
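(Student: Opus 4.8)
The plan is to prove the two summands on the right-hand side separately: I will show
$\frac{\sqrt m}{\alpha} + \frac{m\sqrt m}{T} \gtrsim \frac{m}{\sqrt T}$
and
$\frac{\sqrt m}{\alpha} + \frac{m\sqrt m}{T} \gtrsim \frac{m\Delta_E}{T}$,
which together give the claimed bound up to a factor of $2$. We may assume $T\geq 1$. The one substantive ingredient is a pair of lower bounds on $P_2$, the number of length-$2$ paths: first, $P_2\geq 3T$, since every triangle contains three distinct length-$2$ paths and no length-$2$ path lies in more than one triangle (a wedge extends to a triangle only by adding the single edge between its two endpoints); second, $P_2\gtrsim \Delta_E^2$, because if $e=xy$ is an edge lying in $\Delta_E$ triangles then $x$ is adjacent to $y$ and to each of the $\Delta_E$ common neighbours of $x$ and $y$, so $\deg(x)\geq \Delta_E+1$ and the number of length-$2$ paths centred at $x$ is $\binom{\deg(x)}{2}\geq \Delta_E^2/2$. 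Throughout I will rewrite $\frac{1}{\alpha}=\frac{P_2}{3T}$, so that $\frac{\sqrt m}{\alpha}=\frac{\sqrt m\,P_2}{3T}$.

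For the $\frac{m}{\sqrt T}$ bound I split on whether $T\leq m$. If $T\leq m$, then $\frac{m\sqrt m}{T}=\frac{m}{\sqrt T}\cdot\sqrt{m/T}\geq \frac{m}{\sqrt T}$. If $T>m$, then using $P_2\geq 3T$ we get $\frac{\sqrt m}{\alpha}=\frac{\sqrt m\,P_2}{3T}\geq \sqrt m=\frac{m}{\sqrt m}\geq \frac{m}{\sqrt T}$. For the $\frac{m\Delta_E}{T}$ bound I split on whether $\Delta_E\leq \sqrt m$. If $\Delta_E\leq \sqrt m$, then $\frac{m\sqrt m}{T}\geq \frac{m\Delta_E}{T}$ directly. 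If $\Delta_E>\sqrt m$, then using $P_2\gtrsim \Delta_E^2$ we get $\frac{\sqrt m}{\alpha}=\frac{\sqrt m\,P_2}{3T}\gtrsim \frac{\sqrt m\,\Delta_E^2}{T}\geq \frac{\sqrt m\cdot\sqrt m\,\Delta_E}{T}=\frac{m\Delta_E}{T}$. Adding the two established inequalities (and noting $a+b\gtrsim\max\{a,b\}$ applies termwise on the left) completes the proof.

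The routine parts are the two case analyses; the only place that needs a moment's thought is the second lower bound on $P_2$, namely observing that a heavy edge forces a high-degree endpoint and hence quadratically many wedges. I expect no real obstacle here. In particular, note that the ``no isolated edges'' hypothesis used in the surrounding discussion is not needed for this lemma itself, since we only ever use $P_2\geq 3T$ and $P_2\gtrsim \Delta_E^2$, both of which hold unconditionally.
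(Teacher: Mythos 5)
Your proposal is correct and follows essentially the same route as the paper's proof: the same case splits at $T\lessgtr m$ and $\Delta_E\lessgtr\sqrt m$, the same rewriting $\frac{\sqrt m}{\alpha}=\frac{\sqrt m\,P_2}{3T}$, and the same two lower bounds $P_2\geq 3T$ and $P_2\gtrsim\Delta_E^2$ (via the high-degree endpoint of a heavy edge). The only difference is a trivial rearrangement in the $T>m$ case, where the paper writes $\frac{\sqrt m P_2}{3T}\geq\frac{m}{\sqrt T}\sqrt{P_2/(3T)}$ and you instead pass through $\sqrt m\geq m/\sqrt T$; these are the same estimate.
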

    \begin{proof}
      First, consider the $\frac{m\Delta_E}{T}$ term on the right.  This is
      trivially bounded by the left if $\Delta_E \leq \sqrt{m}$.  Otherwise,
      note that because $\Delta_E$ triangles share a common edge, the
      corresponding vertices have degree at least $\Delta_E + 1$ so $P_2 \geq 2
      \binom{\Delta_E + 1}{2} \geq \Delta_E^2$.  Hence
      \[
        \frac{\sqrt{m}}{\alpha} = \frac{\sqrt{m}P_2}{3T} \geq
        \frac{\sqrt{m}\Delta_E^2}{3T} \geq \frac{m \Delta_E}{3T}
      \]
      from $\Delta_E > \sqrt{m}$, so $\frac{m\Delta_E}{T}$ is bounded by the LHS.

      Now, consider bounding the $m/\sqrt{T}$ term.  If $T \geq m$, then
      $P_2 \geq 3T \geq 3m$, so
      \[
        \frac{\sqrt{m}}{\alpha} = \frac{\sqrt{m}P_2}{3T}  \geq \frac{m}{\sqrt{T}} \sqrt{\frac{P_2}{3T}} \geq \frac{m}{\sqrt{T}}.
      \]
      On the other hand, for $T \leq m$, $m/\sqrt{T} \leq \frac{m
        \sqrt{m}}{T}$.

        Hence both terms of the RHS are bounded by the LHS, so the sum is
        bounded by twice the LHS.
      \end{proof}
      The result of~\cite{JSP13} for triangle counting is also implies by
      the~\cite{PT12} bound.  In~\cite{JSP13}, $\frac{m}{\eps^2\sqrt{T}}$
      space is used to learn $\alpha$ to $\pm \eps$; this gives a sample
      complexity of
      \[
        \frac{m}{\sqrt{T}}\cdot \left(\frac{P_2}{T}\right)^2
      \]
      for learning a constant multiplicative approximation to $T$.
      \begin{lemma}
        Consider any graph with $m$ edges, $T$ triangles, $P_2$ length-2
        paths, transitivity coefficient $\alpha = 3T/P_2$, and at most $\Delta_E$
        triangles sharing a common edge.  Then
        \[
          \frac{m}{\sqrt{T}}\cdot \left(\frac{P_2}{T}\right)^2 \gtrsim m\left(\frac{\Delta_E}{T} + \frac{1}{\sqrt{T}}\right)
        \]
      \end{lemma}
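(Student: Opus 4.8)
The plan is to mirror the preceding lemma almost verbatim: divide both sides by $m$, so that it suffices to show that the single quantity $\frac{1}{\sqrt T}\left(\frac{P_2}{T}\right)^2$ separately dominates each of the two right-hand-side terms $\frac{\Delta_E}{T}$ and $\frac{1}{\sqrt T}$ up to constants. Summing the two resulting inequalities then loses only a factor of $2$, which is absorbed by the $\gtrsim$.

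For the $\frac{1}{\sqrt T}$ term I would use the elementary fact that every triangle contributes three distinct length-$2$ paths (a wedge is completed by at most one triangle, so the $3T$ wedges obtained this way are pairwise distinct), giving $P_2 \geq 3T$ and hence $\left(\frac{P_2}{T}\right)^2 \geq 9 \geq 1$. This immediately yields $\frac{1}{\sqrt T}\left(\frac{P_2}{T}\right)^2 \geq \frac{1}{\sqrt T}$.

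For the $\frac{\Delta_E}{T}$ term I would split on the size of $\Delta_E$. If $\Delta_E \leq \sqrt T$, then $\frac{\Delta_E}{T} \leq \frac{1}{\sqrt T}$, which is already bounded by the previous paragraph. If $\Delta_E > \sqrt T$, I would reuse the bound $P_2 \geq 2\binom{\Delta_E+1}{2} \geq \Delta_E^2$ established in the preceding lemma (the two endpoints of an edge shared by $\Delta_E$ triangles have degree at least $\Delta_E+1$, and wedges centered at distinct vertices are distinct). Combining $P_2 \geq \Delta_E^2$ with $P_2 \geq 3T$ gives $P_2^2 \geq 3\Delta_E^2 T \geq 3\Delta_E T^{3/2}$, using $\Delta_E > \sqrt T$ in the last step, so $\frac{1}{\sqrt T}\left(\frac{P_2}{T}\right)^2 = \frac{P_2^2}{T^{5/2}} \geq \frac{3\Delta_E}{T}$, as desired.

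I do not expect a genuine obstacle here: the only thing to watch is the bookkeeping of the powers of $T$ in the case $\Delta_E > \sqrt T$ — this is exactly where the two lower bounds $P_2 \geq \Delta_E^2$ and $P_2 \geq 3T$ each get used once — and making sure that in both wedge-counting arguments the wedges being counted are genuinely distinct, which is already the case in the preceding lemma.
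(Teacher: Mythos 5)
Your proof is correct and uses the same two ingredients as the paper's proof: $P_2 \geq 3T$ and the wedge bound $P_2 \geq 2\binom{\Delta_E+1}{2} \geq \Delta_E^2$ from the preceding lemma. The only cosmetic difference is that you handle the $\frac{\Delta_E}{T}$ term with a case split on $\Delta_E$ versus $\sqrt{T}$, whereas the paper avoids the split by noting $\left(\frac{P_2}{T}\right)^2 \geq \left(\frac{P_2}{T}\right)^{1/2}$ (since $P_2/T \geq 1$) and then bounding $\frac{m\sqrt{P_2}}{T} \geq \frac{m\Delta_E}{T}$ directly; both routes are valid.
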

      \begin{proof}
        The $\frac{m}{\sqrt{T}}$ term on the right is trivial, since $T
        \lesssim P_2$.  For the other term, similarly to the previous proof,
        we have
        \[
          \frac{m}{\sqrt{T}}\cdot \left(\frac{P_2}{T}\right)^2 \gtrsim \frac{m}{\sqrt{T}}\cdot \left(\frac{P_2}{T}\right)^{.5}  = \frac{m\sqrt{P_2}}{T} \geq \frac{m \Delta_E}{T}.
        \]
      \end{proof}

      \end{document}